%% file: main-full-1.tex
		\newcommand\FINAL[1]{} 

\providecommand\apxparameter{appendix=inline}

\providecommand\FINAL[1]{#1}
\newcommand\PASFINAL[1]{#1}
\FINAL{\renewcommand\PASFINAL[1]{}}

\newcommand\todoinlineD[1]{\FINAL{}}

\newcommand\ICALPSHORTER[1]{}
\newcommand\ICALPDETAILED[1]{}
\newcommand\ICALPSHORTERJOHANNE[1]{}

 \newcommand\NNCOUNT[1]{\NN_{\mathrm{Count}(#1)}}

\renewcommand\P{\operatorname{P}}
\newcommand\NP{\operatorname{NP}}
\newcommand\coNP{\operatorname{coNP}}
\newcommand\PP{\operatorname{PP}}

\newcommand\SAT{\operatorname{SAT}}

\newcommand\RP{\mathbb{R}_{\ge 0}}

\documentclass[11pt,a4paper]{article}
\pdfoutput=1

\usepackage{versions}
\usepackage{adjustbox}
 
\excludeversion{Skipconnections}

\newcommand\Netnotation[4]{\mathcal{N} \mathcal{N}^{#1}_{#2,#3,#4}}

\newcommand\lead{\textrm{lead}}
\newcommand\shift{\textrm{shift}}

\newcommand\NNlead{\NN_{\lead}}
\newcommand\NNshift{\NN_{\shift}}

\newcommand\test{\textrm{test}}

 \newcommand\NNtest{\NN_{\mathrm{test}}}

\newcommand\ellit{\mathcal{E}}

\newcommand\NNT[1]{\NN_{T(#1)}}

\newcommand{\myrect}[5]{%
    \draw (#1,#2) rectangle (#1+#3,#2+#4);
    \node at (#1 + #3/2, #2 + #4/2) {\small #5};
}

\newcommand{\myrectrotate}[5]{%
    \draw (#1,#2) rectangle (#1+#3,#2+#4);
    \node[rotate=90] at (#1 + #3/2, #2 + #4/2) {\small #5};
}

\usepackage[T1]{fontenc}
\usepackage[utf8]{inputenc}
\usepackage[UKenglish]{babel}
\usepackage{lmodern}
\usepackage{microtype}
\usepackage[margin=2.5cm]{geometry}
\usepackage{amsmath,amssymb,amsthm}
\usepackage{graphicx}
\usepackage{xcolor}

\usepackage{url}
\PassOptionsToPackage{dvipsnames}{xcolor}

\usepackage[\apxparameter,bibliography=common]{apxproof}

\theoremstyle{plain}
\newtheoremrep{theorem}{Theorem}[section]
\newtheoremrep{lemma}[theorem]{Lemma}
\newtheoremrep{proposition}[theorem]{Proposition}
\newtheoremrep{corollary}[theorem]{Corollary}
\newtheoremrep{claim}[theorem]{Claim}
\newtheoremrep{mytheorem}[theorem]{Theorem}

\theoremstyle{definition}
\newtheorem{definition}[theorem]{Definition}
\newtheorem{remark}[theorem]{Remark}
\theoremstyle{plain}

\ifdefined\proofsketch\else
  \newenvironment{proofsketch}{\begin{proof}[Proof sketch]}{\end{proof}}
\fi

\usepackage{amscd}
\usepackage{url}
\usepackage{multicol}

\usepackage{amsmath}
\usepackage{amsthm}
\usepackage{graphicx}
\usepackage{algpseudocode}
\FINAL{\usepackage[colorinlistoftodos]{todonotes}}
\PASFINAL{\newcommand\todo[1]{}}

\newcommand\R{\mathbb{R}}

\newcommand\onethreefouradic{{\mathbb{\mathbb{DD}}_{1,3}}}
\newcommand\onethreefouradicn[1]{\onethreefouradic^{(#1)}}
\newcommand\DDp{\onethreefouradic}

\usepackage{tikz,xcolor,hyperref}
\usetikzlibrary{arrows.meta}
\usepackage{pgfplots}
\pgfplotsset{compat=1.18}

\usetikzlibrary{arrows.meta}

\definecolor{lime}{HTML}{A6CE39}
\DeclareRobustCommand{\orcidicon}{
	\begin{tikzpicture}
	\draw[lime, fill=lime] (0,0) 
	circle [radius=0.16] 
	node[white] {{\fontfamily{qag}\selectfont \tiny ID}};
	\draw[white, fill=white] (-0.0625,0.095) 
	circle [radius=0.007];
	\end{tikzpicture}
	\hspace{-2mm}
}

\foreach \x in {A, ..., Z}{%
	\expandafter\xdef\csname orcid\x\endcsname{\noexpand\href{https://orcid.org/\csname orcidauthor\x\endcsname}{\noexpand\orcidicon}}
}

\title{Fractal Gadgets for Neural Networks: The Complexity of the Narrow Regime}

\author{
Olivier Bournez\thanks{Institut Polytechnique de Paris, Ecole Polytechnique, LIX, 91128 Palaiseau Cedex, France. \texttt{olivier.bournez@lix.polytechnique.fr}}
\and Johanne Cohen\thanks{LISN, CNRS, Universit\'e Paris-Saclay, 91190 Gif-sur-Yvette, France. \texttt{johanne.cohen@lisn.fr}}
\and Laura Cohen\thanks{CentraleSupelec, Universit\'e Paris-Saclay, 91190 Gif-sur-Yvette, France. \texttt{laura.cohen@student-cs.fr}}
\and Adrian Wurm\thanks{Computer Science Institute, BTU Cottbus-Senftenberg, Platz der Deutschen Einheit 1, D-03046 Cottbus, Germany. \texttt{wurm@b-tu.de}}
}

\newcommand\NN{\mathcal{N}}

\newcommand\ReLU{\operatorname{ReLU}}

\begin{document}

\maketitle

\begin{abstract}
We study the verification problem for deep narrow ReLU neural networks:
given a network $\NN\colon[0,1]\to\mathbb{R}$ of bounded width computing
an iterated piecewise-affine map, does there exist an input
$x\in[0,1]$ satisfying a prescribed output constraint? Classical
hardness results for ReLU verification rely on reductions from
\textsc{3-Sat} with one neuron per Boolean variable, and therefore
do not address the narrow regime in which width is fixed to a small
constant. At the other end of the scale, a width-$1$ ReLU network
computes a monotone piecewise-affine function with polynomially many
pieces, so its verification is solvable in polynomial time; this
remains true for iterated width-$1$ networks under a unary time
bound. The precise threshold at which verification becomes
intractable had remained open.

Our main result is that \emph{verification of ReLU neural networks is
$\NP$-complete at width~$4$ for arbitrary inputs in $[0,1]$}. When
inputs are restricted to a natural discrete encoding set,
$\NP$-completeness already holds at width~$3$,  a baseline building on earlier techniques. Combined with polynomial-time decidability at
width~$1$, this pins down the complexity on the encoding set up to the
single open case of width~$2$; for arbitrary inputs in~$[0,1]$, the
classification is tight up to widths~$2$ and~$3$.

The technical core of the paper is a \emph{fractal preprocessing
gadget} of independent interest: a width-$2$ ReLU subnetwork that,
applied to a real input $x\in[0,1]$, vanishes precisely when $x$ lies
near a finite Cantor-like subset of $[0,1]$. The gadget effectively
reduces verification of a continuous function on $[0,1]$ to
verification on a discrete set of $2^n$ points, without any ad hoc
real encoding of discrete data. This ``discretisation trick'' is the
missing ingredient for width-bounded hardness reductions: combined
with width-$2$ digit-extraction and clause-evaluation subnetworks
sharing a single two-register state, it yields, together with one register for the fractal error, a width-$4$ polynomial-time reduction from \textsc{3-Sat} on all of~$[0,1]$.

The same construction transfers a spectrum of further hardness
results to the width-$3$ setting on the encoding set: the
universal-zero problem is $\coNP$-complete, counting zeros is
$\#\P$-complete, a majority variant is $\PP$-complete, and
approximating the minimum output within any constant gap inherited
from \textsc{Max-3Sat} is $\NP$-hard. The $\NP$-, $\coNP$-, and
inapproximability results lift to arbitrary real inputs in~$[0,1]$ at
width~$4$; the lifting of counting and majority, and whether the
lifting to $[0,1]$ can be achieved at width~$3$, remain open. We
relate the latter to a continuity obstruction for width-$3$ $\ReLU$
gadgets discriminating between valid and invalid digit-$4$ prefixes.
\medskip
\noindent\textbf{Keywords:}Analysis and Verification of AI, Models of computation, Complexity theory, Formal neural networks, Neural network verification

\end{abstract}

%
%

\section{Introduction}

The computational capabilities of neural networks---their expressive
power, their links to classical models of computation, and the effect
of architectural constraints such as depth, width, and activation
function---have long motivated the study of  \emph{verification problems}, which ask whether a
given network satisfies a prescribed input--output property. Such
questions are especially important whenever neural networks are
deployed in safety-sensitive settings. Concretely, the decision problem
we study is the \emph{existential} decision problem: given a network and a linear
output specification (an affine constraint on the output, such as $\NN(x)=0$ or $\NN(x)\ge\tau$), does there exist an input meeting it?  This corresponds to the
\emph{reachability} question at the core of neural-network
verification~\cite{salzer2021reachability}, and is  the basic primitive on which
richer property checks are built. We also consider its universal
counterpart, asking whether \emph{every} input satisfies the
specification, which turns out to be $\coNP$-complete in the same regime.

How hard verification is depends sharply on the architecture. The \emph{width} of a network is the number of neurons per hidden layer, and its \emph{depth} the number of hidden layers (see Section~\ref{sec:def}). Most existing
hardness results address networks of arbitrary width and bounded depth, in
the spirit of classical universality theorems on compact
domains~\cite{cybenko1989approximation,hornik1991approximation}. In this
paper we study the complementary regime of \emph{deep, narrow} networks:
bounded width and unbounded depth. Our goal is to pin down the minimum width at
which verification becomes intractable for the standard activation
function, $\ReLU$. Our results are not intended to provide practical verification
methods for large-scale models; rather, we show that extremely narrow
$\ReLU$ networks already capture the full hardness of program-like
verification, and we identify the precise width at which this transition
occurs.

\paragraph*{Deep, narrow networks.}
Although depth has long been conceptually natural, it became
practically effective only with modern
architectures~\cite{he2016deep}, which popularised \emph{skip
connections} and led to the interpretation of very deep networks as
discrete-time dynamical systems arising from numerical
discretisations of ODEs (PolyNet, FractalNet, ResNet variants,
Neural
ODEs~\cite{chen2018neural,he2016deep,kidger2022neural,larsson2016fractalnet,lu2018beyond}).
These models lie in the deep, narrow regime.

\paragraph*{Why the narrow regime.}
Isolating width is natural for two reasons. A network of bounded width
and unbounded depth is an iterated map on a constant-dimensional
state: a discrete dynamical system in which width plays the role of a
space-like resource that can be studied separately from depth.
Constant-width computation has a distinguished history in this
respect: width-$5$ branching programs already capture
$\mathrm{NC}^1$~\cite{barrington1989bounded}, and reachability for
low-dimensional piecewise-affine maps is a classical source of
hardness and of stubborn open problems~\cite{KCG94}. Our results
locate $\ReLU$ verification within this landscape: the network is the
program, its width is the number of real registers kept alive, and the
question is how few registers already make verification intractable.
This regime is also a reasonable abstraction of very deep architectures such as ResNet variants, FractalNet, and Neural ODEs, which iterate blocks of fixed width within a stage.  Thus,  the width at which verification becomes hard has a concrete
architectural reading rather than being an artefact of an exotic model.
Pinning down that threshold is thus at once a clean complexity question,
a width hierarchy for $\ReLU$ verification, and a statement about
realistic networks.

\paragraph*{Width as a complexity parameter.}
In the absence of width restrictions, verification and reachability for $\ReLU$
networks are $\NP$-complete via reductions from $\SAT$ that use one
neuron per Boolean
variable~\cite{salzer2021reachability,wurm1}; hardness becomes even
stronger ($\exists\R$-hardness) for semi-algebraic or analytic
activations~\cite{wurm1,wurm2}. All these
reductions use width freely, and therefore say nothing about the narrow
regime. The complementary question, what happens at small constant
width, has received less attention. On the expressiveness side,
universal approximation holds already at constant widths barely above
the input
dimension~\cite{hanin2017approximating,lu2017expressive,park2021minimum,hong2024bridging,kidger2020universal,zhang2024deep},
and the classical depth separations are themselves witnessed by
width-$2$ sawtooth networks~\cite{telgarsky2016benefits}; these works
address approximation, not verification. Our results show that the
hardness of verification sets in essentially at the width threshold
where expressiveness begins: restricting the width of the analysed
network, a natural structural hope for verification
tools~\cite{katz2017reluplex,salzer2021reachability}, cannot by itself
yield tractability.

For $\ReLU$ networks, the two extremes of the width scale are easy.
At width~$1$, verification is decidable in polynomial time
(Proposition~\ref{prop:width-one}): a width-$1$ $\ReLU$ network
computes a monotone one-dimensional piecewise-affine function with
polynomially many pieces; the iterated variant with a unary time
bound is also tractable (Theorem~\ref{thdimone}). At the other end, \cite[Theorem~23]{MFCS2025} states that the
transition map of a Turing machine, on configurations encoded over a
dense discrete subset, can be implemented at width~$3$.

This leaves a narrow but genuine gap, which concerns the inputs. On
the encoding set, the techniques of~\cite{MFCS2025} point to
hardness at width~$3$, which we establish as a baseline
(Theorem~\ref{thm:np-complete-encoded-zero} and
Remark~\ref{rem:baseline}). But verification quantifies over \emph{all} real inputs, whereas the
construction of~\cite{MFCS2025} works only on exact encodings, and no
continuous map sends $[0,1]$ onto the encoding
set~\cite[Remark~9]{MFCS2025}. Classical $\SAT$-style reductions for
neural networks break down in this regime, since they rely on encoding one
Boolean variable per neuron---a luxury that is no longer available when
width is held constant. The deeper difficulty is not combinatorial but
geometric: the input to a width-$3$ network is a single real
$x\in[0,1]$, whereas a satisfying assignment for a $\textsc{3-Sat}$
instance with $n$ Boolean variables and $m$ clauses is an element of
the discrete set $\{0,1\}^n$. There is no obvious mechanism for
converting a continuous verification problem into a discrete one
without spending
extra width on the conversion itself.

\paragraph*{Main contribution.}
We narrow this gap. Our main result (Theorem~\ref{thm:lifted} and
Corollary~\ref{cor:np-conp-lifted}) is that \emph{verification of
$\ReLU$ neural networks on arbitrary inputs in~$[0,1]$ is
$\NP$-complete at width~$4$}. On the encoding set, the baseline holds
at width~$3$ (Theorem~\ref{thm:np-complete-encoded-zero}). Combined
with the polynomial-time algorithm at width~$1$, this leaves open only
width~$2$ on the encoding set, and widths~$2$ and~$3$ on~$[0,1]$.

The technical core of our result is a \emph{fractal preprocessing gadget}
of independent interest: a continuous piecewise-affine self-map
$F_\varepsilon\colon[0,1]\to[0,1]$, realisable by a width-$2$ $\ReLU$
network, whose $n$-fold iterate vanishes exactly on small neighbourhoods of
a finite Cantor-like subset of $[0,1]$ of size $2^n$. The gadget
effectively reduces verification of a continuous function on $[0,1]$ to
verification on a discrete set of $2^n$ points, without increasing the
width. This ``discretisation'' step is, we argue, the missing ingredient for
width-bounded hardness reductions: once the gadget  is available, $\textsc{3-Sat}$
can be encoded by width-$2$ digit-extraction and clause-evaluation
subnetworks that share a single two-register state
$(x,p)\in[0,1]^2$ throughout the computation; with one more register for the fractal error, this gives a width-$4$ polynomial-time reduction. By contrast, naive SAT-style reductions require
width that grows linearly with the number of Boolean variables.

We stress where the difficulty lies. It is not that a discontinuous
membership test must be approximated by a continuous piecewise-affine
function: this is always possible, at any width. The constraints are
quantitative and algorithmic. The gadget must vanish \emph{exactly} on
the $2^n$ encoding points; away from them it must keep a gap that
survives the clause-evaluation layers applied afterwards, whose
composed Lipschitz constant is exponential, of order $4^{\Theta(mn)}$;
and the gadget, together with its calibrated slope parameter
$\varepsilon$, must be produced in polynomial time at width $2$.
Beating this exponential error amplification with polynomial-size
weights is what the self-similar structure buys
(Section~\ref{sec:fractal}).

\paragraph*{A spectrum of width-$3$ hardness results.}
Once the fractal gadget is available, a broad range of further complexity
results follows for width-$3$ $\ReLU$ verification. On inputs restricted
to the discrete Cantor-like encoding set, the universal-zero problem is
$\coNP$-complete, counting zeros is $\#\P$-complete, deciding whether more
than half of the encodings are zeros is $\PP$-complete, and approximating
the minimum output within any constant additive gap inherited from
$\textsc{Max-3Sat}$ is $\NP$-hard~\cite{hastad2001some}. For each of
the single-quantifier statements ($\NP$, $\coNP$, and inapproximability),
the fractal gadget transports hardness from the discrete input set to
arbitrary real inputs in~$[0,1]$, at the cost of one additional unit of
width (width~$4$): this extra register carries the fractal error
through the clause-evaluation stages, so that it can be combined with
the clause count at the end. The corresponding lifting of counting and
majority hardness is an open problem, as is the question of whether
the lifting to $[0,1]$ can be carried out at width $3$.

\paragraph*{Technical highlights.}
Boolean assignments $\alpha\in\{0,1\}^n$ are encoded as reals
$x(\alpha)=\sum_{i=1}^{n} a_i\,4^{-i}$ with digits $a_i\in\{1,3\}$.
The digit set $\{1,3\}$ provides a uniform separation between
admissible leading-digit intervals in~$[0,1]$, which is what
allows continuous width-$2$ subnetworks to extract, shift, and test
base-$4$ digits robustly. The associated discrete set
$\{x(\alpha)\}_\alpha$ is precisely the set around which
$F_\varepsilon^{[n]}$ vanishes (up to an $\varepsilon$-neighbourhood);
a width-$3$ truncation gadget (Lemma~\ref{lem:trunck}) then
restores well-formed digits from near-misses, so that every real input
producing zero error is effectively projected onto a valid encoding.
With an additional width-$2$ conditional-flag-clearing gadget on top, each
clause can be evaluated within the width budget, yielding a
width-$3$ reduction of depth polynomial in $n+m$.

\paragraph*{Organisation.}
Section~\ref{sec:def} fixes the neural network model and the
$\{1,3\}$-quadradic encoding. Section~\ref{sec:auxiliary} develops
the width-$2$ digit gadgets. Section~\ref{sec:restricted}
establishes $\NP$-completeness on the discrete encoding set and
derives the companion $\coNP$, $\#\P$, $\PP$, and inapproximability
results. Section~\ref{sec:filtering} develops the fractal
preprocessing gadget and lifts the single-quantifier hardness
results to inputs in~$[0,1]$.

\section{Definition}
\label{sec:def}

\subsection{Neural network model}
\noindent\textbf{Network representation.}
Throughout the paper, a \emph{rational affine map}
$\mathbb{R}^n\to\mathbb{R}^m$ is a map $x\mapsto A x+b$ with
$A\in\mathbb{Q}^{m\times n}$ and $b\in\mathbb{Q}^m$; we encode it by the
list of its rational entries, each given as a pair of integers (numerator
and denominator) written in binary. A \emph{$\ReLU$ neuron} is a map
$x\mapsto\ReLU(\ell(x))$, where $\ell\colon\mathbb{R}^n\to\mathbb{R}$ is
a rational affine map and $\ReLU(t)=\max(t,0)$.

A \emph{feedforward $\ReLU$ network} of width $k$ and depth $L$ is a
composition
\(
  A_{\mathrm{out}}\;\circ\;R_L\;\circ\;\cdots\;\circ\;R_1\;\circ\;A_{\mathrm{in}},
\)
where $A_{\mathrm{in}}$ and $A_{\mathrm{out}}$ are rational affine maps
and each hidden layer $R_\ell\colon\mathbb{R}^k\to\mathbb{R}^k$ is a
vector of $k$ $\ReLU$ neurons sharing a common input. The \emph{encoding
size} of such a network is the total bit-length of the rational entries
of $A_{\mathrm{in}}$, $A_{\mathrm{out}}$, and the neurons of
$R_1,\dots,R_L$; throughout, networks are assumed to have encoding size
polynomial in the other input parameters. The following notation,
adapted from~\cite{kidger2020universal}, fixes input, output, and width
simultaneously.

\begin{definition}[$\Netnotation{\ReLU}{n}{m}{k}$]
For $n,m,k\in\mathbb{N}$, we denote by $\Netnotation{\ReLU}{n}{m}{k}$ the
class of functions $\mathbb{R}^n\to\mathbb{R}^m$ computed by feedforward
$\ReLU$ networks with $n$ input neurons, $m$ output neurons, and any
number of hidden layers, each of width exactly $k$.
\end{definition}

We treat unused outputs implicitly: a network in
$\Netnotation{\ReLU}{n}{m}{k}$ computes a function
$\mathbb{R}^n\to\mathbb{R}^{m'}$ for any $m'\le m$ by composing with the
coordinate projection $\mathbb{R}^m\to\mathbb{R}^{m'}$. Composition
behaves well with respect to width: if
$\NN_1\in\Netnotation{\ReLU}{n}{m}{w_1}$ and
$\NN_2\in\Netnotation{\ReLU}{m}{k}{w_2}$, then
\(
  \NN_2\circ\NN_1\;\in\;\Netnotation{\ReLU}{n}{k}{\max(w_1,w_2)},
\)
and we write $\NN^{[t]}$ for the $t$-fold iterated composition of a
network $\NN\in\Netnotation{\ReLU}{n}{n}{k}$ with itself.

\subsection{Encoding discrete information into a real input: the digit-gap principle}
\label{sec:13quadradics}

A recurring challenge in width-bounded hardness reductions is to
\emph{encode discrete information into a single real input} and then
recover it using a $\ReLU$ network of fixed width. The reductions of this
paper all take this shape: a discrete instance (a Boolean assignment,
a clause index, a digit sequence) is mapped to a real $x\in[0,1]$, and
a width-$3$ $\ReLU$ network decodes $x$ digit by digit in some fixed
positional system.

Which positional systems are compatible with this scheme? Any digit extractor realised by a $\ReLU$ network is, by construction, a continuous function of the real
$x\in[0,1]$. This immediately rules out the naive binary encoding
$x=\sum_{i\ge 1}b_i\,2^{-i}$ with $b_i\in\{0,1\}$: the admissible
values of the leading bit partition $[0,1]$ into $[0,\tfrac12)$ and $[\tfrac12,1]$, and any
continuous map that returns $0$ on the first interval and $1$ on the
second would have to be discontinuous at $x=\tfrac12$. The obstruction
is general: no encoding in which admissible values of the leading
digit cover $[0,1]$ up to a measure-zero set admits a continuous digit
extractor at all, let alone one realisable by a fixed-width $\ReLU$ network.

The way around this obstruction, following~\cite{MFCS2025}, is to
force a \emph{gap} between the admissible leading-digit intervals. We
work in base $4$ with digits restricted to $\{1,3\}$: the admissible
leading-digit intervals are then $[\tfrac14,\tfrac12)$ and
$[\tfrac34,1]$, separated by gaps of width $\tfrac14$. Inside a gap,
any continuous function may take constant values, so width-$2$
$\ReLU$ subnetworks can robustly extract, shift, and conditionally
test base-$4$ digits on this encoding (Lemmas~\ref{lem:lead},
\ref{lem:shift}, \ref{lem:test}). Moreover, among the symmetric pairs
$\{a,b\}\subseteq\{0,1,2,3\}$ with $|a-b|\ge 2$ that would provide the
required gap, $\{1,3\}$ is the only choice that leaves the digit $0$ free
to serve as a canonical ``absent digit'' marker---a role played by the
truncation gadget of Lemma~\ref{lem:trunck}, which restores any
leading $0$ with a $\{1,3\}$-digit.

\begin{definition}[{$\{1,3\}$-quadradics} \cite{MFCS2025}] \label{def:quadratic}
We denote by $\onethreefouradic$ the set of dyadic numbers whose
base-$4$ expansion consists of finitely many $1$'s and $3$'s followed
by an infinite sequence of zeros, 
and by $\overline{\onethreefouradic}$
the set of real numbers whose base-$4$ expansion is infinite and uses
only the digits $1$ and $3$:
\(
  \onethreefouradic
  \;=\;\Bigl\{\,d=\sum_{i=-m}^{p}a_i\,4^{-i}\;:\;
  a_i\in\{1,3\},\ m,p\in\mathbb{N}\,\Bigr\}\)
  and \(
  \overline{\onethreefouradic}
  \;=\;\Bigl\{\,d=\sum_{i=-m}^{\infty}a_i\,4^{-i}\;:\;
  a_i\in\{1,3\},\ m\in\mathbb{N}\,\Bigr\}.
\)
Elements of
$\onethreefouradic$ are called
\emph{$\{1,3\}$-quadradics}. For $n\ge 1$, we write
$\onethreefouradicn{n}$ for the set of length-$n$ $\{1,3\}$-quadradics
in $[0,1]$,
\(
  \onethreefouradicn{n}
  \;=\;\Bigl\{\,\sum_{i=1}^{n} a_i\,4^{-i}\;:\;a_i\in\{1,3\}\,\Bigr\},
\)
and set $\onethreefouradicn{0}:=\{0\}$.
\end{definition}

A Boolean assignment $\alpha\in\{0,1\}^n$ will be encoded as
\[
  x(\alpha)\;:=\;\sum_{i=1}^{n} a_i\,4^{-i}\;\in\;\onethreefouradicn{n},
  \qquad
  a_i=\begin{cases}
    1 & \text{if } \alpha(y_i)=1,\\
    3 & \text{if } \alpha(y_i)=0.
  \end{cases}
\]
The map $\alpha\mapsto x(\alpha)$ is a bijection from $\{0,1\}^n$ onto
$\onethreefouradicn{n}$, the set of cardinality $2^n$, and its image is
uniformly separated: any two distinct elements of
$\onethreefouradicn{n}$ differ by at least $2\cdot 4^{-n}$. The same
encoding is also used to carry auxiliary information---clause-satisfaction
flags, accumulated errors---in a second register $p\in[0,1]$
throughout the reduction.

This encoding, however, comes with an inherent geometric complication
that shapes the whole paper. The set
$\overline{\onethreefouradic}\cap[0,1]$ is a compact, totally
disconnected subset of $[0,1]$ with empty interior, in bijection with
$\{1,3\}^{\mathbb{N}}$. Its length-$n$ truncations
$\onethreefouradicn{n}$ are the $n$-th level vertices of a self-similar
Cantor-like construction. A satisfying assignment for
$\textsc{3-Sat}$ thus corresponds to a point in a fractal-like set of
size $2^n$ inside $[0,1]$, while the verification network receives as
input an arbitrary real $x\in[0,1]$, not necessarily an   element of
$\onethreefouradicn{n}$. The central technical problem of the paper is
therefore not the evaluation of clauses---which, once a valid encoding
is in hand, is straightforward---but the preprocessing task of
\emph{certifying} that $x$ lies, or lies near, $\onethreefouradicn{n}$,
using only width-$3$ $\ReLU$ dynamics. Section~\ref{sec:fractal}
solves this problem via a continuous piecewise-affine map whose
$n$-fold iterate has a zero set approximating $\onethreefouradicn{n}$
and which is itself realisable at width-$2$.

\section{Auxiliary Results}
\label{sec:auxiliary} 

The statements in this section are all grounded in a single
compilation principle: any function given as a nested min/max
expression over rational affine forms, which includes every gadget of
this paper, can be realised as a $\ReLU$ network of minimal width,
efficiently and effectively (Lemma~\ref{lem:CPAM}). Neither
ingredient is new in isolation: lattice (min/max) representations of
piecewise-linear functions are
classical~\cite{tarela1999region,ovchinnikov2002max,hanin2017approximating}, and their
realisation by $\ReLU$ networks of unrestricted width is
standard~\cite{arora2016understanding}. The point of
Lemma~\ref{lem:CPAM} is the width discipline: a nested expression
compiles into width~$2$, one layer per level, while keeping the input
register alive; we claim no such compilation for arbitrary continuous
piecewise-affine maps processed in their natural piece order. Every gadget
below is then obtained by \emph{programming with $\ReLU$ networks}:
each one carries out a small arithmetic or logical task on one or two real
registers, and is realised as a feedforward $\ReLU$ network by
combining Lemma~\ref{lem:CPAM} with standard width-preserving
composition.

Concretely, width is the number of real registers a subnetwork keeps
alive at once and depth is the number of operations it performs, with
all control flow folded into the affine coefficients rather than into
branching. Every gadget of this section respects the same budget: its
state fits in at most two registers, with one extra neuron per layer
for working arithmetic, so each lies in
$\Netnotation{\ReLU}{n}{m}{k}$ for $n,m\le 2$ and width $k\le 3$. The
full reduction of Section~\ref{sec:restricted} is then obtained by
sequentially composing these gadgets on a shared two-register state.

\begin{lemmarep}[Compilation of nested min/max expressions]  \label{lem:CPAM}
Let $f\colon\RP\to\mathbb R$ be given by a nested expression
$f=\tau_1(\ell_1,\tau_2(\ell_2,\dots,\tau_{k-1}(\ell_{k-1},\ell_k)\dots))$,
where each $\ell_i$ is a rational affine form in $x$ and each
$\tau_i\in\{\min,\max\}$. Then $f$ can be computed by a network in $\Netnotation{\ReLU}{1}{1}{2}$.
Furthermore, we can assume it preserves the value of its input: seen as an element of  $\Netnotation{\ReLU}{1}{2}{2}$, it computes $x \mapsto (x,f(x))$ for $x \ge 0$. 
Furthermore, the network can be produced effectively and efficiently: given the $\ell_i$ by their rational coefficients, the resulting network has depth at most $k$ and is computable in time polynomial in the bit-size of this description; in particular, its rational weights have bit-size polynomial in that of the expression.\end{lemmarep}

\begin{proof}
Write $v_i$ for the value of the sub-expression rooted at level $i$,
so that $v_k=\ell_k(x)$ and $v_i=\tau_i(\ell_i(x),v_{i+1})$; thus
$f=v_1$. We construct, by downward induction on $i$, a width-$2$
network whose layers maintain the invariant: after the layer
processing level $i$, the first neuron carries $x$ (as $x=\ReLU(x)$
for $x\ge 0$) and the second neuron carries a value $r_i$ such that
\[
  v_i \;=\; m_i(x) + \sigma_i\, r_i
\]
for some rational affine form $m_i$ and sign $\sigma_i\in\{-1,+1\}$,
both known at construction time.

Base case: the input affine map initialises the pair $(x,0)$, and
$v_k=\ell_k(x)$ is realised with $m_k=\ell_k$, $r_k=0$,
$\sigma_k=+1$ (no layer needed). Inductive step: suppose the
invariant holds at level $i+1$. If $\tau_i=\min$, use the identity
$\min(a,b)=a-\ReLU(a-b)$:
\[
  r_i \;:=\; \ReLU\bigl(\ell_i(x)-v_{i+1}\bigr)
        \;=\; \ReLU\bigl(\ell_i(x)-m_{i+1}(x)-\sigma_{i+1}r_{i+1}\bigr),
\]
the $\ReLU$ of a rational affine form in $(x,r_{i+1})$, hence one
width-$2$ layer, the other neuron refreshing $x$; and
$v_i=\ell_i(x)-r_i$, i.e.\ $m_i=\ell_i$, $\sigma_i=-1$. If
$\tau_i=\max$, use $\max(a,b)=a+\ReLU(b-a)$ symmetrically:
$r_i:=\ReLU\bigl(m_{i+1}(x)+\sigma_{i+1}r_{i+1}-\ell_i(x)\bigr)$ and
$v_i=\ell_i(x)+r_i$. After processing level $1$, the output affine
map returns $(x,\,m_1(x)+\sigma_1 r_1)=(x,f(x))$. The construction
uses one layer per level, hence depth at most $k$, and every weight
is a rational combination of the coefficients of the $\ell_i$, of
bit-size polynomial in the input.
\end{proof}
We will need the following gadgets:

\begin{lemmarep}[Leading-digit extraction]\label{lem:lead}
There exists a ReLU neural network \(\NNlead:\mathbb{R}\to\mathbb{R}^2\) in \(\Netnotation{\ReLU}{1}{2}{2}\)
such that, for every \(x\in(0,1)\) written in base \(4\) as
$x=\sum_{i=1}^{\infty} a_i\,4^{-i},$ we have $\NNlead(x)=(x,\lead(x)),$
where \(\lead(x)=a_1\) whenever \(a_1\in\{1,3\}.\) Moreover,  if $a_1=0$, then  \(\lead(x)=1\).
\end{lemmarep}

\begin{proof}
The construction relies on a piecewise affine function that distinguishes the two admissible leading base-\(4\) digits. More precisely, if \(x=\sum_{i=1}^{\infty} a_i4^{-i}\), then \(x\in\left[\frac14,\frac12\right)\) when \(a_1=1\), and \(x\in\left[\frac34,1\right)\) when \(a_1=3\). Hence, it is enough to define a piecewise affine map \(\lead:[0,1]\to\mathbb{R}\) satisfying \(\lead(x)=1\) on \(\left[\frac14,\frac12\right)\) and \(\lead(x)=3\) on \(\left[\frac34,1\right)\). For instance, one may take
\[
\lead(x)=
\begin{cases}
1, & \text{if } 0\le x\le \frac12,\\
8x-3, & \text{if } \frac12\le x\le \frac34,\\
3, & \text{if } \frac34\le x\le 1.
\end{cases}
\]
This function is continuous, piecewise affine, and correctly recovers the leading digit on all encodings with first digit in \(\{1,3\}\). Such a function is realized by a width-\(2\), depth-\(2\) ReLU network. 
Figure~\ref{fig:lead} depicts the corresponding network.

\input{./fig-lead}

\end{proof}

\begin{lemmarep}[Base-$4$ digit shift]\label{lem:shift}
There exists a ReLU neural network \(\NNshift:\mathbb{R}\to\mathbb{R}^2\) in \(\Netnotation{\ReLU}{1}{2}{2}\) such that, for every \(x\in(0,1)\),
$\NNshift(x)=(x,\shift(x)),$
where
\(\shift:\mathbb{R}\to\mathbb{R}_{\ge 0}\)
satisfies
\(\shift(x)\in[0,1]\)
for every \(x\in(0,1)\), and
$\shift(x)=\sum_{i=2}^{\infty} a_i\,4^{-(i-1)}$ whenever
$x=\sum_{i=1}^{\infty} a_i\,4^{-i}$  and  $a_1\in\{1,3\}.$
\end{lemmarep}
\begin{proof}
It suffices to verify that $\shift(x)=\ReLU(4x-\lead(x))$. Indeed, if $x=\sum_{i=1}^{\infty} a_i\,4^{-i}$
with \(a_1\in\{1,3\}\), then by the definition of \(\lead\) we have
$\lead(x)=a_1$. Hence
$4x-\lead(x)=4\sum_{i=1}^{\infty} a_i\,4^{-i}-a_1 =\sum_{i=1}^{\infty} a_i\,4^{-(i-1)}-a_1=\sum_{i=2}^{\infty} a_i\,4^{-(i-1)}$.
Therefore
$\shift(x)=\sum_{i=2}^{\infty} a_i\,4^{-(i-1)}$,
as claimed.

Moreover, for \(x\in(0,1)\), the quantity \(\shift(x)\) belongs to \([0,1]\).
The neural network outputs both \(x\) and \(\shift(x)\), so it realizes the map
$x\longmapsto (x,\shift(x)). $
Since both the identity and the map \(x\mapsto 4x-\lead(x)\) are realized by a ReLU network of architecture
\(
\Netnotation{\ReLU}{1}{2}{2},
\)
this proves the claim (See Figure~\ref{fig:shift}).

\input{./fig-shift}
\end{proof}

\begin{lemmarep}[Conditional flag] \label{lem:test}
Let \(p\) be a value in \([0,4^{-1}] \cup [3\cdot 4^{-1},1]\), and let \(e\in\{0,1\}\). There exists a ReLU neural network \(\NNtest:\mathbb{R}^2\to\mathbb{R}\) in \(\Netnotation{\ReLU}{2}{1}{2}\)
that computes a function \(\test:\{0,1\}\times\mathbb{R}\to\mathbb{R}\)
satisfying
\[
\test(e,p)=
\begin{cases}
p-\frac34, & \text{if } e=1 \text{ and } p\ge \frac34,\\
p, & \text{otherwise.}
\end{cases}
\]
\end{lemmarep}

\begin{proof}
Let
$
f(e,p)=2e\,4^{-1}+p\,4^{-2},
$
and define \(g:\mathbb{R}\to\mathbb{R}\) by
\[
g(x)=
\begin{cases}
16x, & x \le \frac{1}{16},\\
\frac{8}{7}-\frac{16}{7}x, & x \le \frac{1}{2},\\
16x-8, & \frac{1}{2}\le x \le \frac{33}{64},\\
\frac{35}{8}-8x, & \frac{33}{64}\le x \le \frac{35}{64},\\
16x-\frac{35}{4}, & x \ge \frac{35}{64}.
\end{cases}
\]
We show that the composition \(g(f(e,p))\) matches the desired specification.
We distinguish cases (see Figure~\ref{fig:test}).

\begin{enumerate}
\item If \(e=0\), then $0\le f(e,p)\le f(0,1)<4^{-2}.$
Hence $g(f(0,p))=16f(0,p)=16\,(p\cdot 4^{-2})=p.$

\item If \(e=1\) and \(p\in[0,4^{-1}]\), then
$
f(e,p)=\frac12+p\cdot 4^{-2},
$
so
$
\frac12 \le f(e,p)\le \frac12+4^{-3}=\frac{33}{64}.
$
Therefore,
$
g(f(1,p))=16f(1,p)-8
=16\Bigl(\frac12+p\cdot 4^{-2}\Bigr)-8
=p.
$

\item If \(e=1\) and \(p\in[3\cdot 4^{-1},1]\), then
$
\frac12+3\cdot 4^{-3}\le f(e,p)\le \frac12+4^{-2},
$
that is,
$
\frac{35}{64}\le f(e,p)\le \frac{9}{16}.
$
Hence
$
g(f(1,p))
=
16\Bigl(\frac12+p\cdot 4^{-2}\Bigr)-\frac{35}{4}
=
(8+p)-\frac{35}{4}
=
p-\frac34.
$
\end{enumerate}

Thus \(g(f(e,p))=\test(e,p)\) in all cases.

The function $g$ admits the nested form
$g=\max\bigl(16x-\tfrac{35}{4},\,\min\bigl(\tfrac{35}{8}-8x,\,\max\bigl(16x-8,\,\min\bigl(\tfrac{8}{7}-\tfrac{16}{7}x,\,16x\bigr)\bigr)\bigr)\bigr)$,
as a case inspection over its pieces confirms, so Lemma~\ref{lem:CPAM} applies. Therefore, there exists a ReLU neural network
\(
\NN_g\in\Netnotation{\ReLU}{1}{1}{2}
\)
computing \(x\mapsto g(x)\) for \(x\ge 0\).

Now define \(\NNtest\) as follows: on input \((e,p)\), it first computes \(f(e,p)\), which can be implemented with a single ReLU neuron, and then applies \(\NN_g\) to obtain \(g(f(e,p))\). Hence
$\NNtest\in\Netnotation{\ReLU}{2}{1}{2},$
and \(\NNtest\) computes \(\test\), as claimed.

\input{./fig-test}

\end{proof}

\begin{lemmarep}[Counting the $3$'s in a base-$4$ expansion]
\label{lem:counting}
Fix $n\in\mathbb{N}$. There exists a neural network
$\NNCOUNT{n}\in\Netnotation{\ReLU}{2}{1}{3}$ such that, for every
$x=\sum_{i=1}^{n}a_i\,4^{-i}\in[0,1]$ with $a_i\in\{0,1,3\}$ for all
$1\le i\le n$, and every $p\in\mathbb{R}$,
\(
  \NNCOUNT{n}(x,p) \;=\; p + c,
\)
where $c=|\{i\in\{1,\dots,n\}:a_i=3\}|$.
\end{lemmarep}

\begin{proof}
Let $\eta := 4^{-(n+1)}$ 
and define the piecewise-affine
function $s\colon [0, 1] \to \mathbb{R}$ by
\[
  s(x) \;:=\;
  \begin{cases}
    0                               & x \in [0,\,\tfrac14 - \eta],\\
    \dfrac{x - (\tfrac14 - \eta)}{\eta} & x \in [\tfrac14 - \eta,\,\tfrac14],\\
    1                               & x \in [\tfrac14,\,\tfrac12 - \eta],\\
    1 + \dfrac{2(x - (\tfrac12 - \eta))}{\tfrac14 + \eta}
                                   & x \in [\tfrac12 - \eta,\,\tfrac34],\\
    3                               & x \in [\tfrac34,\,1].
  \end{cases}
\]
This function is continuous (the definitions match at the breakpoints)
and admits the nested form
$s=\min\bigl(3,\;\max\bigl(1+\tfrac{2(x-(1/2-\eta))}{1/4+\eta},\;
\min\bigl(1,\;\max\bigl(\tfrac{x-(1/4-\eta)}{\eta},\,0\bigr)\bigr)\bigr)\bigr)$,
hence by Lemma~\ref{lem:CPAM} it is realisable by a width-$2$ $\ReLU$
network.

\medskip
The network $\NNCOUNT{n}$ maintains state $(x, c)$ and implements:
\begin{enumerate}
\item $c \gets p$;
\item \textbf{for} $i = 1$ \textbf{to} $n$:
\begin{enumerate}
\item $c \gets c + \tfrac12\bigl(\lead(x) - 1\bigr)$;
\item $x \gets \ReLU\bigl(4x - s(x)\bigr)$;
\end{enumerate}
\item \textbf{return} $c$.
\end{enumerate}

Each step uses width-$2$ subnetworks ($\lead$ or $s$) while carrying the
other state variable on one additional neuron, giving overall width $3$.

\medskip
\noindent\textbf{Correctness.}
We prove by induction on $i$ that, after iteration $i$,
\[
  c = p + |\{j \le i : a_j = 3\}|
  \quad\text{and}\quad
  x = \sum_{j = i+1}^{n} a_j\,4^{-(j-i)}.
\]
The base case $i = 0$ is immediate.

For the inductive step, let $x_{i-1} = \sum_{j=i}^n a_j\,4^{-(j-i+1)}$ by
the inductive hypothesis. Since every $a_j \in \{0, 1, 3\}$, $x_{i-1}$
lies in exactly one of three regions, determined by $a_i$:
\begin{itemize}
\item If $a_i = 0$: $x_{i-1} \in [0,\,\sum_{j=i+1}^n 3\cdot 4^{-(j-i+1)}]
      \subseteq [0,\,\tfrac14 - 4^{-n}]$. Since $4^{-n} > \eta$,
      $x_{i-1} \in [0,\,\tfrac14 - \eta]$, so $s(x_{i-1}) = 0$ and
      $\lead(x_{i-1}) = 1$. Hence $c$ is unchanged (increment $0$), and
      $x \gets \ReLU(4x_{i-1}) = 4x_{i-1} = \sum_{j=i+1}^n a_j\,4^{-(j-i)}$.
\item If $a_i = 1$: $x_{i-1} \in [\tfrac14,\,\tfrac14 + \sum_{j=i+1}^n
      3\cdot 4^{-(j-i+1)}] \subseteq [\tfrac14,\,\tfrac12 - 4^{-n}]
      \subseteq [\tfrac14,\,\tfrac12 - \eta]$, so $s(x_{i-1}) = 1$
      and $\lead(x_{i-1}) = 1$. Hence $c$ is unchanged, and
      $x \gets \ReLU(4x_{i-1} - 1) = 4x_{i-1} - 1 = \sum_{j=i+1}^n
      a_j\,4^{-(j-i)}$.
\item If $a_i = 3$: $x_{i-1} \in [\tfrac34,\,\tfrac34 + \sum_{j=i+1}^n
      3\cdot 4^{-(j-i+1)}] \subseteq [\tfrac34,\,1]$, so $s(x_{i-1}) = 3$
      and $\lead(x_{i-1}) = 3$. Hence $c$ is incremented by
      $\tfrac12(3 - 1) = 1$, and $x \gets \ReLU(4x_{i-1} - 3)
      = 4x_{i-1} - 3 = \sum_{j=i+1}^n a_j\,4^{-(j-i)}$.
\end{itemize}
In every case the inductive invariant is preserved. After $n$ iterations,
$c = p + |\{j : a_j = 3\}|$, as claimed.
\end{proof}

\begin{lemmarep}[Truncation and $0$-digit filtering]\label{lem:filter}
\label{lem:trunck}
Let $n \in \mathbb{N}$. There exists a network
$\NNT{n}\in\Netnotation{\ReLU}{2}{2}{3}$ such that, for every $x \in (0,1)$ with
base-$4$ expansion $x = \sum_{i=1}^{\infty} a_i\,4^{-i}$, the output
$\NNT{n}(x,0) = (y, x_n)$ satisfies the following properties:
\begin{enumerate}
\item If $a_i \in \{1,3\}$ for all $1 \le i \le n$, then
      $y = \sum_{i=1}^n a_i\,4^{-i}$.
\item If $a_i \in \{0,1,3\}$ for all $1 \le i \le n$, then
      $y = \sum_{i=1}^n a'_i\,4^{-i}$ for some digits $a'_i \in \{1,3\}$.
      In particular, $a'_i \neq 0$ for all $1 \le i \le n$.
\end{enumerate}
\end{lemmarep}

\begin{proof}
We exploit refined properties of the functions $\lead$ and $\shift$
(Lemmas~\ref{lem:lead} and~\ref{lem:shift}; see also
Figures~\ref{fig:lead} and~~\ref{fig:shift}).
Recall that $\shift(x) = \ReLU(4x - \lead(x))$.
A direct inspection shows that if $a_1 = 0$ then $\lead(x) = 1$ and
$\shift(x) = 0$.

The network implements the following procedure, taking $(x, p)$ as input:
\begin{enumerate}
\item $y \leftarrow p$.
\item For $i = 1$ to $n$:
\begin{enumerate}
  \item $a \leftarrow \lead(x)$;
  \item $x \leftarrow \shift(x)$;
  \item $y \leftarrow \ReLU(y + a\,4^{-i})$.
\end{enumerate}
\item \textbf{return} $(y, x)$.
\end{enumerate}
The network has width $3$: it is assembled from width-$2$ subnetworks, and
at every step of the computation only one extra variable---either $y$ or
$x$, but never both simultaneously---needs to be stored alongside the
current subnetwork outputs.

\medskip
\noindent\textbf{Inductive invariant.}
For $i \ge 0$, let $(x_i, y_i)$ denote the values of $x$ and $y$ after
$i$ iterations of the loop, starting from $(x_0, p_0) = (x, 0)$.
We claim that, as long as $a_1,\dots,a_i \in \{1,3\}$,
\begin{equation}\label{eq:trunck:10}
x_i = \sum_{j=i+1}^{\infty} a_j\,4^{-(j-i)}
  \qquad\text{and}\qquad
  y_i = \sum_{j=1}^{i} a_j\,4^{-j}.
\end{equation}
The base case $i=0$ is immediate.  For the inductive step, assuming
Equation\eqref{eq:trunck:10} holds at step $i$ with $a_{i+1}\in\{1,3\}$, the
definitions of $\lead$ and $\shift$ give $\lead(x_i)=a_{i+1}$ and
$x_{i+1}=\sum_{j=i+2}^{\infty}a_j\,4^{-(j-i-1)}$, and then
$y_{i+1}=\ReLU(y_i+a_{i+1}\,4^{-(i+1)})=y_i+a_{i+1}\,4^{-(i+1)}$,
since $y_i + a_{i+1}\,4^{-(i+1)} > 0$.  This establishes the induction.

\medskip
\noindent\textbf{Point~(1).}
If $a_i \in \{1,3\}$ for all $i$, then Equation~\eqref{eq:trunck:10} applies at
every step.  After $n$ iterations we obtain
$x_n=\sum_{j=n+1}^{\infty}a_j\,4^{-(j-n)}$ and
$y_n = \sum_{j=1}^{n} a_j\,4^{-j}$, which is the claimed value of $y$.

\medskip
\noindent\textbf{Point~(2).}
Suppose $a_i \in \{0,1,3\}$ for all $i$.
If no digit equals $0$, Point~(1) applies directly.
Otherwise let $\ell \le n$ be the smallest index with $a_\ell = 0$.
Since $a_1,\dots,a_{\ell-1} \in \{1,3\}$, Equation~\eqref{eq:trunck:10} gives
\[
x_{\ell-1} = \sum_{j=\ell}^{\infty} a_j\,4^{-(j-\ell+1)},
  \qquad
  y_{\ell-1} = \sum_{j=1}^{\ell-1} a_j\,4^{-j}.
\]
Because $a_\ell = 0$, we have $\lead(x_{\ell-1}) = 1$ and
$\shift(x_{\ell-1}) = 0$, so $x_\ell = 0$.  For all subsequent iterations
$\ell' \ge \ell$, we have $x_{\ell'} = 0$ and hence $\lead(x_{\ell'}) = 1$
and $\shift(x_{\ell'}) = 0$.  The loop therefore accumulates
$a'_{\ell'} = 1$ for every $\ell' \ge \ell$, and $a'_j = a_j \in \{1,3\}$
for $j < \ell$.  Consequently, $y = \sum_{i=1}^{n} a'_i\,4^{-i}$ with
all digits $a'_i \in \{1,3\}$, which proves Point~(2).

\end{proof}

\section{Results for inputs restricted to $\onethreefouradic$}
\label{sec:restricted}

Throughout this section, all weights and biases are rational numbers
with polynomial encoding size; network evaluation is performed in
exact rational arithmetic. We show that, in width~$3$, the decision
problem ``does there exist an input on which the network outputs
zero?'' is already $\NP$-hard when inputs are restricted to
$\{1,3\}$-quadradic encodings of length $n$; the unrestricted version
with inputs in all of $[0,1]$ is reduced to this in
Section~\ref{sec:preprocessing} using the fractal preprocessing
gadget.

\begin{definition}[\textsc{Verification}${}_{\{1,3\}}$]
\label{def:restricted-nn-zero}
\hfill
\begin{itemize}
\item \emph{Input:} a $\ReLU$ neural network
  $\NN\colon[0,1]\to\mathbb{R}$ and an integer $n\in\mathbb{N}$.
\item \emph{Question:} does there exist
  $x=\sum_{i=1}^{n}a_i\,4^{-i}\in[0,1]$
  with $a_i\in\{1,3\}$ such that $\NN(x)=0$?
\end{itemize}
\end{definition}

\begin{theorem}\label{thm:np-complete-encoded-zero}\label{th:NP:restricted}
\textsc{Verification}${}_{\{1,3\}}$ is $\NP$-complete, already for
$\ReLU$ neural networks of width $3$.
\end{theorem}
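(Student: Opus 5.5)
Membership in $\NP$ is immediate. A certificate is the digit string $(a_1,\dots,a_n)\in\{1,3\}^n$, so it has length $n$. We may assume $n$ is bounded by the encoding size of the instance, or else read $n$ in unary as part of the input; this is the usual convention, and I would state it explicitly. The corresponding rational $x=\sum a_i4^{-i}$ has $O(n)$ bits. Evaluating $\NN(x)$ in exact rational arithmetic layer by layer then takes polynomial time: each layer is an affine map followed by $\ReLU$, the depth is bounded by the encoding size, and the bit-size of intermediate values grows at most additively per layer by the weight sizes. Hence it stays polynomial.

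For hardness I would reduce from \textsc{3-Sat}. Take a formula with variables $y_1,\dots,y_n$ and clauses $C_1,\dots,C_m$. The network receives $x=x(\alpha)$ and maintains a state consisting of the input copy $x$ and an accumulator $p$. For each clause $C_j$ the network runs a subroutine that scans the $n$ digits of $x$. It reads them with $\lead$ and $\shift$ (Lemmas~\ref{lem:lead},~\ref{lem:shift}) on a working copy $x'$ that is initialised to $x$. The key design choice is the clause flag $q_j$. It starts at $0$ and becomes $1$ once some literal of $C_j$ is satisfied. At digit $i$, if $y_i$ occurs positively in $C_j$, set $q\gets \max(q,\,(3-\lead(x'))/2)$. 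If $y_i$ occurs negatively, set $q\gets\max(q,\,(\lead(x')-1)/2)$. Otherwise $q$ is unchanged. Both updates are compiled via Lemma~\ref{lem:CPAM}. Since $\lead\in\{1,3\}$ on valid encodings, $q$ stays in $\{0,1\}$. After the scan we add $1-q_j$ to $p$. The output is $\NN(x)=p=\#\{j: C_j \text{ unsatisfied by }\alpha\}$, which is $0$ iff $\alpha$ satisfies the formula. Correctness on $\onethreefouradicn{n}$ follows by the same digit-by-digit induction as in Lemma~\ref{lem:trunck}. This is because the digit pattern of $x$ encodes $\alpha$ exactly. Size is $O(nm)$ gadgets with constant-size rational weights, so the reduction is polynomial.

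The main obstacle is the width budget. Naively the state contains $x$, $x'$, $p$, $q$, and the gadget needs a scratch neuron, so four or five values are live at once. My plan is to pack information so that only two registers plus one working neuron are live. First, fold $q_j$ and $p$ into a single register, in the spirit of Lemma~\ref{lem:test}. Since only the count matters, encode $p$ as a base-$4$ quantity with the current clause's flag in the leading position, and use $\NNtest$ to conditionally clear or set it. Second, avoid keeping both $x$ and $x'$. One option is to rotate digits: after reading the leading digit, append it at the end, $x\gets \shift(x)+\lead(x)4^{-n}$, which is affine in the gadget outputs. After $n$ steps $x$ is restored exactly, so a single input register suffices across all $m$ clauses. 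Each step is then one width-$2$ gadget on $x$, with $p$ carried by the third neuron, combined with a width-$2$ flag update on $p$ that uses $\lead(x)$ while $x$ is carried. The delicate point is scheduling these so that never more than three neurons are needed in any layer. In particular, the digit $\lead(x)$ must be consumed into $p$ before the shift discards it, for example by computing the flag update from $x$ directly as a min/max expression in $(x,p)$. I would verify this interleaving carefully. Finally, the output affine map reads off $p$, adjusted by constants so that the output is $0$ exactly when all flags were raised.
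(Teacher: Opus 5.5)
Your final design is the paper's construction. It uses one input register, restored after each clause by the rotation $x\gets\shift(x)+4^{-n}\lead(x)$, and one flag register written in base~$4$ and cleared by $\NNtest$. A third neuron carries whichever register is idle. However, the proposal stops exactly where the content of ``already for width~$3$'' lies. The only construction you actually argue correct (with $x,x',p,q$ and a separate counter) does not fit in width~$3$, as you note yourself. For the packed version, both the flag semantics (``clear or set'', ``adjusted by constants'') and the layer schedule are left as things to check. Two points close this gap, and both are in the paper.

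First, $\NNtest$ can only clear, and Lemma~\ref{lem:test} applies only when $p\in[0,\tfrac14]\cup[\tfrac34,1]$. So each clause must open with the affine step $p\gets\tfrac34+\tfrac14p$, which sets a fresh flag at $\tfrac34$ and compresses the previous record into $[0,\tfrac14]$. This maintains the precondition inductively and makes clearing one-shot: after one subtraction $p\le\tfrac14$, and later satisfied literals leave it untouched. Hence $p_\ell=\tfrac14(c_\ell+p_{\ell-1})$ with $c_\ell\in\{0,3\}$, and the output is $0$ iff every clause is satisfied, with no constant adjustment.

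Second, the schedule needs no bivariate min/max compilation, which Lemma~\ref{lem:CPAM}, being univariate, would not supply anyway. Within each digit block, proceed as follows:
\begin{enumerate}
\item compute $(x,\lead(x))$ with the input-preserving width-$2$ gadget while $p$ rides on the third neuron;
\item form $e_j=\ellit_{\ell,j}\bigl(\ReLU(\tfrac{3-\lead(x)}{2})\bigr)$;
\item apply $\NNtest$ to $(e_j,p)$ while $x$ rides on the third neuron;
\item \emph{recompute} $\lead(x)$ from the untouched $x$, and emit $\ReLU(4x-\lead(x))$ and $\lead(x)$ on two neurons with $p$ on the third.
\end{enumerate}
The rotated $x$ is an affine combination of these. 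Every layer then has at most three neurons. Your concern that the digit is discarded before use also disappears, since $x$ is never modified during the flag update.

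Your $\NP$-membership argument is fine, and more careful than the paper's about the size of $n$ and exact rational evaluation.
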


\begin{proofsketch}
Hardness is by a
polynomial-time reduction from \textsc{3-Sat}: the network
$\NN_\varphi$ of Section~\ref{sec:reduction} composes $m$ clause
subnetworks of width~$2$ that scan the $n$ base-$4$ digits of the
input, evaluate clause $C_\ell$ via the $\test$ gadget of
Lemma~\ref{lem:test}, and leave a flag in $\{0,3/4\}$; carrying the
input register costs one extra neuron, hence width~$3$. Correctness
is Claims~\ref{claim:x} and~\ref{claim:assignment}.
\end{proofsketch}

\begin{remark}[Relation to~\cite{MFCS2025}]\label{rem:baseline}
The digit gadgets above build on~\cite{MFCS2025}, whose Theorem~23
states that the transition map of a Turing machine, acting on
configurations already encoded as pairs of $\{1,3\}$-quadradics, can
be implemented at width~$3$.  This suggests a second route to
Theorem~\ref{thm:np-complete-encoded-zero}: supply $x(\alpha)$ as an
affine initial configuration of a polynomial-time verifier with
$\varphi$ hard-wired in its finite control, and unroll the transition
map for its step bound. In this sense we regard the discrete statement
as a baseline. Making it work at width~$3$ is, however, where the care
lies. In a verification instance, the input register must remain
available while the digits are processed, a requirement that does not
arise in~\cite{MFCS2025}, where the transition map acts on the
configuration itself. Here it is guaranteed by the compilation
discipline of Lemma~\ref{lem:CPAM}, which preserves the input at no
extra width, whereas generic compilations of nested min/max
expressions give width~$4$~\cite[Remark~22]{MFCS2025}. The direct
reduction also makes the output explicit: by
Claim~\ref{claim:assignment} it records which clauses are satisfied,
and the companion results of Section~\ref{sec:variants} are read off
this construction. Its explicit Lipschitz bounds are, moreover, what
the lifting to~$[0,1]$ in Section~\ref{sec:unrestricted} requires.
\end{remark}

\paragraph*{Membership in $\NP$} 
A polynomial-size certificate is the sequence of digits
$(a_1,\dots,a_n)\in\{1,3\}^n$. From this sequence, one computes
$x=\sum a_i\,4^{-i}$ and evaluates $\NN(x)$ in polynomial time. $\NP$-hardness is
established by a polynomial-time reduction from \textsc{3-Sat} below.

\subsection{Reduction from \textsc{3-Sat}} \label{sec:reduction}
Let $\varphi = \bigwedge_{\ell=1}^m C_\ell$ be a 3-CNF formula over
variables $y_1,\dots,y_n$, with clauses
$C_\ell = L_{\ell,1} \vee L_{\ell,2} \vee L_{\ell,3}$, where each
$L_{\ell,j}$ is either some $y_i$ or its negation $\neg y_i$. We construct,
in polynomial time, a width-$3$ $\ReLU$ network
$\NN_\varphi\colon [0,1]\to\mathbb{R}$ with
\(
\varphi\text{ satisfiable}
\iff
\exists\,x\in\onethreefouradicn{n}\colon\NN_\varphi(x)=0.
\)

\paragraph*{Encoding.} An assignment $\alpha\in\{0,1\}^n$ is
encoded as
$x(\alpha) = \sum_{i=1}^n a_i\,4^{-i}\in\onethreefouradicn{n}$, with
$a_i=1$ if $\alpha(y_i)=1$ and $a_i=3$ otherwise. This defines a bijection
$\{0,1\}^n \leftrightarrow \onethreefouradicn{n}$, whose inverse is given by
$\alpha(y_i)=\ReLU((3-a_i)/2)$.

\paragraph*{Overall structure.}
The network operates on a two-register state $(x,p)\in[0,1]^2$,
with $p$ a cumulative clause flag initialised to $0$. The $m$
clauses are processed sequentially: for each $C_\ell$, a
subnetwork $\NN_{\varphi,\ell}$ maps $(x,p_{\ell-1})\mapsto(x,p_\ell)$.
The full reduction is the composition
\[
  \NN_\varphi \;=\; \pi_2\;\circ\;\NN_{\varphi,m}\;\circ\;\cdots\;\circ\;\NN_{\varphi,1}
  \quad\text{applied to }(x,0),
\]
where $\pi_2$ projects onto the second coordinate. The register $p_\ell$ is
designed so that $p_\ell=0$ if and only if $C_1,\dots,C_\ell$ are all satisfied by
the assignment encoded in~$x$; hence $\NN_\varphi(x(\alpha))=0$  if and only if
$\alpha$ satisfies $\varphi$. See 
Section~\ref{sec:correctness} for the formal proof.

\paragraph*{The clause subnetwork $\NN_{\varphi,\ell}$.}
Given $(x,p_{\ell-1})$, the subnetwork $\NN_{\varphi,\ell}$ proceeds in
two phases.
\emph{Phase~1} opens a fresh clause flag by
\(
  p_\ell \gets \min(3/4 + p_{\ell-1}/4,\,1),
\)
which sets $p_\ell$ to the ``unsatisfied'' value $3/4$ while
compressing the previous flag $p_{\ell-1}$ into $[0,\tfrac14]$ so
that it cannot interfere with subsequent updates.
\emph{Phase~2} scans the $n$ digits of~$x$ by composing  $n$ inner
blocks $\NN_{\varphi,\ell,1},\dots,\NN_{\varphi,\ell,n}$ with identical
layer structure (only $\ellit_{\ell,j}$ varies with~$j$): on input
$(x,p_\ell)$, each $\NN_{\varphi,\ell,j}$ performs the following operations:
\begin{enumerate}
\item extract and booleanise the leading digit:
      $a_j\gets\lead(x)$, $\;b_j\gets\ReLU(\tfrac{3-a_j}{2})$;
\item clear the flag when the literal $L_{\ell,j}$ associated  with $y_j$ in $C_\ell$ is satisfied:
      $p_\ell\gets\test(\ellit_{\ell,j}(b_j),\,p_\ell)$,
      where
      $\ellit_{\ell,j}(b)=\ReLU(b)$ if $y_j\in C_\ell$,
      $\ReLU(1-b)$ if $\neg y_j\in C_\ell$, and $0$ otherwise;
\item rotate the leading digit to the least significant position:
      $x\gets\shift(x)+4^{-n}a_j$.
\end{enumerate}
Since $\test$ subtracts $3/4$ at most once (Lemma~\ref{lem:test}),
later satisfying digits leave a cleared flag untouched; after the
$n$ inner blocks, the digits of~$x$ have been cyclically rotated
back to their initial configuration, so $\NN_{\varphi,\ell}$ returns
$(x,p_\ell)$ with the input register preserved. See
Figure~\ref{fig:NP:network-4}.

\paragraph*{Example.}
Take $n=2$ and $\varphi=C_1\wedge C_2$ with $C_1=y_1\vee\neg y_2$ and
$C_2=\neg y_1\vee y_2$ (clauses of size two, for readability), and
$\alpha=(1,0)$, so that $x(\alpha)=1\cdot4^{-1}+3\cdot4^{-2}=\tfrac{7}{16}$.
The registers evolve as follows:
\begin{center}\small
\begin{tabular}{lccccc}
block & $a_j$ & $b_j$ & $\ellit_{\ell,j}(b_j)$ & $x$ & $p$\\\hline
start & & & & $7/16$ & $0$\\
$C_1$, phase 1 & & & & $7/16$ & $3/4$\\
$C_1$, $j=1$ & $1$ & $1$ & $1$ & $13/16$ & $0$\\
$C_1$, $j=2$ & $3$ & $0$ & $1$ & $7/16$ & $0$\\
$C_2$, phase 1 & & & & $7/16$ & $3/4$\\
$C_2$, $j=1$ & $1$ & $1$ & $0$ & $13/16$ & $3/4$\\
$C_2$, $j=2$ & $3$ & $0$ & $0$ & $7/16$ & $3/4$
\end{tabular}
\end{center}
The flag of $C_1$ is cleared, that of $C_2$ is not, and
$\NN_\varphi(x(\alpha))=\tfrac34=c_2\,4^{-1}+c_1\,4^{-2}$ with
$c_1=0$, $c_2=3$, as in Claim~\ref{claim:assignment}. For the
satisfying assignment $(1,1)$, i.e.\ $x=\tfrac{5}{16}$, both flags are
cleared and the output is~$0$.

\input{./fig-NP-3}

\paragraph*{Width.} Each inner step uses a width-$2$ subnetwork
($\NNlead$, $\NNshift$ on $x$ alone, or $\NNtest$ on
$(\ellit_{\ell,j}(b_j),\,p_\ell)$) while the other register is carried
by one extra neuron; all remaining updates are affine. The total width is $3$,
the depth is $\mathcal{O}(mn)$, and all coefficients are rational of polynomial
encoding size; hence 
$\NN_\varphi\in\Netnotation{\ReLU}{1}{1}{3}$.

\subsection{Correctness of the reduction}
\label{sec:correctness}

The correctness rests on two invariants: a \emph{structural} one
(Claim~\ref{claim:x}) showing that the input register is preserved by
every clause subnetwork on valid encodings, and a \emph{semantic} one
(Claim~\ref{claim:assignment}) giving an explicit base-$4$ expansion
of $p_\ell$ whose digits record clause satisfaction. Together they
yield
$\NN_\varphi(x(\alpha)) = 0 \iff \alpha \text{ satisfies } \varphi$.

\begin{claimrep}[Preservation of the input register]\label{claim:x}
Let $x \in \onethreefouradicn{n}$. Then for every
$\ell \in \{1,\dots,m\}$ and every $p \in [0,1]$, there exists
$p' \in [0,1]$ such that
\(
\NN_{\varphi,\ell}(x,p) = (x, p').
\)
\end{claimrep}

\begin{proof}
Write $x = \sum_{j=1}^{n} a_j\,4^{-j}$ with $a_j \in \{1,3\}$, and fix a
clause index $\ell$. The clause subnetwork $\NN_{\varphi,\ell}$ applies $n$
inner blocks $\NN_{\varphi,\ell,1},\dots,\NN_{\varphi,\ell,n}$ to the input
register; we show that each block performs a cyclic left-shift of the
base-$4$ digits of $x$.

Consider the first block. By Lemmas~\ref{lem:lead} and~\ref{lem:shift},
applied to $x \in \onethreefouradicn{n}$, we have
$\lead(x) = a_1$ and $\shift(x) = \sum_{j=2}^{n} a_j\,4^{-(j-1)}$.
The subsequent affine update $x \mapsto x + 4^{-n}\,a_1$ therefore produces
\[
  x^{(1)} \;=\; \sum_{j=2}^{n} a_j\,4^{-(j-1)} \;+\; a_1\,4^{-n}
         \;=\; \sum_{i=1}^{n} a_{(i \bmod n) + 1}\,4^{-i},
\]
i.e.\ $x^{(1)}$ has the digits of $x$ cyclically shifted one position to
the left. In particular, $x^{(1)} \in \onethreefouradicn{n}$, so the same
analysis applies to the next block.

By induction, after $k$ blocks the first coordinate is
\[
  x^{(k)} \;=\; \sum_{j=k+1}^{n} a_j\,4^{-(j-k)}
            \;+\; \sum_{j=1}^{k} a_j\,4^{-(n-k+j)}
         \;=\; \sum_{i=1}^{n} a_{((i+k-1) \bmod n)+1}\,4^{-i},
\]
which is the $k$-fold cyclic shift of $x$. Taking $k = n$ recovers the
identity permutation, so $x^{(n)} = x$. Since the second coordinate depends
only on the digits read and on $p$, this yields
$\NN_{\varphi,\ell}(x,p) = (x, p')$ for some $p'$.
\end{proof}

We now turn to the second coordinate. The next claim gives a closed-form expression
for $p_\ell$ and, in particular, characterises when $p_m = 0$.

\begin{claimrep}[Clause-satisfaction readout]\label{claim:assignment}
Let $\alpha \in \{0,1\}^n$ be an assignment and let
$x(\alpha) \in \onethreefouradicn{n}$ be its encoding. Define
\[
  c_\ell \;=\; \begin{cases}
    0 & \text{if } \alpha \text{ satisfies } C_\ell,\\
    3 & \text{otherwise,}
  \end{cases}
  \qquad \ell \in \{1,\dots,m\}.
\]
Then $\NN_\varphi(x(\alpha)) = p_m$, where
\(
  p_m \;=\; \sum_{\ell=1}^{m} c_\ell\,4^{-(m+1-\ell)}.
\)
Equivalently, the base-$4$ expansion of $p_m$ is $0.c_m c_{m-1} \cdots c_1$,
read from most to least significant digit; in particular, $p_m = 0$ if and
only if $\alpha$ satisfies every clause of $\varphi$.
\end{claimrep}

\begin{proof}
Fix a clause $C_\ell$ and suppose that the input to $\NN_{\varphi,\ell}$ is
$(x(\alpha), p_{\ell-1})$ with
$p_{\ell-1} \in [0, \tfrac14]\cup[\tfrac34,1)$ (a property we verify
by induction below; it matches the precondition of
Lemma~\ref{lem:test}).

The flag-initialisation step sets
\(
  p_\ell \gets \min\!\bigl(\tfrac34 + \tfrac14\,p_{\ell-1},\, 1\bigr)
= \tfrac34 + \tfrac14\,p_{\ell-1} \in [\tfrac34, 1),
\)
so the $\min$ is inactive. As the inner loop scans the digits of $x(\alpha)$,
the register $p_\ell$ is updated exclusively through the $\test$ operation.
By construction, $\test$ subtracts $\tfrac34$ from $p_\ell$ precisely when
$\ellit_{\ell,j}(b_j) = 1$, which, by inspection of $\ellit_{\ell,j}$,
happens if and only if variable $y_j$ appears in $C_\ell$ under a polarity
satisfied by $\alpha$. Moreover, since $p_\ell$ drops into $[0, \tfrac14]$
as soon as the first such subtraction occurs, Lemma~\ref{lem:test}
guarantees that no further $\test$ call modifies the register. Hence:
\begin{itemize}
\item if $\alpha$ satisfies $C_\ell$, exactly one subtraction occurs and
      $p_\ell = \tfrac14\,p_{\ell-1} \in [0, \tfrac14]$;
\item if $\alpha$ does not satisfy $C_\ell$, no subtraction occurs and
      $p_\ell = \tfrac34 + \tfrac14\,p_{\ell-1}$.
\end{itemize}
Both cases can be written uniformly as
\begin{equation}\label{eq:p-recurrence}
  p_\ell \;=\; c_\ell\,4^{-1} \;+\; p_{\ell-1}\,4^{-1},
\end{equation}
with $c_\ell \in \{0, 3\}$ as defined in the claim. In either case
$p_\ell \in [0, \tfrac14]\cup[\tfrac34,1)$: it lies in $[0,\tfrac14]$
when $c_\ell = 0$ and in $[\tfrac34,1)$ when $c_\ell = 3$; since
$p_0 = 0$, the induction is well-founded.

Unfolding \eqref{eq:p-recurrence} with $p_0 = 0$ yields
\(
  p_\ell = \sum_{j=1}^{\ell} c_j\,4^{-(\ell+1-j)}
\)
by a straightforward induction: the base case $\ell = 1$ gives
$p_1 = c_1\,4^{-1}$, and the induction step is
\[
  p_\ell
  \;=\; c_\ell\,4^{-1} + \tfrac14 \sum_{j=1}^{\ell-1} c_j\,4^{-(\ell-j)}
  \;=\; \sum_{j=1}^{\ell} c_j\,4^{-(\ell+1-j)}.
\]
Setting $\ell = m$ gives the stated formula for $p_m$. The digits of $p_m$
lie in $\{0, 3\}$, hence form a valid base-$4$ expansion; the digit at
position $4^{-i}$ is $c_{m+1-i}$, so reading from most to least significant
yields $c_m, c_{m-1}, \dots, c_1$. Finally, $p_m = 0$ if and only if every
$c_\ell$ equals $0$, that is, if and only if $\alpha$ satisfies every clause
of $\varphi$.
\end{proof}

\subsection{NP-hardness of \textsc{Verification}${}_{\{1,3\}}$}
\label{sec:np-hardness}

With the invariants from the previous subsection in hand, the two directions
of the reduction follow immediately.

\begin{claimrep}[Completeness]\label{claim:satis}
If $\varphi$ is satisfiable, then there exists
$x \in \onethreefouradicn{n}$ such that $\NN_\varphi(x) = 0$.
\end{claimrep}

\begin{proof}
Let $\alpha$ be a satisfying assignment and let
$x(\alpha) = \sum_{i=1}^{n} a_i\,4^{-i} \in \onethreefouradicn{n}$
be its encoding. By Claim~\ref{claim:assignment},
\(
  \NN_\varphi(x(\alpha)) \;=\; \sum_{\ell=1}^{m} c_\ell\,4^{-(m+1-\ell)},
\)
with $c_\ell = 0$ whenever $\alpha$ satisfies $C_\ell$. Since $\alpha$
satisfies every clause, every $c_\ell$ vanishes, and therefore
$\NN_\varphi(x(\alpha)) = 0$.
\end{proof}

\begin{claimrep}[Soundness]\label{claim:satisconv}
If there exists $x \in \onethreefouradicn{n}$ such that $\NN_\varphi(x) = 0$,
then $\varphi$ is satisfiable.
\end{claimrep}

\begin{proof}
Let $x = \sum_{i=1}^{n} a_i\,4^{-i} \in \onethreefouradicn{n}$ satisfy
$\NN_\varphi(x) = 0$, and define an assignment $\alpha \in \{0,1\}^n$ by
\[
\alpha(y_i) \;=\;\ReLU\Bigl( \frac{3 - a_i}{2}\Bigr) \;=\;
  \begin{cases}
    1 & \text{if } a_i = 1,\\
    0 & \text{if } a_i = 3.
  \end{cases}
\]
By construction of the encoding, $x$ coincides with $x(\alpha)$, so
Claim~\ref{claim:assignment} applies and yields
\[
  0 \;=\; \NN_\varphi(x) \;=\; \NN_\varphi(x(\alpha))
    \;=\; \sum_{\ell=1}^{m} c_\ell\,4^{-(m+1-\ell)}.
\]
Each $c_\ell \in \{0,3\}$ is non-negative, so every $c_\ell$ must vanish;
by the definition of $c_\ell$, this means that $\alpha$ satisfies every
clause of $\varphi$. Hence $\varphi$ is satisfiable.
\end{proof}

Combining the two claims, the map $\varphi\mapsto\NN_\varphi$ is a
polynomial-time many-one reduction from \textsc{3-Sat} to
\textsc{Verification}${}_{\{1,3\}}$, and together with $\NP$-membership
this proves Theorem~\ref{thm:np-complete-encoded-zero}.
 
 \subsection{Variants and consequences}
\label{sec:variants}

The reduction of Section~\ref{sec:reduction} is parsimonious:
composing $\NN_\varphi$ with the counting gadget of
Lemma~\ref{lem:counting} yields a width-$3$ network whose value at
$x(\alpha)$ is exactly the number of clauses left unsatisfied by
$\alpha$. A single construction therefore transfers the 
complexity landscape of \textsc{3-Sat} to width-$3$ $\ReLU$
verification, across decision, counting, and approximation: an
$\NP/\coNP$-complete decision pair (Theorem~\ref{thm:np-conp}), a
$\coNP$-complete uniform-distance problem
(Corollary~\ref{cor:uniform}), a $\#\P$-complete counting problem
(Theorem~\ref{thm:sharp-p}), a $\PP$-complete majority problem
(Theorem~\ref{thm:pp}), and a \textsc{Max-3Sat} inapproximability
statement (Theorem~\ref{thm:inapprox}).  All results are
stated on $\onethreefouradicn{n}$; extensions to $[0,1]$ are given in
Section~\ref{sec:unrestricted}.

\medskip
\noindent\textbf{The counting network.}
Fix a 3-CNF formula $\varphi$ with $m$ clauses over $n$ variables. Define
\[
  \NN_{\varphi,D}(x) \;:=\; \NNCOUNT{m}\bigl(\NN_\varphi(x),\,0\bigr),
\]
where $\NN_\varphi$ is the network from Section~\ref{sec:reduction} and
$\NNCOUNT{m}$ is the counting gadget from Lemma~\ref{lem:counting} applied
to the $m$-digit base-$4$ expansion of its first argument.

\begin{lemmarep}[Master construction]\label{lem:master}
The network $\NN_{\varphi,D}\colon[0,1]\to\mathbb{R}$ lies in
$\Netnotation{\ReLU}{1}{1}{3}$, has depth $\mathcal{O}(mn)$, and is
computable in polynomial time from~$\varphi$. For every
$\alpha\in\{0,1\}^n$ with encoding
$x(\alpha)\in\onethreefouradicn{n}$,
\[
  \NN_{\varphi,D}(x(\alpha))
  \;=\;
  \bigl|\{\ell\in\{1,\dots,m\}: \alpha\text{ does not satisfy } C_\ell\}\bigr|
  \;\in\;\{0,1,\dots,m\}.
\]
\end{lemmarep}

\begin{proof}
By Claim~\ref{claim:assignment}, the base-$4$ expansion of
$\NN_\varphi(x(\alpha))$ has $m$ digits $c_1,\dots,c_m\in\{0,3\}$ with
$c_\ell=3$ iff $\alpha$ does not satisfy $C_\ell$. Since
$\{0,3\}\subset\{0,1,3\}$, Lemma~\ref{lem:counting} applies and counts the
$3$-digits of $\NN_\varphi(x(\alpha))$, which is precisely the number of
unsatisfied clauses. The width and depth bounds follow by concatenating
the two constructions.
\end{proof}

Equivalently, $\NN_{\varphi,D}(x(\alpha)) = m -
\operatorname{val}_\varphi(\alpha)$, where
$\operatorname{val}_\varphi(\alpha)$ is the number of clauses of $\varphi$
satisfied by $\alpha$---the objective function of \textsc{Max-3Sat}.
Decision, counting, majority, and approximation problems concerning
$\NN_{\varphi,D}$ on $\onethreefouradicn{n}$ are therefore in direct
correspondence with the analogous problems on~$\varphi$.

\paragraph*{$\coNP$-complete variants.}
The first pair of corollaries concerns existence and universality of
zeros.

\begin{definition}[\textsc{Universal-Gap}${}_{\{1,3\}}$]\label{def:exist-univ}
 the \emph{input} is
\begin{itemize}
\item \emph{Input:}  a $\ReLU$ neural network
$\NN\colon[0,1]\to\mathbb{R}$ and an integer $n\in\mathbb{N}$.
\item \emph{Question}: is
$\NN(x)\ge 1$ for every $x\in\onethreefouradicn{n}$?
\end{itemize}
\end{definition}

\begin{mytheoremrep}\label{thm:np-conp}
\textsc{Universal-Gap}${}_{\{1,3\}}$ is $\coNP$-complete, already for
$\ReLU$ neural networks of width~$3$.
\end{mytheoremrep}

\begin{proof}
For \textsc{Universal-Gap}${}_{\{1,3\}}$, membership in $\coNP$ is
witnessed by a digit sequence $(a_1,\dots,a_n)\in\{1,3\}^n$ for which
the rational $\NN\bigl(\sum_i a_i 4^{-i}\bigr)$, computable exactly in
polynomial time, is strictly less than~$1$. For $\coNP$-hardness, we
reduce from \textsc{3-Unsat}. By Lemma~\ref{lem:master},
\[
  \varphi\text{ is unsatisfiable}
  \;\Longleftrightarrow\;
  \forall\alpha\in\{0,1\}^n,\ \NN_{\varphi,D}(x(\alpha))\ge 1
  \;\Longleftrightarrow\;
  \forall x\in\onethreefouradicn{n},\ \NN_{\varphi,D}(x)\ge 1,
\]
so $\varphi\mapsto\NN_{\varphi,D}$ is a polynomial-time many-one
reduction from \textsc{3-Unsat} to
\textsc{Universal-Gap}${}_{\{1,3\}}$.
\end{proof}

The integer gap $\{0\}\cup[1,+\infty)$ in the output of $\NN_{\varphi,D}$
on $\onethreefouradicn{n}$ makes a second $\coNP$-hardness result
immediate: deciding whether a network is uniformly close to zero on
$\onethreefouradicn{n}$ is \textsc{Universal-Gap}${}_{\{1,3\}}$ in
disguise.

\begin{definition}[\textsc{Dyadic-Uniform-Zero}]
\hfill
\begin{itemize}
\item \emph{Input:} a $\ReLU$ network $\NN\colon[0,1]\to\mathbb{R}$, an
integer $n$, and a rational threshold $\tau>0$.
\item \emph{Question:} is
$\displaystyle\sup_{x\in\onethreefouradicn{n}}|\NN(x)|\le\tau$?
\end{itemize}
\end{definition}

\begin{corollaryrep}\label{cor:uniform}
\textsc{Dyadic-Uniform-Zero} is $\coNP$-complete, already for $\ReLU$
networks of width $3$ and threshold $\tau=1/2$.
\end{corollaryrep}

\begin{proof}
Membership in $\coNP$: a NO certificate is an
$x\in\onethreefouradicn{n}$ with $|\NN(x)|>\tau$. For hardness, define
$\widetilde{\NN}_\varphi(x):=\ReLU\bigl(1-\NN_{\varphi,D}(x)\bigr)$, which
remains a width-$3$ $\ReLU$ network. On $\onethreefouradicn{n}$,
\[
  \widetilde{\NN}_\varphi(x(\alpha))
  \;=\;
  \begin{cases} 1 & \text{if $\alpha$ satisfies $\varphi$,}\\
                0 & \text{otherwise.} \end{cases}
\]
Hence $\sup_{x\in\onethreefouradicn{n}}\widetilde{\NN}_\varphi(x)
= 0$ if $\varphi$ is unsatisfiable and $= 1$ otherwise, so the instance
$(\widetilde{\NN}_\varphi,n,1/2)$ is a YES instance of
\textsc{Dyadic-Uniform-Zero} iff $\varphi$ is unsatisfiable.
\end{proof}

\paragraph*{Counting and majority.}
Since $\alpha\mapsto x(\alpha)$ is a bijection between $\{0,1\}^n$ and
$\onethreefouradicn{n}$, zeros of $\NN_{\varphi,D}$ on
$\onethreefouradicn{n}$ are in bijection with satisfying assignments of
$\varphi$. This parsimonious correspondence transports hardness from
$\#\textsc{3-Sat}$ and \textsc{Maj-3Sat}.

\begin{definition}[\textsc{Counting-Zeros}${}_{\{1,3\}}$]
\hfill
\begin{itemize}
\item \emph{Input:} a $\ReLU$ network $\NN$ and an integer $n$.
\item \emph{Output:} the cardinality
$\bigl|\{x\in\onethreefouradicn{n}:\NN(x)=0\}\bigr|$.
\end{itemize}
\end{definition}

\begin{mytheoremrep}\label{thm:sharp-p}
\textsc{Counting-Zeros}${}_{\{1,3\}}$ is $\#\P$-complete, already for
$\ReLU$ networks of width~$3$.
\end{mytheoremrep}

\begin{proof}
Membership in $\#\P$: a nondeterministic polynomial-time machine guesses
$(a_1,\dots,a_n)\in\{1,3\}^n$ and accepts iff
$\NN\bigl(\sum_i a_i 4^{-i}\bigr)=0$; its accepting paths are in
bijection with the zeros of $\NN$ on $\onethreefouradicn{n}$.

Hardness: by Lemma~\ref{lem:master}, $\NN_{\varphi,D}(x(\alpha))=0$ iff
$\alpha$ satisfies $\varphi$, and $\alpha\mapsto x(\alpha)$ is a bijection
between $\{0,1\}^n$ and $\onethreefouradicn{n}$. Hence
$\bigl|\{x\in\onethreefouradicn{n}:\NN_{\varphi,D}(x)=0\}\bigr|
= \#\varphi$, the number of satisfying assignments. This is a
parsimonious reduction from $\#\textsc{3-Sat}$, which is $\#\P$-complete.
\end{proof}

\begin{definition}[\textsc{Majority-Zero}${}_{\{1,3\}}$]
\hfill
\begin{itemize}
\item \emph{Input:} a $\ReLU$ network $\NN$ and an integer $n$.
\item \emph{Question:} is
$\bigl|\{x\in\onethreefouradicn{n}:\NN(x)=0\}\bigr|>2^{n-1}$?
\end{itemize}
\end{definition}

\begin{mytheoremrep}\label{thm:pp}
\textsc{Majority-Zero}${}_{\{1,3\}}$ is $\PP$-complete, already for
$\ReLU$ networks of width~$3$.
\end{mytheoremrep}

\begin{proof}
Membership: the nondeterministic machine of the previous proof has
exactly $2^n$ computation paths, one per element of
$\onethreefouradicn{n}$, and accepts on a strict majority of them iff the
instance is a YES instance. Hardness: by the same parsimonious reduction,
$\bigl|\{x\in\onethreefouradicn{n}:\NN_{\varphi,D}(x)=0\}\bigr|=\#\varphi$,
so the instance is YES iff a strict majority of assignments satisfies
$\varphi$, i.e.\ iff $\varphi\in\textsc{Majority-3Sat}$, which is
$\PP$-complete.
\end{proof}

\paragraph*{Inapproximability of the minimum output.}
The master construction also inherits the optimal inapproximability of
\textsc{Max-3Sat} due to Håstad~\cite{hastad2001some}.

\begin{mytheoremrep}[Gap inapproximability]\label{thm:inapprox}
For every constant $\eta>0$, the following promise problem is $\NP$-hard:
given a $\ReLU$ network $\NN\in\Netnotation{\ReLU}{1}{1}{3}$ and an
integer $n$ (together with a parameter $m$ encoded with $\NN$),
distinguish
\begin{itemize}
\item \emph{YES instances:}
$\displaystyle\min_{x\in\onethreefouradicn{n}}\NN(x)=0$;
\item \emph{NO instances:}
$\displaystyle\min_{x\in\onethreefouradicn{n}}\NN(x)\ge
\Bigl(\tfrac{1}{8}-\eta\Bigr)\,m$.
\end{itemize}
In particular, no polynomial-time algorithm approximates
$\min_{x\in\onethreefouradicn{n}}\NN(x)$ within additive error
$(\tfrac{1}{16}-\eta)\,m$, unless $\P=\NP$.
\end{mytheoremrep}

\begin{proof}
Håstad's theorem~\cite{hastad2001some} implies that, for every constant
$\eta>0$, distinguishing satisfiable 3-CNF formulas from those in which
at most a fraction $7/8+\eta$ of the clauses can be simultaneously
satisfied is $\NP$-hard. Apply Lemma~\ref{lem:master}. In the YES case
(satisfiable $\varphi$), some $\alpha$ leaves no clause unsatisfied, so
$\min_{x\in\onethreefouradicn{n}}\NN_{\varphi,D}(x)=0$. In the NO case,
every $\alpha$ leaves at least $(1/8-\eta)\,m$ clauses unsatisfied, so
$\min_{x\in\onethreefouradicn{n}}\NN_{\varphi,D}(x)\ge(1/8-\eta)\,m$.

The additive-approximation statement is a direct consequence: any
polynomial-time algorithm producing $r$ with
$|r-\min_x\NN(x)|<(1/16-\eta)\,m$ would output $r<(1/16-\eta)\,m$ on YES
instances and $r>(1/16-\eta)\,m$ on NO instances, hence decide the gap
problem.
\end{proof}

\section{The fractal gadget: Filtering from $\mathbb{R}$ to $\onethreefouradic$}
\label{sec:filtering} \label{sec:fractal}\label{sec:preprocessing}

The previous section reduces \textsc{3-Sat} to a width-$3$ zero test
\emph{on the discrete encoding set} $\onethreefouradicn{n}$. To
lift this reduction to arbitrary real inputs in $[0,1]$, we need
a continuous $\ReLU$ gadget that recognises membership in
$\onethreefouradicn{n}$: it should vanish, or approximately vanish, on its $2^n$
elements while remaining bounded away from zero elsewhere. The
construction of this section provides such a gadget as the $n$-th
iterate $F_\varepsilon^{[n]}$ of a single width-$2$ piecewise-affine
self-map. The zero set of $F_\varepsilon^{[n]}$ is (approximately)
the $n$-th level of a Cantor-like fractal of~$[0,1]$, and
$F_\varepsilon$ itself is the refinement operator of that
fractal --- whence the name \emph{fractal gadget}. This is more than
terminology. The self-similarity of the encoding set is exactly what
keeps the width constant: a single fixed map $F_\varepsilon$, reused at
every iteration, refines the construction by one level, so the same
width-$2$ gadget certifies membership at level $n$ for all $n$
simultaneously. Recognising a $2^n$-point set with polynomially many
parameters forces massive reuse of the same affine data;
self-similarity is exactly the property that lets a single fixed-size
block, iterated $n$ times, perform this reuse, decoupling the
cardinality $2^n$ of the search space from the two units of width
spent recognising it. The finite-level digit filtering in the
$n$-th stage of this single refinement rule, not a separate ad hoc
construction for each level.

\paragraph*{Towards inputs in $[0,1]$.}
Section~\ref{sec:unrestricted} composes the gadget
$F_\varepsilon^{[n]}$ of Equation~\eqref{eq:Feps} with the filter of
Lemma~\ref{lem:filter} to lift the single-quantifier results above to
inputs in $[0,1]$, at the cost of a robustness parameter
$\varepsilon=2^{-\mathcal{O}(m(m+n))}$ and a preprocessing stage of
depth $\mathcal{O}(n)$.

The gadget is built in two steps. Subsection~\ref{sec:dis}
analyses an idealised discontinuous map $F\colon[0,1]\to[0,1]$
whose $n$-fold iterate vanishes exactly on $\onethreefouradicn{n}$;
this map is not $\ReLU$-realisable (all $\ReLU$-computable functions
are continuous), but it captures the combinatorial content of the gadget
in its cleanest form. Subsection~\ref{sec:ccont} replaces $F$ with a
continuous, width-$2$ $\ReLU$-realisable approximation
$F_\varepsilon$ controlled by a single parameter $\varepsilon>0$.

\subsection{Ideal discontinuous $F$}
\label{sec:dis}

Define $F:[0,1]\to[0,1]$ by Equation~\eqref{eq:F}.

\begin{figure}[h]
\centering
\begin{minipage}{0.42\textwidth}
\begin{equation}\label{eq:F}
F(x)=
\begin{cases}
1,      & 0\le x<\tfrac14,\\
4x-1,   & \tfrac14\le x<\tfrac12,\\
1,      & \tfrac12\le x<\tfrac34,\\
4x-3,   & \tfrac34\le x\le 1.
\end{cases}
\end{equation}
\end{minipage}
\hfill
\begin{minipage}{0.52\textwidth}
\centering
\begin{tikzpicture}[scale=2.5]
  \draw[->] (-0.02,0) -- (1.08,0) node[right] {$x$};
  \draw[->] (0,-0.02) -- (0,1.08) node[above] {$F(x)$};
  \foreach \x/\label in {0/0,0.25/$\tfrac14$,0.5/$\tfrac12$,0.75/$\tfrac34$,1/1}
    \draw (\x,0.01) -- (\x,-0.01) node[below=2pt] {\label};
  \foreach \y/\label in {1/1}
    \draw (0.01,\y) -- (-0.01,\y) node[left=2pt] {\label};
  \draw[dashed,gray] (0,1) -- (1,1);
  \draw[thick]
    (0,1) -- (0.25,1)
    (0.25,0) -- (0.5,1)
    (0.5,1) -- (0.75,1)
    (0.75,0) -- (1,1);
  \foreach \p in {(0,1),(0.25,1),(0.25,0),(0.5,1),(0.75,1),(0.75,0),(1,1)}
    \fill \p circle (0.3pt);
\end{tikzpicture}
\end{minipage}
\end{figure}

\begin{remark}[Important idea: $F$ acts as a fractal gadget]
When applied $n$ times, $F$ vanishes precisely on the $2^n$ elements of
$\onethreefouradicn{n}$ and nowhere else in $[0,1]$; Thus, $0$ is an
``accept'' signal for membership in the length-$n$ prefix of the
Cantor-like encoding set. The zero sets of the iterates
$F^{[0]}\subset F^{[1]}\subset\cdots$ form a nested family whose union
is dense in $\overline{\onethreefouradic}\cap[0,1]$
(Def.~\ref{def:quadratic}); each iterate refines the previous level
of the Cantor structure, replacing every element of
$\onethreefouradicn{n}$ with two elements of
$\onethreefouradicn{n+1}$. Figure~\ref{fig:F-iterates} illustrates this refinement for $F^{[2]}$ and $F^{[3]}$.
This self-similarity justifies the name
\emph{fractal gadget} for $F$ and its continuous approximation
(Proposition~\ref{lem:F-shift}).
\end{remark}
\begin{figure}[h]
\centering
\begin{minipage}{0.48\textwidth}
\centering
\begin{tikzpicture}[scale=2.5]
  \draw[->] (-0.02,0) -- (1.04,0) node[right] {$x$};
  \draw[->] (0,-0.02) -- (0,1.08) node[above] {$F^{[2]}(x)$};
  \draw[dashed,gray] (0,1) -- (1,1);
  \draw[thick]
    (0,1) -- (5/16,1)
    (6/16,1) -- (7/16,1)
    (8/16,1) -- (13/16,1)
    (14/16,1) -- (15/16,1);
  \foreach \a in {5,7,13,15}{
    \draw[thick] (\a/16,0) -- ({(\a+1)/16},1);
  }
  \foreach \a in {5,7,13,15}{
    \fill (\a/16,0) circle (0.25pt);
  }
  \draw (0.01,1) -- (-0.01,1) node[left] {$1$};
  \foreach \a/\lab in {5/{\frac5{16}},7/{\frac7{16}},13/{\frac{13}{16}},15/{\frac{15}{16}}}{
    \draw (\a/16,0.01) -- (\a/16,-0.01)
      node[below,font=\scriptsize] {$\lab$};
  }
\end{tikzpicture}
\end{minipage}
\hfill
\begin{minipage}{0.48\textwidth}
\centering
\begin{tikzpicture}[scale=2.5]
  \draw[->] (-0.02,0) -- (1.04,0) node[right] {$x$};
  \draw[->] (0,-0.02) -- (0,1.08) node[above] {$F^{[3]}(x)$};
  \draw[dashed,gray] (0,1) -- (1,1);
  \draw[thick]
    (0,1) -- (21/64,1)
    (22/64,1) -- (23/64,1)
    (24/64,1) -- (29/64,1)
    (30/64,1) -- (31/64,1)
    (32/64,1) -- (53/64,1)
    (54/64,1) -- (55/64,1)
    (56/64,1) -- (61/64,1)
    (62/64,1) -- (63/64,1);
  \foreach \a in {21,23,29,31,53,55,61,63}{
    \draw[thick] (\a/64,0) -- ({(\a+1)/64},1);
  }
  \foreach \a in {21,23,29,31,53,55,61,63}{
    \fill (\a/64,0) circle (0.25pt);
  }
  \draw (0.01,1) -- (-0.01,1) node[left] {$1$};
  \foreach \x/\lab in {0.25/{\frac14},0.5/{\frac12},0.75/{\frac34},1/1}{
    \draw (\x,0.01) -- (\x,-0.01) node[below,font=\scriptsize] {$\lab$};
  }
\end{tikzpicture}
\end{minipage}
\caption{The iterates $F^{[2]}$ (left) and $F^{[3]}$ (right) of the ideal
map $F$ of Equation~\eqref{eq:F}. The iterate $F^{[n]}$ equals $1$ outside
$2^n$ cylinders of length $4^{-n}$, on each of which it is affine of slope
$4^n$ (Lemma~\ref{lem:affine-cylinder}). Its zeros (dots) are exactly the
$2^n$ elements of $\onethreefouradicn{n}$
(Proposition~\ref{lem:F-shift}): $\tfrac{5}{16},\tfrac{7}{16},
\tfrac{13}{16},\tfrac{15}{16}$ for $n=2$. Passing from $F^{[2]}$ to
$F^{[3]}$, each zero $z$ splits into the two zeros $z+4^{-3}$ and
$z+3\cdot4^{-3}$, for instance $\tfrac{5}{16}$ into $\tfrac{21}{64}$ and
$\tfrac{23}{64}$.}
\label{fig:F-iterates}
\end{figure}

We now go to details: 
The map $F$ has two ascending pieces of slope $4$ and two plateaus of
value $1$, with jump discontinuities at $x=\tfrac14$ and $x=\tfrac34$.
When the leading base-$4$ digit of $x$ lies in $\{1,3\}$, the branch
$4x-a_1$ shifts the remaining digits one position to the left;
otherwise $F$ sends $x$ to the absorbing state $1$.

The two lemmas below make this intuition precise. The first describes the
action of $F^{[t]}$ on prefix cylinders; the second characterises the zero
set of $F^{[n]}$.

\begin{lemmarep}[Digit shift and affine action on cylinders]
\label{lem:affine-cylinder}
Let $F:[0,1]\to[0,1]$ be the map defined by Equation~\eqref{eq:F}, and for $n\ge 1$
let $F^{[n]}$ denote its $n$-fold iterate. Write every $x\in(0,1)$ in
base~$4$ as $x=\sum_{j=1}^{\infty} a_j\,4^{-j}$.
\begin{enumerate}
\item (\emph{Digit shift.}) If $a_1,\dots,a_n\in\{1,3\}$, then
\(
  F^{[n]}(x) \;=\; \sum_{j=n+1}^{\infty} a_j\,4^{-(j-n)}.
\)
\item (\emph{Affine action on cylinders.}) For every $t\ge 0$ and every
$z=\sum_{j=1}^{t} a_j\,4^{-j}\in\onethreefouradicn{t}$ (with the
convention $\onethreefouradicn{0}=\{0\}$ and $F^{[0]}=\mathrm{id}$), the
restriction of $F^{[t]}$ to the cylinder $[z,\,z+4^{-t})$ is affine of slope
$4^t$ and has a zero at $z$:
\(
  F^{[t]}(x) \;=\; 4^t(x-z), \qquad x\in[z,\,z+4^{-t}).
\)
\end{enumerate}
\end{lemmarep}

\begin{proof}
We first prove part~(1) for $n=1$ and then iterate.
Suppose $a_1=1$, so $x\in[\tfrac14,\tfrac12]$ (with equality at the right
endpoint only if every $a_j$, $j\ge 2$, equals~$3$). On this interval the
formula $F(x)=4x-1$ applies, giving
\[
  F(x) \;=\; 4\sum_{j=1}^{\infty} a_j\,4^{-j} - 1
        \;=\; a_1 + \sum_{j=2}^{\infty} a_j\,4^{-(j-1)} - 1
        \;=\; \sum_{j=2}^{\infty} a_j\,4^{-(j-1)}.
\]
The case $a_1=3$ is symmetric: $x\in[\tfrac34,1]$ and
$F(x)=4x-3=\sum_{j=2}^{\infty} a_j\,4^{-(j-1)}$. Hence, whenever
$a_1\in\{1,3\}$, the map $F$ strips the leading digit. Iterating this
identity $n$ times---which is possible since, at each step, the new
leading digit is still in $\{1,3\}$ by hypothesis---gives the statement
of~(1).

Part~(2) follows directly. Write $x=z+4^{-t}u$ with $u\in[0,1)$, and let
$u=\sum_{k=1}^{\infty} b_k\,4^{-k}$ be its base-$4$ expansion. Then the
first $t$ digits of $x$ are $a_1,\dots,a_t\in\{1,3\}$, followed by
$b_1,b_2,\dots$ at positions $t+1,t+2,\dots$. Applying~(1) with $n=t$,
\[
  F^{[t]}(x) \;=\; \sum_{k=1}^{\infty} b_k\,4^{-k} \;=\; u \;=\; 4^t(x-z),
\]
as claimed.
\end{proof}

We can then formalize that $F$ provides a \emph{test} for
membership in $\onethreefouradicn{n}$:

\begin{propositionrep}[Zero characterisation of $F^{[n]}$]
\label{lem:F-shift}\label{lem:machin}
Let $F:[0,1]\to[0,1]$ be the map defined by Equation~\eqref{eq:F}. For every $n\ge 1$
and every $x\in[0,1]$,
\[
  F^{[n]}(x) \;=\; 0
  \quad\Longleftrightarrow\quad
  x\in\onethreefouradicn{n}.
\]
In particular, $F^{[n]}$ has exactly $2^n$ zeros in $[0,1]$, one per
digit sequence $(a_1,\dots,a_n)\in\{1,3\}^n$.
\end{propositionrep}

\begin{proof}
($\Leftarrow$) If $x=\sum_{j=1}^{n}a_j\,4^{-j}\in\onethreefouradicn{n}$,
then the base-$4$ expansion of $x$ has digits $a_1,\dots,a_n\in\{1,3\}$
followed by zeros. Lemma~\ref{lem:affine-cylinder}(1) then gives
$F^{[n]}(x)=\sum_{j=n+1}^{\infty}0\cdot 4^{-(j-n)}=0$.

($\Rightarrow$) We argue by induction on $n$. For $n=1$, inspecting the
four pieces of $F$ shows that $F(x)=0$ requires either $x=\tfrac14$ (from
the piece $F(x)=4x-1$) or $x=\tfrac34$ (from $F(x)=4x-3$), and these are
precisely the elements of $\onethreefouradicn{1}=\{\tfrac14,\tfrac34\}$.

Assume $n\ge 2$ and $F^{[n]}(x)=0$. Then
$F(F^{[n-1]}(x))=0$, so $F^{[n-1]}(x)\in\{\tfrac14,\tfrac34\}$ by the base
case. In particular, $F^{[n-1]}(x)\neq 1$; but $F$ maps the two plateau
intervals $[0,\tfrac14)\cup[\tfrac12,\tfrac34)$ to the constant value $1$,
so $F^{[n-2]}(x)$ must avoid these intervals. Hence
$F^{[n-2]}(x)\in[\tfrac14,\tfrac12)\cup[\tfrac34,1]$, which means that the
leading base-$4$ digit of $F^{[n-2]}(x)$ lies in $\{1,3\}$. Iterating this
argument backwards, we obtain digits $a_1,\dots,a_n\in\{1,3\}$ with
$F^{[n-k]}(x)\in[\tfrac{a_k}{4},\tfrac{a_k+1}{4})$ for each $k$;
equivalently, $x$ lies in the cylinder
\[
  C \;:=\; \Bigl[\,z,\; z+4^{-n}\,\Bigr),
  \qquad z\;:=\;\sum_{j=1}^{n} a_j\,4^{-j}\in\onethreefouradicn{n}.
\]
By Lemma~\ref{lem:affine-cylinder}(2), $F^{[n]}$ is affine on $C$ with
slope $4^n$ and vanishes at the left endpoint $z$. Since
$F^{[n]}(x)=0$, we conclude $x=z\in\onethreefouradicn{n}$.

The cardinality statement follows: each digit sequence in $\{1,3\}^n$
yields a distinct zero.
\end{proof}

\subsection{A continuous approximation $F_\varepsilon$}
\label{sec:continuous-F}\label{sec:ccont}

The function $F$ itself cannot be realised by a $\ReLU$ network, since it
is discontinuous at $x=\tfrac14$ and $x=\tfrac34$. We replace it by a
continuous piecewise-affine approximation $F_\varepsilon$ that agrees with
$F$ outside two narrow ``repair intervals'' and linearly interpolates
between the plateau and the ascending piece inside them. Fix a parameter
$\varepsilon\in\bigl(0,\tfrac{1}{16}\bigr)$ and define
\begin{figure}[h]
\centering
\begin{minipage}{0.44\textwidth}
\begin{equation}\label{eq:Feps}
F_\varepsilon(x)=\begin{cases}
1,                & 0\le x\le \tfrac14-\varepsilon,\\[2pt]
\dfrac{1-4x}{4\varepsilon},
                 & \tfrac14-\varepsilon\le x\le \tfrac14,\\[6pt]
4x-1,            & \tfrac14\le x\le \tfrac12,\\[2pt]
1,               & \tfrac12\le x\le \tfrac34-\varepsilon,\\[2pt]
\dfrac{3-4x}{4\varepsilon},
                 & \tfrac34-\varepsilon\le x\le \tfrac34,\\[6pt]
4x-3,            & \tfrac34\le x\le 1.
\end{cases}
\end{equation}
\end{minipage}
\hfill
\begin{minipage}{0.50\textwidth}
\centering
\begin{tikzpicture}[scale=3]
  \draw[->] (-0.02,0) -- (1.08,0) node[right] {$x$};
  \draw[->] (0,-0.02) -- (0,1.08) node[above] {$F_\varepsilon(x)$};
  \foreach \x/\label in {0/0,0.2/{$\tfrac14\!-\!\varepsilon$},0.25/$\tfrac14$,
                         0.5/$\tfrac12$,0.7/{$\tfrac34\!-\!\varepsilon$},
                         0.75/$\tfrac34$,1/1}
    \draw (\x,0.01) -- (\x,-0.01) node[below=2pt,font=\tiny] {\label};
  \foreach \y/\label in {1/1}
    \draw (0.01,\y) -- (-0.01,\y) node[left=2pt] {\label};
  \draw[dashed,gray] (0,1) -- (1,1);
  \draw[thick]
    (0,1) -- (0.2,1)
    (0.2,1) -- (0.25,0)
    (0.25,0) -- (0.5,1)
    (0.5,1) -- (0.7,1)
    (0.7,1) -- (0.75,0)
    (0.75,0) -- (1,1);
\end{tikzpicture}
\end{minipage}
\caption{The continuous piecewise-affine approximation
$F_\varepsilon$ of $F$.}
\label{fig:machine}
\end{figure}

The \emph{repair intervals}
\(
  J_\varepsilon \;:=\; \bigl[\tfrac14-\varepsilon,\tfrac14\bigr]
  \,\cup\,\bigl[\tfrac34-\varepsilon,\tfrac34\bigr]
\)
are the only places where $F_\varepsilon$ and $F$ differ. The map
$F_\varepsilon$ admits the nested form
$F_\varepsilon=\max\bigl(4x-3,\ \min\bigl(\tfrac{3-4x}{4\varepsilon},\ \min\bigl(1,\ \max\bigl(4x-1,\ \min(\tfrac{1-4x}{4\varepsilon},\,1)\bigr)\bigr)\bigr)\bigr)$
(a case inspection over the six pieces), so by Lemma~\ref{lem:CPAM}
it is computable by a $\ReLU$ network in
$\Netnotation{\ReLU}{1}{1}{2}$; composing $n$ copies yields
$F_\varepsilon^{[n]}\in\Netnotation{\ReLU}{1}{1}{2}$ of depth $\mathcal{O}(n)$.

The iterate $F_\varepsilon^{[n]}$ vanishes on every element of
$\onethreefouradicn{n}$ (Lemma~\ref{lem:coincidence}), just as
$F^{[n]}$ does; but continuity forces $F_\varepsilon^{[n]}$ to admit
also \emph{spurious} zeros. Lemma~\ref{lem:localisation} confines
each spurious zero to a small $\varepsilon$-dependent neighbourhood
of a $\{1,3\}$-quadradic of length $s<n$.

\begin{remark}[What is delicate here]
Approximating $F$ by a continuous piecewise-affine map in supremum
norm is immediate and is not the difficulty. Verification depends on
the \emph{zero set of the $n$-fold iterate}, which sup-norm closeness
does not control: each iteration multiplies perturbations by the
Lipschitz constant $4$, and a generic approximation of $F$ acquires
zeros whose orbits encode nothing. The content of this subsection is
that the interpolation \eqref{eq:Feps} confines all spurious zeros of
$F_\varepsilon^{[n]}$ to explicit neighbourhoods of shorter encodings
(Lemma~\ref{lem:localisation}), where the filter of
Lemma~\ref{lem:trunck} repairs them.
\end{remark}

\begin{lemmarep}[Coincidence outside the repair intervals]
\label{lem:coincidence}
Let $x\in[0,1]$. If the $F_\varepsilon$-orbit of $x$ avoids
$J_\varepsilon$ for $n-1$ steps, i.e.\ if
$F_\varepsilon^{[k]}(x)\notin J_\varepsilon$ for every
$k\in\{0,1,\dots,n-1\}$, then
\(
  F_\varepsilon^{[n]}(x) \;=\; F^{[n]}(x).
\)
In particular, $\onethreefouradicn{n}\subseteq(F_\varepsilon^{[n]})^{-1}(0)$.
\end{lemmarep}

\begin{proof}
We prove by induction on $k\le n$ that $F_\varepsilon^{[k]}(x)=F^{[k]}(x)$.
The base case $k=0$ is trivial. Assume the identity at step $k-1$. Then
$F_\varepsilon^{[k-1]}(x)=F^{[k-1]}(x)$, and by hypothesis this point
lies outside $J_\varepsilon$. Since $F_\varepsilon$ and $F$ agree outside
$J_\varepsilon$,
\(
  F_\varepsilon^{[k]}(x) = F_\varepsilon(F_\varepsilon^{[k-1]}(x))
  = F(F^{[k-1]}(x)) = F^{[k]}(x).
\)

For the second assertion, let $x\in\onethreefouradicn{n}$; we show that
$F_\varepsilon^{[n]}(x)=0$. By Lemma~\ref{lem:affine-cylinder},
\[
  F^{[k]}(x)\in\onethreefouradicn{n-k}
  \qquad \text{for every } 0\le k\le n,
\]
and $F^{[n]}(x)=0$ by Lemma~\ref{lem:F-shift}. It therefore suffices to
show that the $F_\varepsilon$-orbit of $x$ coincides with its
$F$-orbit, i.e.\ $F_\varepsilon^{[k]}(x)=F^{[k]}(x)$ for all $k\le n$.

The first part of the lemma does not apply directly: for $k=n-1$ we have
$F^{[n-1]}(x)\in\onethreefouradicn{1}=\{\tfrac14,\tfrac34\}$, which lies
on the boundary of $J_\varepsilon$ rather than outside it. However, at
these two boundary points $F$ and $F_\varepsilon$ still agree: direct
inspection of \eqref{eq:F} and \eqref{eq:Feps} gives
\[
  F(\tfrac14)=F_\varepsilon(\tfrac14)=0,
  \qquad
  F(\tfrac34)=F_\varepsilon(\tfrac34)=0.
\]

We now run the induction of the first part, with the weaker property
that $F$ and $F_\varepsilon$ agree at the point
$F^{[k-1]}(x)$---which holds either because this point lies outside
$J_\varepsilon$, or because it equals $\tfrac14$ or $\tfrac34$. For
$0\le k\le n-1$, the first base-$4$ digit of $F^{[k-1]}(x)\in\onethreefouradicn{n-k+1}$
lies in $\{1,3\}$, placing the point in $[\tfrac14,\tfrac12)\cup[\tfrac34,1)$;
this interval meets $J_\varepsilon=[\tfrac14-\varepsilon,\tfrac14]\cup[\tfrac34-\varepsilon,\tfrac34]$
only at the two boundary points $\tfrac14$ and $\tfrac34$. In either
situation (generic interior point of the admissible interval, or
boundary point), $F$ and $F_\varepsilon$ take the same value there, so
$F_\varepsilon(F^{[k-1]}(x))=F(F^{[k-1]}(x))$, and the induction goes
through. In particular $F_\varepsilon^{[n]}(x)=F^{[n]}(x)=0$.
\end{proof}

\begin{lemmarep}[Localisation of spurious zeros]
\label{lem:localisation}
Let $x\in[0,1]$ with $F_\varepsilon^{[n]}(x)=0$ and
$x\notin\onethreefouradicn{n}$. Then there exist
$s\in\{1,\dots,n\}$ and $z\in\onethreefouradicn{s}$ such that
\(
  |x - z| \;\le\; \varepsilon\cdot 4^{-(s-1)}.
\)
\end{lemmarep}

\begin{proof}
Since $F_\varepsilon^{[n]}(x)=0$ but $x\notin\onethreefouradicn{n}$,
Lemma~\ref{lem:coincidence} implies that the $F_\varepsilon$-orbit of $x$
must enter $J_\varepsilon$ at some step. Let $t\in\{0,\dots,n-1\}$ be the
smallest such step, so that
$F_\varepsilon^{[k]}(x)\notin J_\varepsilon$ for $k<t$ and
$F_\varepsilon^{[t]}(x)\in J_\varepsilon$. By Lemma~\ref{lem:coincidence}
applied at step $t$, $F_\varepsilon^{[t]}(x)=F^{[t]}(x)$, and the orbit
before time $t$ coincides with that of $F$.

The orbit $F^{[0]}(x),\dots,F^{[t-1]}(x)$ all avoid $J_\varepsilon$, so in
particular none of them equals~$1$; an inspection of~\eqref{eq:F} shows
that each $F^{[k]}(x)$ then lies in
$[\tfrac14,\tfrac12)\cup[\tfrac34,1]$, hence has a leading base-$4$ digit
in $\{1,3\}$. Thus the first $t$ digits of $x$ are
$a_1,\dots,a_t\in\{1,3\}$, and by
Lemma~\ref{lem:affine-cylinder}(2), writing
$z_0=\sum_{j=1}^t a_j4^{-j}\in\onethreefouradicn{t}$,
\[
  F^{[t]}(x) \;=\; 4^t(x-z_0) \;\in\; J_\varepsilon.
\]
If $4^t(x-z_0)\in[\tfrac14-\varepsilon,\tfrac14]$, then
$x-z_0\in[4^{-(t+1)}-\varepsilon\cdot 4^{-t},\,4^{-(t+1)}]$, so $x$ is
within $\varepsilon\cdot 4^{-t}$ of the point
$z_0+4^{-(t+1)}\in\onethreefouradicn{t+1}$. The case
$4^t(x-z_0)\in[\tfrac34-\varepsilon,\tfrac34]$ is symmetric and gives the
neighbour $z_0+3\cdot 4^{-(t+1)}\in\onethreefouradicn{t+1}$. Setting
$s:=t+1$ completes the proof.
\end{proof}

Choosing $\varepsilon$ small enough makes the spurious zeros essentially
harmless: they are separated from genuine elements of
$\onethreefouradicn{n}$, and in fact confined to thin one-sided
neighbourhoods of $\{1,3\}$-prefixes. 
Figure~\ref{fig:Feps-iterates} illustrates these spurious zeros for $F_\varepsilon^{[2]}$ and $F_\varepsilon^{[3]}$.
The following quantitative
consequence is what we need in the next subsection.

\input{./fig-feps}

\begin{corollaryrep}\label{cor:eps-choice}
For every $n\ge 1$ and every $\delta>0$, setting
\(
  \varepsilon \;:=\; \delta
\)
guarantees that every zero of $F_\varepsilon^{[n]}$ on $[0,1]$ lies at
distance at most $\delta$ from some element of
$\bigcup_{s=1}^{n}\onethreefouradicn{s}$.
\end{corollaryrep}

\begin{proof}
Let $x\in[0,1]$ be a zero of $F_\varepsilon^{[n]}$. If
$x\in\onethreefouradicn{n}$, then $x$ itself belongs to
$\bigcup_{s=1}^{n}\onethreefouradicn{s}$ and the bound holds with
distance~$0$. Otherwise, Lemma~\ref{lem:localisation} applies: there
exist $s\in\{1,\dots,n\}$ and $z\in\onethreefouradicn{s}$ with
\[
  |x-z| \;\le\; \varepsilon\cdot 4^{-(s-1)} \;\le\; \varepsilon \;=\; \delta,
\]
using $s\ge 1$ for the second inequality.
\end{proof}

\subsection{Lifting the hardness results from $\DDp$ to $[0,1]$}
\label{sec:unrestricted}

We now lift the decision-complexity results of
Section~\ref{sec:variants} from $\onethreefouradicn{n}$ to the whole
of $[0,1]$, at the cost of one additional unit of width. The
fractal gadget of Section~\ref{sec:fractal} produces, from any input
$x\in[0,1]$, an error $p_1 = F_\varepsilon^{[n]}(x)$ witnessing
whether $x$ lies close to the encoding set; combined with the
filter of Lemma~\ref{lem:trunck} and the clause subnetworks of
Section~\ref{sec:reduction}, this yields a width-$4$ network whose
zero test is $\NP$-complete on $[0,1]$. The extra unit of width
carries $p_1$ through the clause stages, so that it can be added
to the clause count at the end.

Only single-quantifier statements ($\NP$, $\coNP$, and gap
inapproximability) transfer cleanly: the counting and majority
results of Theorems~\ref{thm:sharp-p} and~\ref{thm:pp} do not,
since the filter of Lemma~\ref{lem:trunck} is not injective and the
zero set of the lifted network is not in a parsimonious bijection
with the satisfying assignments of $\varphi$.

\noindent \textbf{The lifted network $\NN_\varphi^{[0,1]}$.}
Let $\varphi$ be a 3-CNF formula over $n$ variables with $m$ clauses,
and set
\begin{equation}
  \varepsilon \;:=\; 4^{-(m^2 + m + 2mn + 2n + 2)}. \label{eq:eps-def}
\end{equation}
Define $\NN_\varphi^{[0,1]}\colon[0,1]\to\mathbb{R}_{\ge 0}$ as the
composition of the following four stages:
\begin{enumerate}
\item \emph{Fractal stage.} $x\mapsto (x,\,p_1)$ with
      $p_1 := F_\varepsilon^{[n]}(x)$, computed by the width-$2$
      $\ReLU$ network of Lemma~\ref{lem:CPAM} applied to
      Equation~\eqref{eq:Feps}, with $x$ carried on a third neuron.
      The width is $3$, and the depth is $\mathcal{O}(n)$.
\item \emph{Filter stage.} $(x,\,p_1)\mapsto (\tilde x,\,p_1)$ via the
      truncation network $\NNT{n}$ of Lemma~\ref{lem:trunck} applied
      to $x$, with $p_1$ carried untouched on a dedicated register. Width   $4$,   depth   $\mathcal{O}(n)$.
\item \emph{Clause stages.} $(\tilde x,\,0,\,p_1)\mapsto
      (\tilde x,\,q_1,\,p_1)\mapsto\cdots\mapsto
      (\tilde x,\,q_{m+1},\,p_1)$, where each transition applies a
      clause subnetwork $\NN_{\varphi,\ell}$ of
      Section~\ref{sec:reduction} to $(\tilde x,q_{\ell-1})$,
      producing $(\tilde x,q_\ell)$, while $p_1$ is passed through on
      a dedicated register. The width is $4$, and the  depth is $\mathcal{O}(mn)$.
\item \emph{Output combination.} Return
      \(
        \NN_\varphi^{[0,1]}(x) \;:=\;
        \NNCOUNT{m}(q_{m+1},\,0) \,+\, p_1,
      \)
      where $\NNCOUNT{m}$ is the counting gadget of
      Lemma~\ref{lem:counting}. The width is $4$, and the  depth is $\mathcal{O}(m)$.
\end{enumerate}
The network $\NN_\varphi^{[0,1]}$ lies in
$\Netnotation{\ReLU}{1}{1}{4}$, has depth $\mathcal{O}(mn)$, and all
rational weights have encoding size polynomial in $n+m$ (in particular,
$\varepsilon$ has encoding size $\mathcal{O}(mn)$ bits). The
construction is computable polynomial-time from $\varphi$.

\begin{mytheoremrep}[Correctness of $\NN_\varphi^{[0,1]}$]\label{thm:lifted}
For every $x\in[0,1]$, $\NN_\varphi^{[0,1]}(x)\ge 0$. Moreover:
\begin{enumerate}
\item If $\varphi$ is satisfiable, then for every satisfying
      assignment $\alpha\in\{0,1\}^n$,
      $\NN_\varphi^{[0,1]}(x(\alpha)) = 0$.
\item If $\varphi$ is unsatisfiable, then
      $\NN_\varphi^{[0,1]}(x)\ge\tfrac12$ for every $x\in[0,1]$.
\end{enumerate}
In particular,
\(
  \exists\,x\in[0,1]\colon\NN_\varphi^{[0,1]}(x) = 0
  \;\Longleftrightarrow\;
  \varphi \text{ is satisfiable.}
\)
\end{mytheoremrep}

\begin{proofsketch}
Part~(1) is immediate: on $x = x(\alpha)$, the fractal error
$p_1=F_\varepsilon^{[n]}(x)$ equals $0$
(Lemma~\ref{lem:coincidence}) and the clause output is $0$
(Claim~\ref{claim:assignment}), so
$\NN_\varphi^{[0,1]}(x)=0$. Part~(2) splits into three cases on
the geometry of~$x$: \emph{(A)}~if
$x\in\onethreefouradicn{n}$, unsatisfiability forces at least one
$c_\ell=3$; \emph{(B)}~if $p_1\ge\tfrac12$, the output dominates;
\emph{(C)}~otherwise, Lemma~\ref{lem:localisation} places~$x$
within $\varepsilon$ of some $z\in\onethreefouradicn{t+1}$, and
the Lipschitz bounds $L_{\mathrm{count}}\le 4^{m^2+1}$,
$L_{\mathrm{clauses}}\le 4^{2mn}$, $L_{\mathrm{trunck}}\le  {2n}$,
  combined with the choice of $\varepsilon$ in
  Equation~\eqref{eq:eps-def}, which gives $\Lambda\varepsilon\le\tfrac1{16}$ for
  $\Lambda:=L_{\mathrm{count}}L_{\mathrm{clauses}}L_{\mathrm{trunck}}$, keep the output within
\end{proofsketch}

\begin{proof}
Non-negativity is immediate: $\NNCOUNT{m}(q_{m+1},0)\ge 0$ is a
non-negative integer by Lemma~\ref{lem:counting}, and $p_1\ge 0$ since
$F_\varepsilon^{[n]}$ takes values in $[0,1]$.

\smallskip
\paragraph*{Lipschitz constants.}
Each of the three main components is Lipschitz over $[0,1]$ (or
$[0,1]^2$ as appropriate):
\begin{itemize}
\item The filter $\NNT{n}$ of Lemma~\ref{lem:trunck}, viewed as a map
      $x\mapsto\tilde x$, is $L_{\mathrm{trunck}}$-Lipschitz with
      $L_{\mathrm{trunck}}\le {2n}$
      (Lemma~\ref{lem:trunck-lipschitz}: the weight $4^{-i}$ of the
      digit written at step $i$ exactly offsets the drift $4^{i-1}$
      of the register it is read from).
\item The clause network $\NN_\varphi := \NN_{\varphi,m}\circ\cdots\circ
      \NN_{\varphi,1}$, viewed as the map $(\tilde x,q_0)\mapsto
      q_{m+1}$ via the clause stages, is
      $L_{\mathrm{clauses}}$-Lipschitz with
      $L_{\mathrm{clauses}}\le 4^{2mn}$ (each of the $m$ clauses scans
      $n$ digits with per-step Lipschitz $\le 16$).
\item The counting gadget $\NNCOUNT{m}(\cdot,0)$ is
      $L_{\mathrm{count}}$-Lipschitz with $L_{\mathrm{count}}\le 4^{m^2+1}$
      (each of the $m$ iterations scans a digit with per-step Lipschitz
      $\le 4^{m+1}$ in the transition zones of the shift function
      $s(\cdot)$ of Lemma~\ref{lem:counting}).

Setting $\Lambda := L_{\mathrm{count}}\cdot L_{\mathrm{clauses}}\cdot
L_{\mathrm{trunck}}$, our choice of $\varepsilon$
satisfies
\begin{equation}\label{eq:epsilon-bound}
  \Lambda\cdot\varepsilon
  \;=\; 2n\cdot 4^{\,m^2+1+2mn-(m^2+m+2mn+2n+2)}
  \;=\; 2n\cdot 4^{-(m+2n+1)}
  \;\le\;
  4^{-2}
  \;=\;
  \tfrac{1}{16},
\end{equation}
using $2n\le 4^{\,m+2n-1}$ for all $m,n\ge 1$.
\end{itemize}

\smallskip
\paragraph*{Part (1): $\varphi$ satisfiable.}
Let $\alpha$ be a satisfying assignment and let $x = x(\alpha)\in
\onethreefouradicn{n}$. By Lemma~\ref{lem:coincidence} and
Proposition~\ref{lem:F-shift}, $p_1 = F_\varepsilon^{[n]}(x) =
F^{[n]}(x) = 0$. By Lemma~\ref{lem:trunck}(1), $\tilde x = x$. By
Claim~\ref{claim:assignment} applied with initial flag $q_0 = 0$, the
clause stages produce $q_{m+1} = \sum_\ell c_\ell\,4^{-(m+1-\ell)}$
with every $c_\ell = 0$ (since $\alpha$ satisfies every clause). Hence
$q_{m+1} = 0$, and $\NNCOUNT{m}(0,0) = 0$ by
Lemma~\ref{lem:counting}. Therefore $\NN_\varphi^{[0,1]}(x) = 0$.

\smallskip
\paragraph*{Part (2): $\varphi$ unsatisfiable.}
Fix $x\in[0,1]$. We show $\NN_\varphi^{[0,1]}(x)\ge\tfrac12$ by
distinguishing three cases.

\smallskip
\emph{Case A: $x\in\onethreefouradicn{n}$.}
Then $p_1 = 0$ and $\tilde x = x$ by Lemma~\ref{lem:trunck}(1). Let
$\alpha$ be the assignment associated to $x$. Since $\varphi$ is
unsatisfiable, $\alpha$ fails at least one clause, so by
Claim~\ref{claim:assignment} there exists $\ell$ with $c_\ell = 3$.
Hence $\NNCOUNT{m}(q_{m+1},0)\ge 1$ by Lemma~\ref{lem:counting}, and
$\NN_\varphi^{[0,1]}(x)\ge 1$.

\smallskip
\emph{Case B: $p_1\ge\tfrac12$.}
Then $\NN_\varphi^{[0,1]}(x)\ge p_1\ge\tfrac12$.

\smallskip
\emph{Case C: $x\notin\onethreefouradicn{n}$ and $p_1 < \tfrac12$.}
We distinguish two sub-cases according to whether the
$F_\varepsilon$-orbit of $x$ enters the repair region $J_\varepsilon$.

\smallskip
\emph{Sub-case C.1: The orbit avoids $J_\varepsilon$ at every step
$0,\dots,n-1$.}
By Lemma~\ref{lem:coincidence}, $F_\varepsilon^{[n]}(x) = F^{[n]}(x)$,
and this value equals $p_1 < \tfrac12$. We claim that the first $n$
base-$4$ digits of $x$ lie in $\{1,3\}$.

Indeed, if some orbit point $F^{[k]}(x)$ for
$k\in\{0,\dots,n-1\}$ lies in the plateau region
$[0,\tfrac14)\cup[\tfrac12,\tfrac34)$ where $F\equiv 1$, then
$F^{[k+1]}(x) = 1$; and since $F(1) = 1$, every subsequent iterate
equals $1$, giving $F^{[n]}(x) = 1$ and contradicting
$p_1 < \tfrac12$. Hence each orbit point
$F^{[k]}(x)$ lies in $[\tfrac14,\tfrac12)\cup[\tfrac34,1)$, so has
leading base-$4$ digit in $\{1,3\}$ (the boundary points
$\tfrac14,\tfrac34$ are excluded since the orbit avoids
$J_\varepsilon$). Thus the first $n$ digits of $x$ are
$a_1,\dots,a_n\in\{1,3\}$.

By Lemma~\ref{lem:trunck}(1),
$\tilde x = \sum_{j=1}^n a_j\,4^{-j}\in\onethreefouradicn{n}$. Let
$\alpha$ be the associated assignment. Since $\varphi$ is
unsatisfiable, $\alpha$ fails some clause, so by
Claim~\ref{claim:assignment}, some $c_\ell = 3$. Hence
$\NNCOUNT{m}(q_{m+1},0)\ge 1$ by Lemma~\ref{lem:counting}, and
$\NN_\varphi^{[0,1]}(x)\ge 1 + p_1 \ge 1$.

\smallskip
\emph{Sub-case C.2: The orbit enters $J_\varepsilon$ at some step.}
Let $t\in\{0,\dots,n-1\}$ be the smallest such step. By the argument
of Lemma~\ref{lem:localisation}, the first $t$ digits of $x$ are in
$\{1,3\}$, and there exists $z\in\onethreefouradicn{t+1}$ with
$|x - z|\le\varepsilon\cdot 4^{-t}\le\varepsilon$.

Applying the filter to $z$ yields
$\tilde z := \NNT{n}(z,0) \in \onethreefouradicn{n}$ by
Lemma~\ref{lem:trunck}(2) (since $z$ has digits in $\{1,3\}$ at
positions $\le t+1$ and $0$ at positions $> t+1$, all in $\{0,1,3\}$).
Let $\alpha_z$ be the assignment associated to $\tilde z$ and let
$q_{m+1}^*$ denote the clause output at $\tilde z$. Since $\varphi$
is unsatisfiable, $\alpha_z$ fails some clause, so by
Claim~\ref{claim:assignment} some $c_\ell^* = 3$, and
$\NNCOUNT{m}(q_{m+1}^*,0)\ge 1$ by Lemma~\ref{lem:counting}.

By Lipschitz continuity of the three components,
\begin{align*}
  |\tilde x - \tilde z|
    &\;\le\; L_{\mathrm{trunck}}\cdot|x - z|
     \;\le\; L_{\mathrm{trunck}}\cdot\varepsilon,\\
  |q_{m+1} - q_{m+1}^*|
    &\;\le\; L_{\mathrm{clauses}}\cdot|\tilde x - \tilde z|
     \;\le\; L_{\mathrm{clauses}}\cdot L_{\mathrm{trunck}}\cdot\varepsilon,\\
  |\NNCOUNT{m}(q_{m+1},0) - \NNCOUNT{m}(q_{m+1}^*,0)|
    &\;\le\; L_{\mathrm{count}}\cdot|q_{m+1} - q_{m+1}^*|
     \;\le\; \Lambda\cdot\varepsilon
     \;\le\; \tfrac{1}{16},
\end{align*}
where the last inequality uses~\eqref{eq:epsilon-bound}. Hence
\[
  \NNCOUNT{m}(q_{m+1},0)
  \;\ge\; \NNCOUNT{m}(q_{m+1}^*,0) - \tfrac{1}{16}
  \;\ge\; 1 - \tfrac{1}{16}
  \;=\; \tfrac{15}{16},
\]
and therefore
$\NN_\varphi^{[0,1]}(x) \ge \tfrac{15}{16} + p_1 > \tfrac12$.

\smallskip
Combining Cases A, B, and C, we conclude
$\NN_\varphi^{[0,1]}(x)\ge\tfrac12$ for every $x\in[0,1]$ whenever
$\varphi$ is unsatisfiable.

\begin{lemma}[Lipschitz constant of the truncation filter]
\label{lem:trunck-lipschitz}
The network $\NNT{n}\colon[0,1]\times\mathbb{R}\to\mathbb{R}^2$ of
Lemma~\ref{lem:trunck} is globally $L_{\mathrm{trunck}}$-Lipschitz in
its first argument $x$, with
\[
  L_{\mathrm{trunck}} \;\le\;  2n.
\]
\end{lemma}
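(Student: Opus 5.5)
The plan is to treat $\NNT{n}$ as an explicit composition of continuous piecewise-affine maps, and to bound the Lipschitz constant of the output $y=\tilde x$ (the quantity used in the proof of Theorem~\ref{thm:lifted}) by summing slope contributions step by step. The second output $x_n$ has Lipschitz constant up to $4^n$, so it must be excluded. I read the statement as concerning the coordinate $x\mapsto y$; this is how $L_{\mathrm{trunck}}$ is used above. Because all maps involved are continuous and piecewise affine, their Lipschitz constants equal the supremum of the absolute slopes over the pieces. For compositions, the chain rule holds almost everywhere, so the derivative bounds below give Lipschitz bounds.

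\textbf{Step 1: the one-step maps.} Using the explicit formula from Lemma~\ref{lem:lead}, $\lead$ is constant ($1$ or $3$) outside $[\tfrac12,\tfrac34]$ and has slope $8$ inside, so $|\lead'|\le 8$. For $\shift(x)=\ReLU(4x-\lead(x))$ I will check that it maps $[0,1]$ into $[0,1]$:
\begin{itemize}
\item on $[0,\tfrac12]$ it equals $\ReLU(4x-1)\in[0,1]$;
\item on $[\tfrac12,\tfrac34]$ it equals $3-4x\in[0,1]$;
\item on $[\tfrac34,1]$ it equals $4x-3\in[0,1]$.
\end{itemize}
Hence $|\shift'|\le 4$ almost everywhere. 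This invariance guarantees that every register value $x_i$ stays in $[0,1]$, where the formulas above are valid.

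\textbf{Step 2: unrolling the loop.} Write $x_0=x$, $x_i=\shift(x_{i-1})$ and $y_n=\sum_{i=1}^n\lead(x_{i-1})\,4^{-i}$. The $\ReLU$ in the update $y\leftarrow\ReLU(y+a4^{-i})$ is inactive, since $y\ge 0$ and $a=\lead(\cdot)\ge 1$; in any case it is $1$-Lipschitz and could only help. By induction, $x_{i-1}$ as a function of $x$ is Lipschitz with constant $\le 4^{i-1}$. Therefore the $i$-th summand $x\mapsto\lead(x_{i-1})\,4^{-i}$ has Lipschitz constant at most $8\cdot 4^{i-1}\cdot 4^{-i}=2$. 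This is the exact offset between the weight $4^{-i}$ and the drift $4^{i-1}$ mentioned in the proof of Theorem~\ref{thm:lifted}.

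\textbf{Step 3: conclusion.} By the triangle inequality over the $n$ summands, $|y(x)-y(x')|\le 2n\,|x-x'|$ for all $x,x'\in[0,1]$. The only real subtlety, and the point I expect to need care, is Step~1's confinement argument: the bound $|\lead'|\le 8$ requires $x_{i-1}\in[0,1]$ at every step. Outside $[0,1]$ the network's $\lead$ is still constant (it is computed by clamping), so the bound would survive anyway. Still, I will state the invariance $\shift([0,1])\subseteq[0,1]$ explicitly, so that the slope bookkeeping is airtight at the breakpoints $\tfrac14,\tfrac12,\tfrac34$, where $\shift$ switches between its pieces.
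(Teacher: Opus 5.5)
Your proof is correct and is essentially the paper's argument. The ingredients match: $\shift$ is $4$-Lipschitz, so $x_{i-1}$ drifts by at most $4^{i-1}$; $\lead$ is $8$-Lipschitz; and the weight $4^{-i}$ cancels the drift, giving a contribution of $2$ per step and $2n$ in total. The only difference is presentation: you unroll the sum, whereas the paper writes the recursive inequality $|y_i-y_i'|\le|y_{i-1}-y_{i-1}'|+2|x_0-x_0'|$.

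Your remark that only the $y$-coordinate can satisfy this bound is also right, since $x_n$ has slope $4^n$ on the cylinders. The paper's proof likewise bounds only the output $y_n$.
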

 \begin{nestedproof}
     Recall from Lemma~\ref{lem:trunck} that, starting from
$(x_0, y_0) = (x, 0)$, the loop iterates
\[
  y_i \;=\; \ReLU\bigl(y_{i-1} + \lead(x_{i-1})\,4^{-i}\bigr),
  \qquad
  x_i \;=\; \shift(x_{i-1}),
  \qquad 1 \le i \le n,
\]
with $\shift(x) = \ReLU(4x - \lead(x))$, and outputs $y_n$. All
constants below are maximum absolute slopes of continuous
piecewise-affine maps, hence Lipschitz constants.

\smallskip

The function $\lead$ has maximum absolute slope $8$
(Lemma~\ref{lem:lead}). The map $x \mapsto 4x - \lead(x)$ is
continuous piecewise-affine with slopes $4$, $-4$, $4$ on the pieces
$[0,\tfrac12]$, $[\tfrac12,\tfrac34]$, $[\tfrac34,1]$, so it has
maximum absolute slope $4$; composing with the $1$-Lipschitz $\ReLU$,
the map $\shift$ is $4$-Lipschitz. The accumulator update
$(y, x) \mapsto \ReLU\bigl(y + \lead(x)\,4^{-i}\bigr)$ is
$1$-Lipschitz in $y$ and $8 \cdot 4^{-i}$-Lipschitz in $x$.

\smallskip

Fix two inputs $x_0, x_0' \in [0,1]$ and write $x_i, x_i'$ and
$y_i, y_i'$ for the corresponding register values (with
$y_0 = y_0' = 0$). By the $4$-Lipschitz bound on $\shift$ and
induction,
\[
  |x_i - x_i'| \;\le\; 4^{\,i}\,|x_0 - x_0'|.
\]
The accumulator recurrence then gives
\[
  |y_i - y_i'|
  \;\le\; |y_{i-1} - y_{i-1}'| + 8 \cdot 4^{-i}\,|x_{i-1} - x_{i-1}'|
  \;\le\; |y_{i-1} - y_{i-1}'| + 8 \cdot 4^{-i} \cdot 4^{\,i-1}\,|x_0 - x_0'|,
\]
and $8 \cdot 4^{-i} \cdot 4^{\,i-1} = 2$ for every $i$: the weight
$4^{-i}$ of the digit written at step $i$ exactly offsets the
accumulated drift $4^{\,i-1}$ of the register it is read from. Summing
over $i = 1, \dots, n$ with $y_0 = y_0'$,
\[
  |y_n - y_n'| \;\le\; 2n\,|x_0 - x_0'|.
\]

 \end{nestedproof}

\begin{lemma}[Lipschitz constant of the clause cascade]
\label{lem:clauses-lipschitz}
The network $\NN_\varphi\colon[0,1]\times[0,1]\to[0,1]$ obtained by
composing the $m$ clause subnetworks of
Section~\ref{sec:reduction} is globally
$L_{\mathrm{clauses}}$-Lipschitz, with
\[
  L_{\mathrm{clauses}} \;\le\; 4^{2mn}.
\]
\end{lemma}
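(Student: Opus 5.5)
We track the two registers $(x,p)$ through the cascade and bound how a perturbation propagates, exactly as in Lemma~\ref{lem:trunck-lipschitz}. For two inputs $(x,p)$ and $(x',p')$ we maintain the quantities $\Delta_x=|x-x'|$ and $\Delta_p=|p-p'|$ after every elementary operation, and we bound both by a common growing quantity. The target is a per-inner-block amplification factor of at most $16$; this gives $16^{mn}=4^{2mn}$ over the $mn$ inner blocks, plus a harmless factor for the $m$ phase-1 steps.

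\emph{Elementary slopes.} We read off maximal absolute slopes of the gadgets. The map $\lead$ has slope at most $8$ (Lemma~\ref{lem:lead}), and $b=\ReLU((3-a)/2)$ is $\tfrac12$-Lipschitz in $a$. The function $\ellit_{\ell,j}$ is $1$-Lipschitz, and $\shift$ is $4$-Lipschitz, as computed in Lemma~\ref{lem:trunck-lipschitz}. Hence the rotation $x\mapsto\shift(x)+4^{-n}\lead(x)$ is $(4+8\cdot4^{-n})$-Lipschitz. For $\test$ we use $\test=g\circ f$ with $f(e,p)=e/2+p/16$ and $g$ of maximal slope $16$ (Lemma~\ref{lem:test}). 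This makes $\test$ $8$-Lipschitz in $e$ and $1$-Lipschitz in $p$, since $16\cdot\tfrac1{16}=1$. Combining these, the flag update in one inner block satisfies
\[
\Delta_p^{\text{new}}\le \Delta_p+8\cdot\tfrac12\cdot 8\,\Delta_x=\Delta_p+32\,\Delta_x .
\]
Phase~1, $p\mapsto\min(3/4+p/4,1)$, is $\tfrac14$-Lipschitz in $p$ and therefore only helps.

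\emph{Combining the two registers.} Set $M=\max(\Delta_x,\Delta_p)$. One inner block sends $M$ to at most $\max\bigl((4+8\cdot4^{-n})M,\,33M\bigr)=33M$. This overshoots $16$, so the naive max-norm bookkeeping is the main obstacle. Instead we use a weighted norm $\Delta_x+c\,\Delta_p$ for a suitable $c$, or exploit that $\Delta_x$ alone grows like $(4+8\cdot 4^{-n})^{k}$ while $\Delta_p$ accumulates a sum $\sum 32\,\Delta_x^{(k)}$. Solving the recurrences, $\Delta_x$ after $k$ blocks is at most $5^k\Delta_0\le 4^{1.17k}\Delta_0$. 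The flag difference is then at most $\Delta_0+32\sum_{k<mn}5^k\Delta_0\le 8\cdot5^{mn}\Delta_0$ plus its initial value.

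On valid encodings the $x$-register returns to itself after each clause (Claim~\ref{claim:x}), but globally it need not, so we bound over all $mn$ blocks. The result is
\[
8\cdot 5^{mn}\le 16^{mn}=4^{2mn}\qquad\text{for } mn\ge1 .
\]
If the constant bookkeeping (the factor $32$, the term $8\cdot4^{-n}$) fails to fit under $16$ per block for tiny $mn$, we absorb it by noting $8\cdot 5\le 16^2$ and handling $mn=1$ directly. The case $mn=1$ is the step most likely to need care with constants; the asymptotic regime is comfortable.
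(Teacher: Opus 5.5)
Tracking the two registers separately is the right approach, and it differs from the paper's. The paper derives essentially your elementary constants: the flag update is $32$-Lipschitz in the block input $x$ and $1$-Lipschitz in $p$, and phase~1 is $\tfrac14$-Lipschitz. It then simply sets a joint per-block constant $L_{\mathrm{block}}:=16$ and raises it to the power $mn$. As your max-norm computation ($33$ per block) shows, this does not follow from those constants, since the flag coordinate alone has slope $32$ in $x$. Your triangular structure is what actually gives a bound of order $4^{mn}$: the $x$-register evolves on its own, and the flag only accumulates $32\,\Delta_x$ per block while being contracted by phase~1.

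Your closing computation has two holes, however. The first is that your rotation constant $4+8\cdot4^{-n}$ equals $6$ when $n=1$, so $\Delta_x\le 5^k\Delta_0$ is false there. Use the exact form instead. $h(x)=\shift(x)+4^{-n}\lead(x)$ is piecewise affine with slopes $0,\,4,\,-4+8\cdot4^{-n},\,4$ on $[0,\tfrac14],[\tfrac14,\tfrac12],[\tfrac12,\tfrac34],[\tfrac34,1]$. On the third piece the term $4^{-n}\lead(x)$ partly offsets the $-\lead(x)$ inside $\shift$. Hence $h$ is $4$-Lipschitz, and your recurrence yields
\[
  |\NN_\varphi(x,p)-\NN_\varphi(x',p')|\;\le\;4^{-m}\,|p-p'|+\tfrac{32}{3}\bigl(4^{mn}-1\bigr)\,|x-x'|.
\]
This is a Lipschitz constant of at most $4^{mn+2}$, which is $\le 4^{2mn}$ exactly when $mn\ge 2$.

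The second hole is more serious: the case $mn=1$ cannot be handled directly. We have $8\cdot 5=40>16$, and the inequality $8\cdot5\le 16^2$ you quote is the check for $mn=2$. In fact the stated bound is false for $mn=1$. Take $m=n=1$ with $y_1\in C_1$, so that $\ellit_{1,1}=\ReLU$, and input flag $p=0$, so that phase~1 gives $\tfrac34$. For $x\in[\tfrac34-\tfrac1{128},\tfrac34]$ we have $\lead(x)=8x-3$, hence $e=3-4x\in[0,\tfrac1{32}]$. Then $\tfrac e2+\tfrac3{64}\le\tfrac1{16}$, which lies in the range where $g(u)=16u$. The output is therefore $8e+\tfrac34=\tfrac{99}{4}-32x$, of slope $-32$, exceeding $4^{2mn}=16$.

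So your argument establishes the lemma for $mn\ge 2$. This covers, for instance, every instance whose clauses use three distinct variables, where $n\ge 3$ and even your crude $5^k$ bound is valid. You should state that restriction, or the uniform constant $4^{mn+2}$, rather than promise a direct treatment of $mn=1$.
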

 
\begin{nestedproof}
Each clause subnetwork $\NN_{\varphi,\ell}$ consists of an affine
flag-initialisation step followed by $n$ inner blocks
$\NN_{\varphi,\ell,1},\dots,\NN_{\varphi,\ell,n}$, each of which
performs (i)~a digit extraction via $\lead$, (ii)~a flag update via
$\test$, and (iii)~a shift followed by digit reinsertion. We bound
each step.

\begin{itemize}
\item  The function $\lead$ has
      maximum absolute slope $8$ and range $[1,3]$
      (Lemma~\ref{lem:lead}). Hence
      $b(x) := \ReLU\bigl(\tfrac{3-\lead(x)}{2}\bigr)
             = \tfrac{3-\lead(x)}{2}\in[0,1]$
      (the $\ReLU$ is inactive) is $4$-Lipschitz. The literal gate
      $\ellit_{\ell,j}$ is one of $b\mapsto\ReLU(b)$,
      $b\mapsto\ReLU(1-b)$, or the constant $0$, hence $1$-Lipschitz;
      therefore $e_j(x) := \ellit_{\ell,j}(b(x))$ is $4$-Lipschitz in
      $x$, with values in $[0,1]$.
\item  The function $g$ of Lemma~\ref{lem:test}
      has maximum absolute slope $16$, so
      $\test(e,p) = g\bigl(\tfrac{e}{2}+\tfrac{p}{16}\bigr)$ is
      $8$-Lipschitz in $e$ and $1$-Lipschitz in $p$. Composed with
      $e_j$, the flag update of one inner block is
      $8\cdot 4 = 32$-Lipschitz in the block input $x$ and
      $1$-Lipschitz in the incoming flag.
\item The register update of one inner block is
      $h(x) := \shift(x) + 4^{-n}\lead(x)$, applied to the block
      input $x$, since the register is not modified between the
      extraction and the rotation. With
      $\shift(x)=\ReLU(4x-\lead(x))$, the map $h$ is continuous
      piecewise-affine:
      \[
        h(x)=
        \begin{cases}
          4^{-n} & 0\le x\le\tfrac14,\\
          4x-1+4^{-n} & \tfrac14\le x\le\tfrac12,\\
          3-4x+4^{-n}(8x-3) & \tfrac12\le x\le\tfrac34,\\
          4x-3+3\cdot 4^{-n} & \tfrac34\le x\le 1,
        \end{cases}
      \]
      with slopes $0$, $4$, $-4+8\cdot 4^{-n}$, $4$. Since $n\ge 1$,
      we have $8\cdot 4^{-n}\le 2$, so
      $\lvert -4+8\cdot 4^{-n}\rvert\le 4$: every slope has absolute
      value at most $4$, and $h$ is $4$-Lipschitz. The same bound
      holds for the variant of $\shift$ without the outer $\ReLU$,
      since $\ReLU$ is $1$-Lipschitz.
\item   The map
      $p\mapsto\min\bigl(\tfrac34+\tfrac{p}{4},\,1\bigr)$ is
      $\tfrac14$-Lipschitz in $p$ and does not read $x$.
\end{itemize}

Setting $L_{\mathrm{block}} := 16$ as the per-inner-block Lipschitz
constant on the joint state $(x,p)$, each clause subnetwork
$\NN_{\varphi,\ell}$ is $L_{\mathrm{block}}^n = 16^n = 4^{2n}$-Lipschitz.
Composing the $m$ clause subnetworks yields
\[
  L_{\mathrm{clauses}} \;\le\; (4^{2n})^m \;=\; 4^{2mn}.
\]
\end{nestedproof}

\begin{lemma}[Lipschitz constant of the counter]
\label{lem:counting-lipschitz}
The function $\NNCOUNT{n}\colon[0,1]\times\mathbb{R}\to\mathbb{R}$
constructed in Lemma~\ref{lem:counting} is globally
$L_{\mathrm{count}}$-Lipschitz, with
\[
L_{\mathrm{count}} \;\le\;  4^{\,n^2+1}.
\]
More precisely, $\NNCOUNT{n}$ is $1$-Lipschitz in its second argument
$p$, and $4^{n^2+1}$-Lipschitz in its first argument $x$.
\end{lemma}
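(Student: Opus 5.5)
I will follow the same pattern as the proofs of Lemmas~\ref{lem:trunck-lipschitz} and~\ref{lem:clauses-lipschitz}: write $\NNCOUNT{n}$ as the iteration of its loop on the state $(x_i,c_i)$ and bound the Lipschitz constant of each step. With $(x_0,c_0)=(x,p)$, the loop of Lemma~\ref{lem:counting} computes, for $1\le i\le n$,
\[
  c_i = c_{i-1}+\tfrac12\bigl(\lead(x_{i-1})-1\bigr),
  \qquad
  x_i=\ReLU\bigl(4x_{i-1}-s(x_{i-1})\bigr),
\]
and outputs $c_n$.

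\textbf{Dependence on $p$.} The register $x$ never reads $c$, and $c$ is updated only by adding a quantity that depends on $x$ alone. Hence $c_n=p+G(x)$ for a function $G$ independent of $p$. This gives $1$-Lipschitzness in $p$ immediately.

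\textbf{Slopes of the ingredients.} Recall $\eta=4^{-(n+1)}$. The map $s$ has slopes $0$, $1/\eta=4^{n+1}$, $0$, $2/(\tfrac14+\eta)\le 8$, and $0$ on its five pieces. So $s$ is $4^{n+1}$-Lipschitz, and since $\ReLU$ is $1$-Lipschitz, the update $x\mapsto\ReLU(4x-s(x))$ is $(4+4^{n+1})$-Lipschitz. I will bound this crudely by $4^{n+2}$; a sharper bound would be $2\cdot 4^{n+1}$. The $c$-increment $\tfrac12(\lead(x)-1)$ is $4$-Lipschitz in $x$, since $\lead$ has maximal slope $8$.

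\textbf{Summation.} For two inputs $x_0,x_0'$, induction gives $|x_i-x_i'|\le K^i|x_0-x_0'|$ with $K$ the per-step constant of the $x$-update. Then
\[
  |c_n-c_n'|\le\sum_{i=1}^{n}4K^{i-1}|x_0-x_0'|\le 4\,\tfrac{K^n-1}{K-1}\,|x_0-x_0'|.
\]
It remains to compare this with $4^{n^2+1}$.

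\textbf{Main obstacle: the arithmetic at this last step.} With $K=4^{n+1}+4$, the sum is about $4\cdot K^{n-1}\cdot(1+o(1))\approx 4^{(n+1)(n-1)+1}=4^{n^2}$ up to lower-order factors, which fits under $4^{n^2+1}$. To make this rigorous without an asymptotic argument, I will:
\begin{itemize}
\item bound $K\le 2\cdot4^{n+1}$;
\item bound the geometric sum by $2K^{n-1}$, using $K\ge 2$;
\item so obtain $|c_n-c_n'|\le 8\,(2\cdot 4^{n+1})^{n-1}$;
\item check this is at most $4^{n^2+1}$: the exponent is $\tfrac32+\tfrac{n-1}{2}+(n+1)(n-1)=n^2+\tfrac n2$, so the inequality holds for $n\le 2$ but not for larger $n$.
\end{itemize}
The crude bound therefore fails for large $n$, and I will need a sharper step. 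I will exploit that the final iteration's $x$-update is irrelevant, since the last increment reads $x_{n-1}$. I will also try one of two tighter estimates for the Lipschitz constant of $x\mapsto 4x-s(x)$: taking its exact maximal absolute slope $4^{n+1}-4$, or noting that the steep transition of width $\eta$ has image of length $4\eta\le 4^{-n}$ and using that to tighten the composition.

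If neither tightening closes the gap, I will fall back on a weaker stated bound, such as $4^{n^2+n}$. This changes only the exponent in~\eqref{eq:eps-def} polynomially and leaves Theorem~\ref{thm:lifted} intact.
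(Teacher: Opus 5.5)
Your decomposition is the paper's own: you iterate the loop on $(x_i,c_i)$, get $1$-Lipschitzness in $p$ from $c_n=p+G(x)$, use a $4$-Lipschitz increment and $|x_i-x_i'|\le K^i|x_0-x_0'|$, and finish with a geometric sum. However, as written the proposal does not establish the stated bound. The chain you actually commit to yields $8(2\cdot 4^{n+1})^{n-1}=4^{n^2+n/2}$, which you correctly observe exceeds $4^{n^2+1}$ for $n\ge 3$. The repair is then left conditional, and the fallback ($4^{n^2+n}$) is a different statement.

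You have also misdiagnosed where the loss comes from. It is not the use of $4^{n+1}+4$ instead of the exact slope $4^{n+1}-4$. Since $4^{n+1}+4=4^{n+1}(1+4^{-n})$, that choice only multiplies $K^{n-1}$ by $(1+4^{-n})^{n-1}\le\frac{17}{16}$. Bounding the geometric sum by $2K^{n-1}$ does not hurt either: the target $4^{n^2+1}$ is a factor $4$ above $4\cdot(4^{n+1})^{n-1}=4^{n^2}$, so a factor $2$ there is affordable. Your remark that the last $x$-update is irrelevant buys nothing new, because $\sum_{i=1}^n 4K^{i-1}$ already stops at $K^{n-1}$. The single fatal step is the rounding $K\le 2\cdot 4^{n+1}$, which injects a factor $2^{n-1}$.

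The first tightening you list is the right one, and it is exactly the paper's argument. On $[0,1]$ the map $u(x)=4x-s(x)$ has slopes $4$, $4-4^{n+1}$, $4$, $4-2/(\tfrac14+\eta)\in(-4,4)$, and $4$. Hence $x\mapsto\ReLU(u(x))$ is $L_x$-Lipschitz with $L_x:=4^{n+1}$, and then
\[
  4\sum_{i=0}^{n-1}L_x^{\,i}\;\le\;\frac{4L_x^{\,n}}{L_x-1}\;\le\;\frac{16}{15}\cdot 4\,L_x^{\,n-1}\;=\;\frac{16}{15}\,4^{\,n^2}\;\le\;4^{\,n^2+1},
\]
using $L_x\ge 16$ and $(n+1)(n-1)+1=n^2$. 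Even your triangle-inequality constant $K=4^{n+1}+4$, combined with your own bound $2K^{n-1}$ on the geometric sum, gives $8K^{n-1}\le\frac{17}{8}\,4^{n^2}\le 4^{n^2+1}$. So the proof closes as soon as you drop the factor-$2$ rounding of $K$.
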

 
\begin{nestedproof}
Recall from Lemma~\ref{lem:counting} that $\NNCOUNT{n}$ iterates, from
$(x_0, c_0) := (x, p)$,
\[
  c_i \;=\; c_{i-1} + \tfrac12\bigl(\lead(x_{i-1}) - 1\bigr),
  \qquad
  x_i \;=\; \ReLU\bigl(4x_{i-1} - s(x_{i-1})\bigr),
  \qquad 1 \le i \le n,
\]
and outputs $c_n$. For $n = 0$ the output equals $p$ and all claims are
trivial, so assume $n \ge 1$.

The $1$-Lipschitz bound in $p$ is immediate: the second argument is
copied to $c$ at step~1 and only modified additively by quantities
depending on $x$ alone.

For the bound in $x$, we control the Lipschitz constant of each
operation in the loop.
\begin{itemize}
\item The function $\lead$ has maximum absolute
      slope $8$ (Lemma~\ref{lem:lead}), so the increment map
      $x \mapsto \tfrac12(\lead(x) - 1)$ is $4$-Lipschitz.

\item The shift function $s\colon[0,1]\to\mathbb{R}$ defined in the
      proof of Lemma~\ref{lem:counting} has maximal slope
      $\max(1/\eta,\,2/(\tfrac14+\eta))$ with $\eta=4^{-(n+1)}$,
      hence $4^{n+1}$-Lipschitz.
\item The function  $x \longmapsto \ReLU\bigl(4x-s(x)\bigr)$ 
is $4^{n+1}$-Lipschitz and, in particular,$4+4^{n+1}$-Lipschitz.
Indeed, define $u(x):=4x-s(x)$, $x\in[0,1]$. We compute the slope for each  piece,
\begin{itemize}
\item on $[0,\,\tfrac14-\eta]$, slope $4$;
\item on $[\tfrac14-\eta,\,\tfrac14]$, slope $4-1/\eta = 4- 4^{n+1}$;
\item on $[\tfrac14,\,\tfrac12-\eta]$, slope $4$;
\item on $[\tfrac12-\eta,\,\tfrac34]$, slope $4- 8/(\tfrac14+\eta)$
\end{itemize}
Hence, the Lipschitz constant of $u$ is $4^{n+1}-4$.
\end{itemize}
Since $\ReLU$ is
      $1$-Lipschitz, the update $x \mapsto \ReLU(u(x))$ is
      $L_x$-Lipschitz with $L_x := 4^{n+1}$. 

Fix $p$ and two inputs $x_0, x_0' \in [0,1]$; write $x_i, x_i'$ and
$c_i, c_i'$ for the corresponding register values. By induction on $i$,
using the $1$-Lipschitzness of $\ReLU$ and the bound on $u$,
\[
  |x_i - x_i'| \;\le\; L_x^{\,i}\,|x_0 - x_0'|.
\]
The counter is updated by
$c_i = c_{i-1} + \tfrac12(\lead(x_{i-1})-1)$, and this update is
$4$-Lipschitz in $x_{i-1}$, so
\[
  |c_i - c_i'| \;\le\; |c_{i-1}-c_{i-1}'| + 4\,|x_{i-1}-x_{i-1}'|
                \;\le\; |c_{i-1}-c_{i-1}'| + 4\,L_x^{i-1}\,|x_0-x_0'|.
\]
Summing over $i = 1, \dots, n$ with $c_0 = c_0' = p$, and using
$L_x \ge 1$ to bound each term of the geometric sum by its last term,
\[
  |c_n - c_n'|
  \;\le\; 4\,|x_0 - x_0'| \sum_{i=0}^{n-1} L_x^{\,i}
  \;\le\;4\frac{L^{n}_{x }-1}{L_{x }-1} \,|x_0 - x_0'|
\]
Since $L_x = 4^{n+1} \ge 16$, we have $L_x - 1 \ge \tfrac{15}{16}\,L_x$,
hence
\[
  4\,\frac{L_x^{\,n} - 1}{L_x - 1}
  \;\le\; \tfrac{16}{15}\cdot 4\,L_x^{\,n-1}
  \;=\; \tfrac{16}{15}\cdot 4^{\,(n+1)(n-1)+1}
  \;=\; \tfrac{16}{15}\cdot 4^{\,n^2}
  \;\le\; 4^{\,n^2+1}
\]
which proves the claimed bound. 
\end{nestedproof}

\end{proof}

\noindent\textbf{Consequences.}
The correctness theorem yields the expected hardness results for
verification on $[0,1]$ at width $4$.

\begin{corollaryrep}[\textsc{Verification} and \textsc{Universal-Gap} on {$[0,1]$}]
\label{cor:np-conp-lifted}
The following problems are complete for their respective classes, for
width-$4$ $\ReLU$ networks on $[0,1]$:
\begin{itemize}
\item ($\NP$) given $\NN\in\Netnotation{\ReLU}{1}{1}{4}$, decide
      whether $\exists x\in[0,1]\colon\NN(x) = 0$;
\item ($\coNP$) given the same input, decide whether
      $\NN(x)\ge\tfrac12$ for every $x\in[0,1]$.
\end{itemize}
\end{corollaryrep}

\begin{proof}
Membership is inherited from the unrestricted verification problem
of~\cite{wurm1}. For hardness, map
$\varphi\mapsto\NN_\varphi^{[0,1]}$: Theorem~\ref{thm:lifted} gives
the $\NP$ reduction from \textsc{3-Sat}, and by the same theorem
$\NN_\varphi^{[0,1]}(x)\ge\tfrac12$ for every $x\in[0,1]$ iff
$\varphi$ is unsatisfiable, which reduces \textsc{3-Unsat}.
\end{proof}

\begin{corollaryrep}[Uniform distance on {$[0,1]$}]
\label{cor:uniform-lifted}
Given a width-$4$ $\ReLU$ network $\NN\colon[0,1]\to\mathbb{R}$ and a
rational $\tau>0$, deciding whether
$\displaystyle\sup_{x\in[0,1]}|\NN(x)|\le\tau$ is $\coNP$-complete,
even for $\tau=\tfrac18$.
\end{corollaryrep}

\begin{proof}
Consider the auxiliary network
$\widetilde\NN_\varphi(x) := \ReLU\bigl(\tfrac14 - \NN_\varphi^{[0,1]}(x)\bigr)$,
which remains of width $4$. By Theorem~\ref{thm:lifted},
$\NN_\varphi^{[0,1]}(x)\in\{0\}\cup[\tfrac12,\infty)$, so
$\widetilde\NN_\varphi(x)\in\{0,\tfrac14\}$: it equals $\tfrac14$
exactly on the zeros of $\NN_\varphi^{[0,1]}$, and equals $0$
elsewhere. Hence $\sup_{x\in[0,1]}\widetilde\NN_\varphi(x)$ equals
$\tfrac14$ if $\NN_\varphi^{[0,1]}$ has a zero on $[0,1]$ and $0$
otherwise; it is therefore $\le\tfrac18$ iff (by
Theorem~\ref{thm:lifted}) $\varphi$ is unsatisfiable.
\end{proof}
\begin{corollaryrep}[Gap inapproximability on {$[0,1]$}]
\label{cor:inapprox-lifted}
For every constant $\eta>0$, the following promise problem is
$\NP$-hard: given a width-$4$ $\ReLU$ network
$\NN\colon[0,1]\to\mathbb{R}_{\ge 0}$ and integers $n,m$, distinguish
\(
  \min_{x\in[0,1]}\NN(x) = 0\)
  from 
\(
  \min_{x\in[0,1]}\NN(x)\ge\bigl(\tfrac{1}{8}-\eta\bigr)\,m.
\)
\end{corollaryrep}

\begin{proof}
H{\aa}stad's theorem~\cite{hastad2001some} gives the analogous gap on
\textsc{Max-3Sat}: satisfiable formulas versus those with at most
$(\tfrac78+\eta)m$ clauses simultaneously satisfiable. We reduce this
gap version of \textsc{Max-3Sat} to the claimed promise problem.

For a 3-CNF formula $\varphi$ with $n$ variables and $m$ clauses, define
the auxiliary network
\[
  \widehat{\NN}_\varphi(x)
  \;:=\;
  \NNCOUNT{m}(q_{m+1},\,0) \;+\; m\cdot p_1,
\]
where $q_{m+1}$ and $p_1$ are the clause output and fractal error
produced by the same fractal, filter, and clause stages as in
$\NN_\varphi^{[0,1]}$ (see the construction preceding
Theorem~\ref{thm:lifted}). The only difference with
$\NN_\varphi^{[0,1]}$ is that the fractal contribution $p_1$ is
scaled by the constant $m$ in the final output combination. The
scaling by $m$ is a rational affine operation that does not increase
the width, so $\widehat{\NN}_\varphi\in\Netnotation{\ReLU}{1}{1}{4}$ and
is computable from $\varphi$ in polynomial time. The entire analysis of
Theorem~\ref{thm:lifted} transfers verbatim with $p_1$ replaced by
$m\cdot p_1$ in the output; in particular, $\widehat{\NN}_\varphi(x)=0$
iff $\varphi$ is satisfiable and $x=x(\alpha)$ for some satisfying
$\alpha$.

\smallskip
\emph{YES case.} If $\varphi$ is satisfiable, let $\alpha$ be a
satisfying assignment and $x=x(\alpha)$. By the same argument as in
Theorem~\ref{thm:lifted}(1), $p_1=0$ and $q_{m+1}=0$, so
$\widehat{\NN}_\varphi(x)=0$. Hence $\min_{x\in[0,1]}\widehat{\NN}_\varphi(x)=0$.

\smallskip
\emph{NO case.} Suppose that no assignment satisfies more than
$(\tfrac78+\eta)m$ clauses of $\varphi$, so every assignment leaves at
least $(\tfrac18-\eta)m$ clauses unsatisfied. We show
$\widehat{\NN}_\varphi(x)\ge(\tfrac18-\eta')m$ for every $x\in[0,1]$,
for some $\eta'$ slightly larger than $\eta$. We distinguish the same
three cases as in the proof of Theorem~\ref{thm:lifted}(2).

\emph{Case A} ($x\in\onethreefouradicn{n}$). Then $p_1=0$ and
$\tilde x=x$, so $\tilde x$ corresponds to some assignment $\alpha$.
By Claim~\ref{claim:assignment} and Lemma~\ref{lem:counting},
$\NNCOUNT{m}(q_{m+1},0)=|\{\ell:c_\ell=3\}|$, the number of clauses
of $\varphi$ unsatisfied by $\alpha$. By the \textsc{Max-3Sat} gap,
this is $\ge(\tfrac18-\eta)m$, so
$\widehat{\NN}_\varphi(x)\ge(\tfrac18-\eta)m$.

\emph{Case B} ($p_1\ge\tfrac12$). Then
$\widehat{\NN}_\varphi(x)\ge m\cdot p_1\ge m/2\ge(\tfrac18-\eta)m$
for every $\eta\le\tfrac38$ (and hence in particular for every
$\eta\in(0,\tfrac18)$, which is the regime of interest for the gap).

\emph{Case C} ($x\notin\onethreefouradicn{n}$ and $p_1<\tfrac12$).
\emph{Sub-case C.1} (orbit avoids $J_\varepsilon$): by the same
argument as in Theorem~\ref{thm:lifted}(2), $\tilde x\in\onethreefouradicn{n}$
corresponds exactly to some assignment $\alpha$, and $\alpha$ leaves
at least $(\tfrac18-\eta)m$ clauses unsatisfied by the gap hypothesis.
Hence $\NNCOUNT{m}(q_{m+1},0)\ge(\tfrac18-\eta)m$, and
$\widehat{\NN}_\varphi(x)\ge(\tfrac18-\eta)m$.

\emph{Sub-case C.2} (orbit enters $J_\varepsilon$): there exists
$z\in\onethreefouradicn{t+1}$ within $\varepsilon$ of $x$, and the
assignment $\alpha_z$ associated to $\tilde z:=\NNT{n}(z,0)\in
\onethreefouradicn{n}$ leaves at least $(\tfrac18-\eta)m$ clauses
unsatisfied, so $\NNCOUNT{m}(q_{m+1}^*,0)\ge(\tfrac18-\eta)m$. The
Lipschitz computation of Theorem~\ref{thm:lifted}(2) gives
\[
  |\NNCOUNT{m}(q_{m+1},0) - \NNCOUNT{m}(q_{m+1}^*,0)|
  \;\le\; \tfrac{1}{16},
\]
hence $\NNCOUNT{m}(q_{m+1},0)\ge(\tfrac18-\eta)m - \tfrac{1}{16}$.
For any $\eta'>\eta$ with $\eta'-\eta\ge\tfrac{1}{16m}$, this is
$\ge(\tfrac18-\eta')m$, and since $m\ge 1$ and $\eta'$ can be chosen
arbitrarily close to $\eta$, we obtain
$\widehat{\NN}_\varphi(x)\ge(\tfrac18-\eta')m$.

\smallskip
Taking the minimum across Cases A, B, C yields
$\min_{x\in[0,1]}\widehat{\NN}_\varphi(x)\ge(\tfrac18-\eta')m$, as
required.
\end{proof}

\begin{remark}[Width-$3$ lifting: an open problem]
Whether the lifting above can be carried out at width $3$ remains
open. The difficulty is genuine: at width $3$, there is no room to
carry the fractal error $p_1$ through the clause stages, and using
$p_1$ as the initial flag of the first clause stage (which preserves
width $3$) breaks correctness on inputs whose
$F_\varepsilon$-orbit enters the repair region $J_\varepsilon$ from
the side yielding digit-$2$ prefixes. 
\end{remark}

\subsection{The other end of the scale: width $1$}
\label{sec:width-one}

\begin{propositionrep}\label{prop:width-one}
A width-$1$ $\ReLU$ network $\NN\colon\mathbb{R}\to\mathbb{R}$ of
depth $L$ computes a \emph{monotone} continuous piecewise-affine
function with at most $2L+2$ pieces, computable in polynomial time.
Consequently, deciding $\exists x\in[0,1]\colon\NN(x)=0$ and
$\exists x\in\onethreefouradicn{n}\colon\NN(x)=0$ are in $\P$ for
width-$1$ networks.
\end{propositionrep}
\begin{proof}
Each hidden layer applies $h\mapsto\ReLU(ah+b)$, a monotone map of a
scalar; compositions of monotone maps are monotone, and the outer
affine maps preserve this. Applying $\ReLU$ to a monotone
piecewise-affine function adds at most two breakpoints (the endpoints
of the interval where the affine image is non-positive), all rational
of polynomial bit-size. The zero set of the resulting monotone
function on $[0,1]$ is a rational interval $[u,v]$;
$[u,v]\cap[0,1]\neq\emptyset$ is immediate, and
$[u,v]\cap\onethreefouradicn{n}\neq\emptyset$ is a greedy
digit-by-digit descent in base $4$.
\end{proof}

The tractability persists in the dynamical reading, where the network
is iterated a number of times given in unary:

\begin{mytheoremrep}
     \label{thdimone}
\textsc{Verification} for iterated one-dimensional $\ReLU$ width-$1$
networks with unary time bound is decidable in polynomial time.
\end{mytheoremrep}

\begin{proof}
We consider $\ReLU$ neural networks
\(\NN \colon [0,1] \to [0,1]\)
with rational weights and biases. Such a network induces a continuous function
\(f = \NN\) from \([0,1]\) to \([0,1]\).
We study the following bounded-time reachability problem.

\medskip
\paragraph*{1D Reachability (unary time).}
\begin{itemize}
\item \emph{Input:} a $\ReLU$ network \(\NN:[0,1]\to[0,1]\); two intervals
\(R_0,R\subseteq[0,1]\); a time bound \(T\in\mathbb{N}\) encoded in unary.
\item \emph{Question:} does there exist \(x\in R_0\) and \(t\le T\) such that
\(\NN^{[t]}(x)\in R\)?
\end{itemize}


It is well known that any $\ReLU$ network computes a continuous piecewise-affine function.
In dimension one, this structure is particularly simple.

\begin{lemma}
Let \(\NN:[0,1]\to[0,1]\) be a $\ReLU$ network of width 1.
Then there exists a finite partition
\(
0 = p_0 < p_1 < \cdots < p_m = 1
\)
such that on each interval \([p_i,p_{i+1}]\), the function \(f=\NN\) is affine:
\(
f(x) = a_i x + b_i\)  \text{for } \(x\in[p_i,p_{i+1}].
\)
Moreover, this representation can be computed in time polynomial in the size of \(\NN\).
\end{lemma}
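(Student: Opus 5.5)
We prove the statement by induction on the depth $L$ of the width-$1$ network, tracking an explicit breakpoint list. Write $\NN = A_{\mathrm{out}}\circ R_L\circ\cdots\circ R_1\circ A_{\mathrm{in}}$, where each $R_\ell(h)=\ReLU(w_\ell h+b_\ell)$ acts on a scalar. The induction hypothesis is that the pre-activation-free partial map $g_\ell := R_\ell\circ\cdots\circ R_1\circ A_{\mathrm{in}}$, restricted to $[0,1]$, is given by a sorted list of rational breakpoints $0=p_0<\cdots<p_k=1$ with $k\le 2\ell+2$ (in fact $k\le \ell+1$), together with the affine coefficients $(a_i,b_i)$ on each piece. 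The base case $g_0=A_{\mathrm{in}}$ is a single affine piece.

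For the inductive step, I first apply the affine map $h\mapsto w_\ell h+b_\ell$ piece by piece; this rescales the coefficients and keeps the breakpoints. As in Proposition~\ref{prop:width-one}, the function $g_{\ell-1}$ is monotone, since it is a composition of monotone scalar maps. Hence $u:=w_\ell g_{\ell-1}+b_\ell$ is monotone and continuous, and its zero-crossing set $\{u\le 0\}$ is an interval. Applying $\ReLU$ therefore introduces at most one new breakpoint $p^\ast$, namely the boundary of $\{u\le 0\}$ inside $[0,1]$. I locate $p^\ast$ by scanning the pieces for a sign change of $u$ at consecutive breakpoints and solving $a_ix+b_i=0$ on that piece. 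I then split that piece and replace the coefficients on the non-positive side by $(0,0)$. Adjacent pieces that become identical may be merged, but merging is not needed for the bound. Finally, $A_{\mathrm{out}}$ is applied coefficientwise. Continuity is preserved throughout, because every operation is continuous.

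The substantive point, and the main obstacle, is bit-size control: a naive composition multiplies the coefficients $a_i$, and we must check that the breakpoints do not blow up. The coefficient on any piece of $g_\ell$ is either $0$ or a product $w_\ell\cdots w_1 w_{\mathrm{in}}$ together with an accumulated constant that is a polynomial-size rational expression in the weights. Its numerator and denominator therefore have bit-length at most the sum of the bit-lengths of the weights involved, which is linear in the encoding size. Each breakpoint $p^\ast=-b_i/a_i$ is then a quotient of two such rationals, so it is again of polynomial size. Since there are at most $L+2$ pieces and each step costs polynomial time, the whole representation is computed in polynomial time. I would record the sizes explicitly by induction on $\ell$, bounding the bit-length of $(a_i,b_i)$ by $\sum_{j\le\ell}(\text{size}(w_j)+\text{size}(b_j))$ plus the size of $A_{\mathrm{in}}$, up to a constant factor. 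This is the claim I would verify most carefully, since the entire polynomial-time statement rests on it.
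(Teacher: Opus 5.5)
Your proposal is correct and follows the same route as the paper: a layer-by-layer induction in which each $\ReLU$ contributes at most one new rational breakpoint, namely the root of an affine expression, computed exactly. You go further than the paper's brief assertion that breakpoint count and size stay polynomial, and this fills in what the paper's sketch leaves implicit. Your use of monotonicity makes $\{u\le 0\}$ an interval, so each layer adds one breakpoint rather than one per piece, and your bit-size bookkeeping bounds the size of each breakpoint.
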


\begin{nestedproof}
The proof proceeds layer by layer.
Affine layers preserve piecewise-affineness and only merge existing breakpoints.
Applying the $\ReLU$ activation to an affine function introduces at most one additional
breakpoint, namely the root of the affine expression.
Since the network is one-dimensional, all roots are rational and can be computed exactly,
and the total number of breakpoints remains polynomial in the network size.
\end{nestedproof}

A crucial property of one-dimensional continuous functions is that the image of an interval
is again an interval.

\begin{lemma}
Let \(f:[0,1]\to[0,1]\) be continuous and affine on an interval \([a,b]\).
Then
\(
f([a,b]) = [\min(f(a),f(b)),\max(f(a),f(b))].
\)
\end{lemma}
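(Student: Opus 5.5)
We argue directly from continuity and monotonicity on intervals. Write $I=[a,b]$ and let $f(x)=\alpha x+\beta$ on $I$ with rational $\alpha,\beta$. The proof splits according to the sign of $\alpha$.

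\textbf{Case $\alpha\ge 0$.} The map is non-decreasing on $I$, so $f(a)\le f(x)\le f(b)$ for every $x\in I$. This gives $f(I)\subseteq[f(a),f(b)]$. For the reverse inclusion, take any $y\in[f(a),f(b)]$. If $\alpha>0$, the point $x=(y-\beta)/\alpha$ lies in $I$ and satisfies $f(x)=y$. If $\alpha=0$, then $f(a)=f(b)$, the interval is the single point $\{f(a)\}$, and equality is immediate. In both sub-cases the interval coincides with $[\min(f(a),f(b)),\max(f(a),f(b))]$.

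\textbf{Case $\alpha<0$.} Apply the same argument with the roles of $a$ and $b$ exchanged. Alternatively, reduce to the first case via the reflection $x\mapsto a+b-x$, which maps $I$ onto itself.

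A more uniform route avoids the case split. Any point of $I$ is $x=(1-\lambda)a+\lambda b$ with $\lambda\in[0,1]$. Affinity then gives $f(x)=(1-\lambda)f(a)+\lambda f(b)$. As $\lambda$ ranges over $[0,1]$, these values sweep exactly the closed segment between $f(a)$ and $f(b)$. This yields both inclusions at once, and the result is exactly the stated interval.

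There is no real obstacle here. The only point to watch is the degenerate slope-zero case, which the convex-combination formulation handles automatically. We also note for later use that the endpoints $f(a),f(b)$ are rationals of polynomial bit-size whenever $a,b,\alpha,\beta$ are. This is what allows the subsequent reachability algorithm to propagate images of intervals piece by piece. For the next step, one intersects the interval with the breakpoint partition of the previous lemma and takes the union of the images, which is an interval by continuity.
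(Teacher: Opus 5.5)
Your proof is correct; the paper gives no written proof of this lemma, only the preceding remark that continuous images of intervals are intervals, with affinity implicitly placing the extremes at $f(a)$ and $f(b)$. Your convex-combination identity $f((1-\lambda)a+\lambda b)=(1-\lambda)f(a)+\lambda f(b)$ gives both inclusions directly from affinity, so it needs neither the continuity hypothesis nor the intermediate value theorem, and the rationality of $\alpha,\beta$ that you assume (not part of the hypotheses) is never used.
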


As a consequence, for any interval \(I\subseteq[0,1]\), the image \(f(I)\) can be computed
by splitting \(I\) along the breakpoints of \(f\), applying the above formula on each
subinterval, and taking the union of the resulting intervals.

We now describe a polynomial-time algorithm for bounded-time reachability.

We maintain, for each time step \(t\), a set \(S_t\subseteq[0,1]\) representing all values
reachable at time \(t\).
Each \(S_t\) is stored as a finite union of disjoint closed intervals.

\medskip
\paragraph*{Algorithm.}
\begin{enumerate}
\item Compute the piecewise-affine representation of \(f=\NN\) with breakpoints
\(p_0,\dots,p_m\).
\item Initialize \(S_0 := R_0\).
\item For \(t=0,1,\dots,T\):
  \begin{enumerate}
  \item If \(S_t\cap R\neq\emptyset\), return \textsc{yes}.
  \item Compute \(S_{t+1} := f(S_t)\) as follows:
    \begin{itemize}
    \item For each interval \(I=[u,v]\) in \(S_t\), split \(I\) at the breakpoints
    \(p_i\), obtaining subintervals contained in a single affine piece.
    \item On each such subinterval \([\alpha,\beta]\), compute the image interval
    \([\min(f(\alpha),f(\beta)),\max(f(\alpha),f(\beta))]\).
    \item Take the union of all image intervals and merge overlaps to obtain \(S_{t+1}\).
    \end{itemize}
  \end{enumerate}
\item If no intersection with \(R\) is found up to time \(T\), return \textsc{no}.
\end{enumerate}


\begin{lemma}
For every \(t\le T\), the set \(S_t\) computed by the algorithm is exactly the set of points
reachable at time \(t\) from \(R_0\) under iteration of \(f\).
\end{lemma}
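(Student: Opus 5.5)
We argue by induction on $t$, where the inductive claim is that $S_t=f^{[t]}(R_0)$. Here $f^{[t]}(R_0)$ denotes the set of points reachable at time $t$, and $S_t$ is read as a finite union of closed intervals. The base case $t=0$ holds by initialisation, since $S_0=R_0$.

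For the inductive step, assume $S_t=f^{[t]}(R_0)$. Then $f^{[t+1]}(R_0)=f(S_t)$, so it suffices to show that the algorithm computes $f(S_t)$ exactly. Write $S_t=\bigcup_k I_k$ with finitely many closed intervals $I_k$. Images commute with unions, so $f(S_t)=\bigcup_k f(I_k)$. Next, split each $I_k=[u,v]$ at the breakpoints $p_i$ lying in $(u,v)$. This gives $I_k=\bigcup_r[\alpha_r,\beta_r]$, where every piece lies inside a single $[p_i,p_{i+1}]$, so $f$ is affine on it. Again $f(I_k)=\bigcup_r f([\alpha_r,\beta_r])$. The preceding lemma on affine restrictions then gives
\[
f([\alpha_r,\beta_r])=[\min(f(\alpha_r),f(\beta_r)),\max(f(\alpha_r),f(\beta_r))],
\]
which is precisely the interval the algorithm computes. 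Merging overlapping intervals does not change the union, so the computed $S_{t+1}$ equals $f(S_t)=f^{[t+1]}(R_0)$, as a finite union of closed intervals.

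Exactness is what keeps the induction going: we never over-approximate, because an affine map sends an interval onto the full interval between its endpoint values, with no gaps. The only points needing care are degenerate subintervals, where $\alpha_r=\beta_r$ or $u=v$; the same formula covers them, since it returns the single point $f(\alpha_r)$. Endpoints are evaluated consistently because $f$ is continuous at the breakpoints, so the affine formulas on the two adjacent pieces agree there.

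I expect no real obstacle in correctness itself; the delicate issue is making sure the representation stays finite at each step. Each $S_t$ has at most $(m+1)^{t}|R_0|$ intervals a priori. For this lemma, finiteness is all that is needed. Moreover, every interval after merging is contained in $[0,1]$, and every endpoint is an image of an endpoint of $S_t$ or of a breakpoint. Hence all endpoints lie in the finite set $f^{[\le t]}(\text{endpoints of }R_0\cup\{p_i\})$. This bound also supports the subsequent polynomial-time analysis.
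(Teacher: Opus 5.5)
Your proof is correct and takes essentially the same approach as the paper: induction on $t$ from $S_0=R_0$, with the step $S_{t+1}=f(S_t)$ shown to be exact because $f$ is affine on each subinterval obtained by splitting at the breakpoints, so each image is exactly the interval between the endpoint values. You spell out the union, splitting and merging steps that the paper compresses into one sentence; as an aside, since $R_0$ is a single interval and $f$ is continuous, each $S_t$ is in fact a single interval, which makes your finiteness discussion immediate.
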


\begin{nestedproof}
By induction on \(t\).
For \(t=0\), \(S_0=R_0\) by definition.
Assume the claim holds for \(t\).
Then
\(
S_{t+1} = f(S_t) = \{f(x)\mid x\in S_t\},
\)
which is exactly the set of states reachable at time \(t+1\).
The construction of \(S_{t+1}\) is exact because \(f\) is affine on each piece and continuous
on \([0,1]\), so all extrema on subintervals are attained at endpoints.
\end{nestedproof}

Thus the algorithm answers \textsc{yes} if and only if there exists
\(x\in R_0\) and \(t\le T\) such that \(f^{[t]}(x)\in R\).


Let \(m\) be the number of affine pieces of \(f\).
Each set \(S_t\) is a union of at most \(O(m)\) intervals.
Each iteration step performs \(O(m)\) interval splits and merges, which can be carried out
in \(O(m\log m)\) time.

Since \(T\) is given in unary, the total running time is polynomial in the size of the input.
We obtain the following result.

This result critically relies on the one-width/one-dimensional setting.
In higher dimensions, images of convex sets under $\ReLU$ networks need not be convex or even
connected, and the above interval-propagation approach no longer applies.

\end{proof}

\section{Summary and open problems}
\label{sec:conclusion}

We have pinned down the complexity of verification for deep narrow
$\ReLU$ networks. On the discrete encoding set, width $3$ suffices
to encode $\textsc{3-Sat}$, yielding $\NP$-completeness
(Theorem~\ref{thm:np-complete-encoded-zero}), and the same
construction delivers $\coNP$-completeness
(Theorem~\ref{thm:np-conp}), $\#\P$-completeness
(Theorem~\ref{thm:sharp-p}), $\PP$-completeness
(Theorem~\ref{thm:pp}), and \textsc{Max-3Sat}-gap inapproximability
(Theorem~\ref{thm:inapprox}) as corollaries. At width~$4$, the
fractal preprocessing gadget $F_\varepsilon^{[n]}$ lifts the
single-quantifier results to arbitrary inputs in~$[0,1]$
(Theorem~\ref{thm:lifted} and Corollaries~\ref{cor:np-conp-lifted},
\ref{cor:uniform-lifted}, \ref{cor:inapprox-lifted}). Combined with
the polynomial-time decidability at width~$1$, both in the
single-evaluation model (Proposition~\ref{prop:width-one}) and in the
iterated one (Theorem~\ref{thdimone}), this gives a tight complexity
classification on the encoding set, where only width~$2$ remains
open; for arbitrary inputs in~$[0,1]$, widths~$2$ and~$3$ remain open.

The heart of our construction is the map
$F_\varepsilon\colon[0,1]\to[0,1]$: a continuous, width-$2$
$\ReLU$-realisable self-map whose $n$-fold iterate
$F_\varepsilon^{[n]}$ vanishes precisely on an
$\varepsilon$-neighbourhood of the $2^n$-point discrete encoding
set $\onethreefouradicn{n}$. 
The construction is genuinely
\emph{fractal}: the encoding set is the $n$-th level of a
Cantor-like subset of $[0,1]$, and $F_\varepsilon$ acts as the
branching rule of that fractal, refining the level at each
iteration. This lets a continuous $\ReLU$ gadget encode a discrete
exponential-size search space while consuming only two units of
width. We expect the technique to transfer to other
width-constrained models of continuous computation: any setting
where one must reduce a discrete combinatorial problem to a
continuous verification problem without inflating the state
dimension.

Several concrete open problems
remain: (i)~the complexity of width-$2$ verification (conjectured
$\NP$-hard); (ii)~whether the $\#\P$- and $\PP$-completeness
results extend from the encoding set to $[0,1]$, which would
require an injective filter; and (iii)~whether the lifting to
$[0,1]$ can be carried out at width~$3$.

\section*{Acknowledgements}
O. Bournez and J. Cohen received  support from PGMO program, and O. Bournez from ANR Project OTC ANR-24-CE48-0335. A. Wurm received partial support from the BTU Graduate Research School. 


  \end{document}

%% file: fig-lead.tex
\begin{figure}
\begin{tabular}{@{}c c@{}}
\adjustbox{valign=c}{%
\begin{tikzpicture}  

\node (x_in) at (0,0.5) {$x$};

\draw[->] (x_in) to node[pos=0.8, yshift=4pt] {\tiny 1} (1,0.5);
\draw[->] (x_in) to node[pos=0.8, yshift=4pt] {\tiny $8$} (1,-0.5);
\myrect{1}{0.8}{1.5}{-0.8}{$\ReLU$}
\myrect{1}{-0.2}{1.5}{-0.8}{$\ReLU_{-6}$}

\draw[->] (2.5,0.5) to node[pos=0.8, yshift=4pt] {\tiny 1} (3.5,0.5);
\draw[->] (2.5,0.5) to node[pos=0.8, yshift=4pt] {\tiny 8} (3.5,-0.5);
\draw[->] (2.5,-0.6) to node[pos=0.8, yshift=-1pt,below] {\tiny -1} (3.5,-0.6);


\myrect{3.5}{0.8}{1.5}{-0.8}{$\ReLU$}
\myrect{3.5}{-0.2}{1.5}{-0.8}{$\ReLU_{-4}$}

 \draw[->] (5,0.5) to node[pos=0.8, yshift=4pt] {\tiny 1} (6,0.5);
\draw[->] (5,0.5) to node[pos=0.8, yshift=4pt] {\tiny 1} (6,-0.5);
\draw[->] (5,-0.6) to node[pos=0.8, yshift=-1pt,below] {\tiny 1} (6,-0.6);

 
 \myrect{6}{0.8}{1.5}{-0.8}{$\ReLU$}
\myrect{6}{-0.2}{1.5}{-0.8}{$\ReLU_{1}$}

 
  \draw[->] (7.5,0.5) to node[pos=0.8, yshift=4pt] {\tiny $x$} (8,0.5);
  \draw[->] (7.5,-0.6) to node[pos=1.2, yshift=4pt] {\tiny $\lead(x)$} (8,-0.6);
 
 
 \end{tikzpicture} %
 }
&
\adjustbox{valign=c}{%
\begin{tikzpicture}[scale=0.7]
  \draw[->] (-0.02,0) -- (4.08,0) node[right] {$x$};
  \draw[->] (0,-0.02) -- (0,3.25) node[above] {$\lead(x)$};

  \foreach \x/\label in {0/0,1/$\frac14$,2/$\frac12$,3/$\frac34$,4/1}
    \draw (\x,0.05) -- (\x,-0.05) node[below=2pt] {\label};

  \foreach \y/\label in {1/1,3/3}
    \draw (0.02,\y) -- (-0.02,\y) node[left=2pt] {\label};

  \draw[dashed,gray] (0,1) -- (2,1);
  \draw[dashed,gray] (0,3) -- (4,3);

  \draw[thick]   (0,1) -- (2,1) ;
   \draw[thick]      (2,1)  --  (3,3) -- (4,3);
   \draw[thick]     (3,3) -- (4,3);


  \fill (0.25,1) circle (0.3pt);
  \fill (0.75,1) circle (0.3pt);
  \fill (0.75,3) circle (0.3pt);
  \fill (1,3) circle (0.3pt);
\end{tikzpicture}
}
\end{tabular}
\caption{Network $\NNlead$.  Boxes represent $\ReLU$ neurons; subscripts indicate biases, edge
labels indicate weights, and missing labels correspond to bias \(0\) and weight~\(1\).}
\label{fig:lead}
\end{figure}

%% file: fig-shift.tex
\begin{figure}
\centering
\begin{tabular}{@{}c c@{}}
\adjustbox{valign=c}{%
\begin{tikzpicture}     

\node (x_in) at (0,0.5) {$x$};
\draw[->] (x_in) to node[midway, above] {\tiny 1} (1,0.5);
\draw[->] (x_in) to node[midway, above] {\tiny $4$} (1,-0.5);
\myrect{1}{0.8}{1.5}{-0.8}{$\ReLU$}
\myrect{1}{-0.2}{1.5}{-0.8}{$\ReLU_{-1}$}

\draw[->] (2.5,0.5) to node[midway, above] {\tiny 1} (3.5,0.5);
\draw[->] (2.5,0.5) to node[midway, above] {\tiny -4} (3.5,-0.5);
\draw[->] (2.5,-0.6) to node[midway, below] {\tiny -1} (3.5,-0.6);


\myrect{3.5}{0.8}{1.5}{-0.8}{$\ReLU$}
\myrect{3.5}{-0.2}{1.5}{-0.8}{$\ReLU_{3}$}

 \draw[->] (5,0.5) to node[midway, above] {\tiny 1} (6,0.5);
\draw[->] (5,0.5) to node[midway, above] {\tiny -8} (6,-0.5);
\draw[->] (5,-0.6) to node[midway, below] {\tiny -1} (6,-0.6);

\myrect{6}{0.8}{1.5}{-0.8}{$\ReLU$}
\myrect{6}{-0.2}{1.5}{-0.8}{$\ReLU_{6}$}

\draw[->] (7.5,0.5) to node[midway, above] {\tiny 1} (8.5,0.5);
\draw[->] (7.5,0.5) to node[midway, above] {\tiny 4} (8.5,-0.5);
\draw[->] (7.5,-0.6) to node[midway, below] {\tiny 1} (8.5,-0.6);

 \myrect{8.5}{0.8}{1.5}{-0.8}{$\ReLU$}
\myrect{8.5}{-0.2}{1.5}{-0.8}{$\ReLU_{-3}$}

 
\draw[->] (10,0.5) to node[pos=1, right] {\tiny $x$} (10.5,0.5);
\draw[->] (10,-0.6) to node[pos=1, right] {\tiny $\shift(x)$} (10.5,-0.6);
 
 
 \end{tikzpicture}  %
}
&
\adjustbox{valign=c}{%
\begin{tikzpicture}[scale=2]
  \draw[->] (-0.02,0) -- (1.08,0) node[right] {$x$};
  \draw[->] (0,-0.02) -- (0,1.08) node[above] {$f(x)$};

  \foreach \x/\label in {0/0,0.25/$\frac14$,0.5/$\frac12$,0.75/$\frac34$,1/1}
    \draw (\x,0.02) -- (\x,-0.02) node[below=2pt] {\label};

  \foreach \y/\label in {0/0,1/1}
    \draw (0.02,\y) -- (-0.02,\y) node[left=2pt] {\label};

  \draw[thick] (0,0) -- (0.25,0);
    \draw[thick] (0.25,0) -- (0.5,1);

  \draw[thick] (0.5,1) -- (0.75,0);
  \draw[thick] (0.75,0) -- (1,1);

  \end{tikzpicture}}
\end{tabular}

\caption{Network $\NNshift$ computing $x\mapsto(x,\shift(x))$. The top
row of neurons forwards $x$ unchanged through every layer, while the
bottom row accumulates the $\ReLU$ hinges that assemble the
piecewise-affine map $\shift$. Boxes are $\ReLU$ neurons; a subscript
denotes the neuron's bias, an edge label its weight, and an absent
label means bias~$0$ and weight~$1$. Right: the graph of the resulting
map on $[0,1]$.}
\label{fig:shift}
\end{figure}

%% file: fig-test.tex
\begin{figure}  
\centering
\begin{tabular}{@{}c c@{}}
\adjustbox{valign=c}{%
\begin{tikzpicture}[scale=1.5]
  \draw[->] (-0.03,0) -- (1.08,0) node[right] {$p$};
  \draw[->] (0,-0.03) -- (0,1.08) node[above] {$g(f(0,p))$};

  \foreach \x/\lab in {0/0,0.25/$\frac14$,0.5/$\frac12$,0.75/$\frac34$,1/1}
    \draw (\x,0.015) -- (\x,-0.015) node[below=2pt] {\lab};

  \foreach \y/\lab in {0/0,0.25/$\frac14$,0.5/$\frac12$,0.75/$\frac34$,1/1}
    \draw (0.015,\y) -- (-0.015,\y) node[left=2pt] {\lab};

  \draw[thick] (0,0) -- (1,1);

  \fill (0,0) circle (0.4pt);
  \fill (1,1) circle (0.4pt);
\end{tikzpicture}  %
}
&
\adjustbox{valign=c}{%
\begin{tikzpicture}[xscale=3,yscale=5]
  \draw[->] (-0.03,0) -- (1.08,0) node[right] {$p$};
  \draw[->] (0,-0.03) -- (0,0.28) node[above] {$g(f(1,p))$};

  \foreach \x/\lab in {0/0,0.25/$\frac14$,0.5/$\frac12$,0.75/$\frac34$,1/1}
    \draw (\x,0.01) -- (\x,-0.01) node[below=2pt] {\lab};

  \foreach \y/\lab in {0/0,0.125/$\frac18$,0.25/$\frac14$}
    \draw (0.01,\y) -- (-0.01,\y) node[left=2pt] {\lab};

  \draw[dashed,gray] (0.25,0) -- (0.25,0.25);
  \draw[dashed,gray] (0.75,0) -- (0.75,0.25);
  \draw[dashed,gray] (0,0.25) -- (0.25,0.25);
  \draw[dashed,gray] (0,0.125) -- (0.5,0.125);

  \draw[thick] (0,0) -- (0.25,0.25);
  \draw[thick] (0.25,0.25) -- (0.75,0);
  \draw[thick] (0.75,0) -- (1,0.25);

  \fill (0,0) circle (0.4pt);
  \fill (0.25,0.25) circle (0.4pt);
  \fill (0.75,0) circle (0.4pt);
  \fill (1,0.25) circle (0.4pt);
\end{tikzpicture}}
\end{tabular}

\caption{Network $\NNtest$. Illustration of the graphs of the function $g$ and of the composition $g\circ f$ for Lemma~\ref{lem:test}.}
\label{fig:test}
\end{figure}

%% file: fig-NP-3.tex
\begin{figure}
\begin{tikzpicture}     
 \node (x_in) at (-0.1,1.5) {$x_{j-1}$};   
\node (p_in) at (-0.1,-0.6) {$p_{\ell}$};
 
\draw[->] (x_in) to node[pos=0.8, yshift=4pt] {\tiny 4} (0.5,1.5);
\draw[->](p_in) --  (0.6,-0.6);

%
\myrectrotate{0.6}{1.9}{1}{-1.8}{$\NNlead$}
        \myrect{0.6}{-0.3}{1}{-0.7}{$\ReLU$}

  \draw[->] (1.6,1.5)  --   (2.5,1.5);   
  \draw[->](1.6,0.5) to node[pos=0.2, yshift=5pt] {\tiny $a_{j}$}  (2.5,0.5);   

\draw[->](1.6,0.5) to node[pos=0.8, yshift=5pt] {\tiny $-\frac12$}  (2.5,0.5);   

\draw[->](1.6,-0.6) --  (2.5,-0.6);   


\myrect{2.5}{1.9}{1.2}{-0.7}{$\ReLU$}
\myrect{2.5}{0.8}{1.3}{-0.7}{$\ReLU_{\frac32}$}
\myrect{2.5}{-0.3}{1.3}{-0.7}{$\ReLU$}

 \draw[->](3.8,1.5) --  (4.5,1.5);   
 \draw[->](3.8,0.5) to node[pos=0.2, yshift=5pt] {\tiny $b_{j}$}  (4.5,0.5);   
\draw[->](3.8,-0.6) --  (4.5,-0.6);

 \myrect{4.5}{1.9}{1}{-0.7}{$\ReLU$}
\myrect{4.5}{0.8}{1}{-0.7}{$\ellit_{\ell,j}$}
\myrect{4.5}{-0.3}{1}{-0.7}{$\ReLU$}

 \draw[->](5.5,1.5) --  (6.3,1.5);   
 \draw[->](5.5,0.5) to node[pos=0.2, yshift=5pt] {\tiny $e_{j}$}   (6.3,0.5);   
 \draw[->](5.5,-0.6) to node[pos=0.2, yshift=5pt] {\tiny $p_{\ell}$}   (6.3,-0.6);

 \myrect{6.3}{1.9}{1}{-0.7}{$\ReLU$}
\myrectrotate{6.3}{0.8}{1}{-1.8}{$\NNtest$}

 \draw[->](7.3,1.5) --  (8,1.5);   
 \draw[->](7.3,0.5) to node[pos=0.2, yshift=5pt] {\tiny $b_{j}$}  (8,0.5);   
\draw[->](7.3,-0.6) --  (8,-0.6);

   \myrectrotate{8}{1.9}{1}{-1.8}{$\NNlead$}
   \myrect{8}{-0.3}{1}{-0.7}{$\ReLU$}
 
  \draw[->] (9,1.5)  to node[pos=0.2, yshift=5pt] {\tiny $x_{j}$}    (9.9,1.5);   
  \draw[->](9,0.5) to node[pos=0.2, yshift=5pt] {\tiny $a_{j}$}  (9.9,0.5);    
  \draw[->](9,-0.6) --  (9.9,-0.6);     

  
   \myrectrotate{9.9}{1.9}{1}{-1.8}{$\NNshift'$}
   \myrect{9.9}{-0.3}{1}{-0.7}{$\ReLU$}

  \draw[->] (10.9,1.5)  to node[pos=0.13, yshift=5pt] {\tiny $x_{j}$}    (11.9,1.5);   
  \draw[->](10.9,0.5) to node[pos=0.13, yshift=5pt] {\tiny $a_{j}$}  (11.9,1.2);    
   \draw[->](10.9,0.5) to node[pos=0.8, yshift=5pt] {\tiny $4^{-n}$}  (11.9,1.2);    

  \draw[->](10.9,-0.6) --  (11.9,-0.6);     

  \myrect{11.9}{1.9}{1}{-0.7}{$\ReLU$}
\myrect{11.9}{0.8}{1}{-0.7}{$\ReLU$}
\myrect{11.9}{-0.3}{1}{-0.7}{$\ReLU$}

  \draw[->] (12.9,1.5)  to node[pos=0.8, yshift=5pt] {\tiny $x_{j}$}    (13.5,1.5);   
   \draw[->](12.9,-0.6)  to node[pos=0.8, yshift=5pt] {\tiny $p_{\ell}$}   (13.5,-0.6);

 
 \end{tikzpicture}  
\caption{Network $\NN_{\varphi,\ell,j}$ updating $(x_{j-1},p_{\ell})$ to
$(x_j,p_\ell)$ using ReLU neurons and the three components  $\NNlead$, $\NNtest$, and $\NNshift'$. 
For layout reasons, we identify $\NNshift'(x,\lead(x))$ with $\shift'(x)$,
corresponding to the equation $4x-\lead(x)=\shift'(x)$.
The values $(x_{j-1},p_{\ell})$ represent the variables $(x,p_{\ell})$ at the beginning
of the $j$-th iteration of the loop in $\NN_{\varphi,\ell}$.
Outputs are shown at the sources of the arrows, and weights at their endpoints.
The width of this sub-network is $3$.}

\label{fig:NP:network-4}
\end{figure}

%% file: fig-feps.tex
\begin{figure}[t]
\centering
\begin{minipage}{0.49\textwidth}
\centering
\begin{tikzpicture}[xscale=6.4,yscale=2.6,
    gz/.style={circle,fill=black,inner sep=0pt,minimum size=2.4pt},
    sz/.style={circle,draw=red,inner sep=0pt,minimum size=2.4pt,line width=0.35pt}]
  \fill[gray!25] ({1/4-1/20},0) rectangle (1/4,1);
  \fill[gray!25] ({3/4-1/20},0) rectangle (3/4,1);
  \draw[->] (-0.01,0) -- (1.04,0) node[right] {$x$};
  \draw[->] (0,-0.02) -- (0,1.12) node[above] {$F_\varepsilon^{[2]}(x)$};
  \draw[dashed,gray] (0,1) -- (1,1);
  \draw[thick] (0.00000,1.0000) -- (0.20000,1.0000) -- (0.21250,0.0000) -- (0.21500,1.0000) -- (0.22500,1.0000) -- (0.23750,0.0000) -- (0.24000,1.0000) -- (0.30000,1.0000) -- (0.31250,0.0000) -- (0.37500,1.0000) -- (0.42500,1.0000) -- (0.43750,0.0000) -- (0.50000,1.0000) -- (0.70000,1.0000) -- (0.71250,0.0000) -- (0.71500,1.0000) -- (0.72500,1.0000) -- (0.73750,0.0000) -- (0.74000,1.0000) -- (0.80000,1.0000) -- (0.81250,0.0000) -- (0.87500,1.0000) -- (0.92500,1.0000) -- (0.93750,0.0000) -- (1.00000,1.0000);
  \node[gz] at (0.31250,0) {};
  \node[gz] at (0.43750,0) {};
  \node[gz] at (0.81250,0) {};
  \node[gz] at (0.93750,0) {};
  \node[sz] at (0.21250,0) {};
  \node[sz] at (0.23750,0) {};
  \node[sz] at (0.71250,0) {};
  \node[sz] at (0.73750,0) {};
  \draw (0.005,1) -- (-0.005,1) node[left] {$1$};
  \foreach \x/\lab in {0/0,0.25/{\frac14},0.5/{\frac12},0.75/{\frac34},1/1}{
    \draw (\x,0.02) -- (\x,-0.02) node[below,font=\scriptsize] {$\lab$};
  }
\end{tikzpicture}
\end{minipage}\hfill
\begin{minipage}{0.49\textwidth}
\centering
\begin{tikzpicture}[xscale=6.4,yscale=2.6,
    gz/.style={circle,fill=black,inner sep=0pt,minimum size=2.4pt},
    sz/.style={circle,draw=red,inner sep=0pt,minimum size=2.4pt,line width=0.35pt}]
  \fill[gray!25] ({1/4-1/20},0) rectangle (1/4,1);
  \fill[gray!25] ({3/4-1/20},0) rectangle (3/4,1);
  \draw[->] (-0.01,0) -- (1.04,0) node[right] {$x$};
  \draw[->] (0,-0.02) -- (0,1.12) node[above] {$F_\varepsilon^{[3]}(x)$};
  \draw[dashed,gray] (0,1) -- (1,1);
  \draw[semithick] (0.00000,1.0000) -- (0.20000,1.0000) -- (0.20312,0.0000) -- (0.20375,1.0000) -- (0.20625,1.0000) -- (0.20938,0.0000) -- (0.21000,1.0000) -- (0.21300,1.0000) -- (0.21313,0.0000) -- (0.21375,1.0000) -- (0.21425,1.0000) -- (0.21438,0.0000) -- (0.21500,1.0000) -- (0.22500,1.0000) -- (0.22812,0.0000) -- (0.22875,1.0000) -- (0.23125,1.0000) -- (0.23438,0.0000) -- (0.23500,1.0000) -- (0.23800,1.0000) -- (0.23813,0.0000) -- (0.23875,1.0000) -- (0.23925,1.0000) -- (0.23938,0.0000) -- (0.24000,1.0000) -- (0.30000,1.0000) -- (0.30312,0.0000) -- (0.30375,1.0000) -- (0.30625,1.0000) -- (0.30938,0.0000) -- (0.31000,1.0000) -- (0.32500,1.0000) -- (0.32812,0.0000) -- (0.34375,1.0000) -- (0.35625,1.0000) -- (0.35938,0.0000) -- (0.37500,1.0000) -- (0.42500,1.0000) -- (0.42812,0.0000) -- (0.42875,1.0000) -- (0.43125,1.0000) -- (0.43438,0.0000) -- (0.43500,1.0000) -- (0.45000,1.0000) -- (0.45312,0.0000) -- (0.46875,1.0000) -- (0.48125,1.0000) -- (0.48438,0.0000) -- (0.50000,1.0000) -- (0.70000,1.0000) -- (0.70312,0.0000) -- (0.70375,1.0000) -- (0.70625,1.0000) -- (0.70937,0.0000) -- (0.71000,1.0000) -- (0.71300,1.0000) -- (0.71313,0.0000) -- (0.71375,1.0000) -- (0.71425,1.0000) -- (0.71437,0.0000) -- (0.71500,1.0000) -- (0.72500,1.0000) -- (0.72813,0.0000) -- (0.72875,1.0000) -- (0.73125,1.0000) -- (0.73438,0.0000) -- (0.73500,1.0000) -- (0.73800,1.0000) -- (0.73813,0.0000) -- (0.73875,1.0000) -- (0.73925,1.0000) -- (0.73938,0.0000) -- (0.74000,1.0000) -- (0.80000,1.0000) -- (0.80312,0.0000) -- (0.80375,1.0000) -- (0.80625,1.0000) -- (0.80937,0.0000) -- (0.81000,1.0000) -- (0.82500,1.0000) -- (0.82812,0.0000) -- (0.84375,1.0000) -- (0.85625,1.0000) -- (0.85938,0.0000) -- (0.87500,1.0000) -- (0.92500,1.0000) -- (0.92812,0.0000) -- (0.92875,1.0000) -- (0.93125,1.0000) -- (0.93437,0.0000) -- (0.93500,1.0000) -- (0.95000,1.0000) -- (0.95312,0.0000) -- (0.96875,1.0000) -- (0.98125,1.0000) -- (0.98438,0.0000) -- (1.00000,1.0000);
  \node[gz] at (0.32812,0) {};
  \node[gz] at (0.35938,0) {};
  \node[gz] at (0.45312,0) {};
  \node[gz] at (0.48438,0) {};
  \node[gz] at (0.82812,0) {};
  \node[gz] at (0.85938,0) {};
  \node[gz] at (0.95312,0) {};
  \node[gz] at (0.98438,0) {};
  \node[sz] at (0.20312,0) {};
  \node[sz] at (0.20938,0) {};
  \node[sz] at (0.21313,0) {};
  \node[sz] at (0.21438,0) {};
  \node[sz] at (0.22812,0) {};
  \node[sz] at (0.23438,0) {};
  \node[sz] at (0.23813,0) {};
  \node[sz] at (0.23938,0) {};
  \node[sz] at (0.30312,0) {};
  \node[sz] at (0.30938,0) {};
  \node[sz] at (0.42812,0) {};
  \node[sz] at (0.43438,0) {};
  \node[sz] at (0.70312,0) {};
  \node[sz] at (0.70937,0) {};
  \node[sz] at (0.71313,0) {};
  \node[sz] at (0.71437,0) {};
  \node[sz] at (0.72813,0) {};
  \node[sz] at (0.73438,0) {};
  \node[sz] at (0.73813,0) {};
  \node[sz] at (0.73938,0) {};
  \node[sz] at (0.80312,0) {};
  \node[sz] at (0.80937,0) {};
  \node[sz] at (0.92812,0) {};
  \node[sz] at (0.93437,0) {};
  \draw (0.005,1) -- (-0.005,1) node[left] {$1$};
  \foreach \x/\lab in {0/0,0.25/{\frac14},0.5/{\frac12},0.75/{\frac34},1/1}{
    \draw (\x,0.02) -- (\x,-0.02) node[below,font=\scriptsize] {$\lab$};
  }
\end{tikzpicture}
\end{minipage}
\caption{The iterates $F_\varepsilon^{[2]}$ (left) and $F_\varepsilon^{[3]}$ (right)
of the continuous map $F_\varepsilon$ of Equation~\eqref{eq:Feps}, drawn for
$\varepsilon=\tfrac1{20}$; the repair intervals $J_\varepsilon$ are shaded.
Black dots are the genuine zeros, namely the elements of $\onethreefouradicn{n}$
(Lemma~\ref{lem:coincidence}). Red circles are the spurious zeros forced by
continuity; each lies close to a shorter encoding (Lemma~\ref{lem:localisation}).}
\label{fig:Feps-iterates}
\end{figure}